%
%
%
%
%
%
%
\documentclass[aps,11pt,twoside, nofootinbib, superscriptaddress]{revtex4}
\usepackage{amsthm}
\usepackage{amsmath,latexsym,amssymb,verbatim,enumerate,graphicx}
\usepackage{color}

\newtheorem{definition}{Definition} 
\newtheorem{prop}[definition]{Proposition}
\newtheorem{lemma}[definition]{Lemma}

\newtheorem{thm}[definition]{Theorem}

\newtheorem{remark}[definition]{Remark}

\newtheorem*{rep@theorem}{\rep@title}
\newcommand{\newreptheorem}[2]{%
\newenvironment{rep#1}[1]{%
 \def\rep@title{#2 \ref{##1} (restatement)}%
 \begin{rep@theorem}}%
 {\end{rep@theorem}}}
\makeatother

\newreptheorem{thm}{Theorem}
\newreptheorem{lem}{Lemma}

\def\ba#1\ea{\begin{align}#1\end{align}}
\def\ban#1\ean{\begin{align*}#1\end{align*}}

\newcommand{\ot}{\otimes}
\newcommand{\be}{\begin{equation}}
\newcommand{\ee}{\end{equation}}

\def\tgr{\textcolor{black}}
\def\good{\text{Good}}
\def\Extr{\text{Extr}}

\def\dcintro{d_{\text{comp}}}
\def\dcours{\tilde d_{\text{comp}}}
\def\dnow{d}
\def\dwithw{d'}
\def\dcintroII{d_{\text{comp}}^{\,\text{II}}}
\def\dcoursII{\tilde d_{\text{comp}}^{\,\text{II}}}
\def\dnowII{d_{\,\text{II}}}
\def\dwithwII{d'_{\,\text{II}}}
\def\ein{\text{\tt in}_E}
\def\eout{\text{\tt out}_E}
\def\sv{\text{\tt sv}}
\def\abin{\text{\tt in}}
\def\about{\text{\tt out}}
\def\sve{\text{\tt sv}_E}
\def\svin{\text{\tt sv}_1}
\def\svhash{\text{\tt sv}_2}

\def\benum{\begin{enumerate}}
\def\eenum{\end{enumerate}}

\def\squareforqed{\hbox{\rlap{$\sqcap$}$\sqcup$}}
\def\qed{\ifmmode\squareforqed\else{\unskip\nobreak\hfil
\penalty50\hskip1em\null\nobreak\hfil\squareforqed
\parfillskip=0pt\finalhyphendemerits=0\endgraf}\fi}
\def\endenv{\ifmmode\;\else{\unskip\nobreak\hfil
\penalty50\hskip1em\null\nobreak\hfil\;
\parfillskip=0pt\finalhyphendemerits=0\endgraf}\fi}




\newcommand{\<}{\langle}
\renewcommand{\>}{\rangle}

\def\be{\begin{equation}}
\def\ee{\end{equation}}
\def\ben{\begin{eqnarray}}
\def\een{\end{eqnarray}}
\def\ot{\otimes}

\def\bei{\begin{itemize}}
\def\eei{\end{itemize}}

\def\E{{\mathbb{E}}}


\def\ep{\epsilon}

\mathchardef\ordinarycolon\mathcode`\:
\mathcode`\:=\string"8000
\def\vcentcolon{\mathrel{\mathop\ordinarycolon}}
\begingroup \catcode`\:=\active
  \lowercase{\endgroup
  \let :\vcentcolon
  }

\newcommand{\nc}{\newcommand}
 \nc{\proj}[1]{|#1\rangle\!\langle #1 |} 
\nc{\avg}[1]{\langle#1\rangle}

\nc{\conv}{\operatorname{conv}}
\nc{\smfrac}[2]{\mbox{$\frac{#1}{#2}$}} \nc{\Tr}{\operatorname{Tr}}
\nc{\ox}{\otimes} \nc{\dg}{\dagger} \nc{\dn}{\downarrow}
\nc{\lmax}{\lambda_{\text{max}}}
\nc{\lmin}{\lambda_{\text{min}}}

\nc{\csupp}{{\operatorname{csupp}}}
\nc{\qsupp}{{\operatorname{qsupp}}} \nc{\var}{\operatorname{var}}
\nc{\rar}{\rightarrow} \nc{\lrar}{\longrightarrow}
\nc{\poly}{\operatorname{poly}}
\nc{\polylog}{\operatorname{polylog}} \nc{\Lip}{\operatorname{Lip}}
\nc{\Om}{\Omega}
\nc{\wt}[1]{\widetilde{#1}}

\def\>{\rangle}
\def\<{\langle}

\def\EE{\mathbb{E}}

\def\bu{{\textbf{u}}}
\def\bx{\textbf{x}}
\def\pxu{P_{{x}_{< l},{u}_{< l}}({x}_l | {u}_l)}
\def\pxutot {P(x_1,\ldots,x_n|u_1,\ldots,u_n)}
\def\xuseq{x_1,u_1,\ldots,x_n,u_n}

\nc{\glneq}{{\raisebox{0.6ex}{$>$}  \hspace*{-1.8ex} \raisebox{-0.6ex}{$<$}}}
\nc{\gleq}{{\raisebox{0.6ex}{$\geq$}\hspace*{-1.8ex} \raisebox{-0.6ex}{$\leq$}}}


\nc{\vholder}[1]{\rule{0pt}{#1}}
\nc{\wh}[1]{\widehat{#1}}
\nc{\h}[1]{\widehat{#1}}

\nc{\ob}[1]{#1}

\def\beq{\begin {equation}}
\def\eeq{\end {equation}}

\def\be{\begin{equation}}
\def\ee{\end{equation}}

\def\setadefnc{S^{\xi}}

\def\Ddef{D_{\epdef}}
\def\epazdef{\ep_{Az}}

\def\epdef{\ep_{deF}}
\def\ACC{\text{ACC}}
\def\ACCdef{\text{ACC}}

\def\ACCd{\text{ACC}}

\def\Lazuma{L}
\def\Lazumadef{\overline{L}}

\def\Xgood{X^{(z,u,e)}_{\text{good}}}
\def\Xbad{X^{(z,u,e)}_{\text{bad}}}

\def\Azumad{A^{\delta_e}}

\def\Azumadnc{A^{\delta_{Az}}}

\def\Adef{A^{\delta_{Az}}}
\def\Adefo{A^{\delta_{Az},1}}
\def\Adefj{A^{\delta_{Az},j}}

\nc{\eq}[1]{(\ref{eq:#1})} 
\nc{\eqs}[2]{\eq{#1} and \eq{#2}}

\nc{\eqn}[1]{Eq.~(\ref{eqn:#1})}
\nc{\eqns}[2]{Eqs.~(\ref{eqn:#1}) and (\ref{eqn:#2})}

\nc{\region}{\cS\cW}


\newenvironment{protocol*}[1]
  {
    \begin{center}
      \hrulefill\\
      \textbf{#1}
  }
  {
    \vspace{-1\baselineskip}
    \hrulefill
    \end{center}
  }

\begin{document}


\title{Robust Device-Independent Randomness Amplification with Few Devices}

\author{Fernando G.S.L. Brand\~{a}o}
\affiliation{Quantum Architectures and Computations Group, Microsoft Research, Redmond, WA (USA)}
\affiliation{Department of Computer Science, University College London}

\author{Ravishankar Ramanathan}
\affiliation{National Quantum Information Center of Gda\'{n}sk, 81-824 Sopot, Poland}
\affiliation{Institute of Theoretical Physics and Astrophysics, University of Gda\'{n}sk, 80-952 Gda\'{n}sk, Poland}

\author{Andrzej Grudka}
\affiliation{Faculty of Physics, Adam Mickiewicz University, 61-614 Pozna\'{n}, Poland}

\author{Karol Horodecki}
\affiliation{National Quantum Information Center of Gda\'{n}sk, 81-824 Sopot, Poland}
\affiliation{Institute of Informatics, University of Gda\'{n}sk, 80-952 Gda\'{n}sk, Poland}

\author{Micha{\l} Horodecki}
\affiliation{National Quantum Information Center of Gda\'{n}sk, 81-824 Sopot, Poland}
\affiliation{Institute of Theoretical Physics and Astrophysics, University of Gda\'{n}sk, 80-952 Gda\'{n}sk, Poland}

\author{Pawe{\l} Horodecki}
\affiliation{National Quantum Information Center of Gda\'{n}sk, 81-824 Sopot, Poland}
\affiliation{Faculty of Applied Physics and Mathematics, Technical University of Gda\'{n}sk, 80-233 Gda\'{n}sk, Poland}

\author{Tomasz Szarek}
\affiliation{Institute of Mathematics, University of Gda\'{n}sk, 80-952 Gda\'{n}sk, Poland}

\author{Hanna Wojew\'{o}dka}
\affiliation{National Quantum Information Center of Gda\'{n}sk, 81-824 Sopot, Poland}
\affiliation{Institute of Theoretical Physics and Astrophysics, University of Gda\'{n}sk, 80-952 Gda\'{n}sk, Poland}
\affiliation{Institute of Mathematics, University of Gda\'{n}sk, 80-952 Gda\'{n}sk, Poland}




\date{\today}

\begin{abstract}

Randomness amplification is the task of transforming a source of somewhat random bits into a source of fully random bits. Although it is impossible to amplify randomness from a single source by classical means, the situation is different considering non-local correlations allowed by quantum mechanics. Here we give the first device-independent protocol for randomness amplification using a constant number of devices. The protocol involves four devices, can amplify any non-deterministic source into a fully random source, tolerates a constant rate of error, and has its correctness based solely on the assumption of no-signaling between the devices. In contrast all previous protocols either required an unbounded number of devices, or could only amplify sources sufficiently close to fully random. 

\end{abstract}

\maketitle

\section{Introduction}

Randomness is a useful resource in a variety of applications, ranging from numerical simulations to cryptography. However, almost always one needs ideal or close to ideal randomness, while typical random sources are far from ideal. Randomness amplification, defined as a process that maps a source of randomness into another closer to an ideal source, is a potential solution to this problem. But is randomness amplification possible at all? 

We model a source of randomness as an $\varepsilon$-Santha-Vazirani source ($\varepsilon$-SV source), given by a probability distribution $p(\alpha_1, \ldots, \alpha_n)$ over bit strings such that, for every $i \leq n$,
\begin{equation} \label{SVdef}
\frac{1}{2} - \varepsilon \leq p(\alpha_i | \alpha_1, \ldots, \alpha_{i-1}) \leq \frac{1}{2} + \varepsilon.
\end{equation}
The previous equation is the only assumption on the source, which otherwise can be arbitrary. Given an $\varepsilon$-SV source, is it possible to process the bits so that the quality of the randomness is improved? In particular, can we obtain a fully random bit by processing an arbitrary large number of bits from any $\varepsilon$-SV source? Santha and Vazirani answered the question in the negative \cite{SV}: It is impossible to improve the quality of the randomness of SV sources \footnote{Similar results are also known for other different sources of randomness \cite{CG}.}. However, their argument only applies to classical protocols and it leaves open the possibility that the situation might be different once quantum resources are considered. 

Indeed, it is trivial to generate randomness in quantum mechanics, e.g. by preparing a pure state and measuring it in a complementary basis. However, this assumes that one has full control of the state preparation and the measurement. A more demanding task is to generate randomness in a device independent manner, treating the quantum system as a black-box and obtaining randomness only as a consequence of the correlations in measurement outcomes and fundamental physical principles such as the no-signaling principle. The existence of non-local quantum correlations violating Bell inequalities already suggests that device-independent randomness amplification could be achieved. However, the violation of Bell inequalities requires that the measurements be performed in a random manner, independent of the system upon which they are performed \cite{Barrett, Hall}. 

In a seminal work \cite{Renner}, Colbeck and Renner showed that despite this difficulty non-local quantum correlations can be used to amplify the randomness of Santha-Vazirani sources that are sufficiently close to fully random. This result was later improved by Gallego \textit{et al.} \cite{Acin}, who gave a~protocol using quantum non-local correlations to amplify general SV sources, as long as they are not deterministic. Neither of the two protocols tolerate noise however. In \cite{our}, we gave a different protocol that is robust to noise and can transform any non-deterministic SV source into a fully random one. A major drawback of the protocols in \cite{Acin, our} is that they require an infinite number of space-like separated devices. Therefore a natural open question is the existence of a randomness amplification protocol using a fixed number of devices, on the one hand, and allowing for the amplification of arbitrary non-deterministic SV sources, on the other hand. See also \cite{Grudka, Pawlowski, Alastair, Scarani, Plesch, Augusiak, CY13, CSW14} for more recent work in the area. 

A related, but distinct, task to randomness amplification is device-independent randomness expansion, where one assumes that a seed of perfect random bits is available and the goal is to expand it into a larger random bit string. Quantum non-locality has found application also in this latter task \cite{Colbeck10, Pironio, Colbeck, Acin2, Fehr11, Pironio13, VV12a, Coudron}, as well as in device-independent quantum key-distribution (see e.g. \cite{BHK, Masanes09, Hanggi, VV12b}).

 \subsection{Results}

In this paper we overcome the shortcomings of previous protocols and obtain:

\begin{thm} [informal]  \label{mainthm}
For every $\varepsilon < \frac{1}{2}$, there is a protocol using an $\varepsilon$-SV source and four no-signaling devices with the following properties:

\begin{itemize}
\item Using the devices $\poly(n, \log(1/\delta))$ times, the protocol either aborts or produces $n$ bits which are $\delta$-close to uniform and independent of any side information (e.g. held by an adversary).

\item Local measurements on many copies of a four-partite entangled state, with $\poly(1 - 2\varepsilon)$ error rate, give rise to devices that do not abort the protocol with probability larger than $1 - 2^{- \Omega(n)}$.

\end{itemize}
\end{thm}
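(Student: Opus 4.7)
The plan is to prove the theorem by exhibiting an explicit four-device protocol based on repeated tests of a multipartite Bell inequality, with outputs post-processed by a randomness extractor seeded from the SV source. I would structure the argument in three blocks: (i) specify the protocol, (ii) prove soundness against an arbitrary no-signaling adversary $E$, (iii) prove completeness for noisy quantum realizations.

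First I would fix a four-party Bell inequality $I$ — a natural candidate is a Mermin-type/GHZ inequality — with three properties: (a) the maximal no-signaling value can be attained only by boxes whose outputs on the ``randomness-generation'' inputs have marginal probabilities bounded away from $0$ and $1$; (b) it admits a quantum strategy on a four-partite entangled state that is robust, in the sense that a $\poly(1-2\varepsilon)$ fraction of noisy rounds still yields expected score above threshold $\tau$; (c) a quantitative guarantee of the form: if a single-round no-signaling box violates $I$ by at least $\tau - \eta$, then for the generation inputs the output distribution conditioned on $E$'s side information has min-entropy $\geq h(\eta)$ for some explicit $h(\eta)>0$. The protocol then draws $N=\poly(n,\log(1/\delta))$ blocks of SV bits, uses them to pick inputs $u_1,\dots,u_N$ for the four devices in a space-like-separated fashion, records the outputs $x_1,\dots,x_N$, computes the empirical Bell value, aborts if it falls below $\tau - \eta$, and otherwise applies a quantum-proof (in fact NS-proof) randomness extractor $\Extr$ to $x_{1:N}$ using fresh SV bits as seed to produce $n$ output bits.

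For soundness I would argue along the following lines. Conditioned on not aborting, an Azuma/Hoeffding-type concentration shows that (with overwhelming probability) a constant fraction of rounds individually achieve violation close to $\tau$, even though successive rounds can be correlated via the devices' internal memory and via the SV source. Using the single-round min-entropy certificate (c) and a chain rule for smooth min-entropy that holds against no-signaling side information, one deduces a linear lower bound $H_{\min}^{\varepsilon'}(X_{1:N} | U_{1:N} E) \geq \Omega(N)$ conditional on passing. The SV constraint on the inputs only rescales constants because each input block is sampled from a $\poly(\varepsilon)$-bounded conditional distribution. Feeding this string into an SV-seeded, NS-secure extractor (of the kind developed in \cite{our}) then produces $n$ bits that are $\delta$-close to uniform and independent of $E$. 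Completeness is the easier direction: the honest four-partite quantum strategy with per-round error $\poly(1-2\varepsilon)$ has expected per-round score $\geq \tau - \eta/2$, and Azuma's inequality gives failure probability $2^{-\Omega(N)}$; the SV-induced non-uniformity of the inputs only shifts the expectation by a controllable $O(\varepsilon)$ amount, absorbed into the threshold.

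The main obstacle I would expect is step (c) together with the chain rule: existing de Finetti / permutation-invariance arguments for no-signaling boxes, which underlie the earlier unbounded-device protocols of \cite{Acin,our}, are exactly what forced the device count to be infinite, so a new single-round / few-round certification lemma is needed that trades ``many devices'' for ``many rounds of four devices''. I would attempt this by exploiting the fact that for Mermin-style inequalities the extremal no-signaling boxes form a discrete set, using a robustness/stability result to say that near-maximal violation forces the conditional box (fixing $E$'s variable) to lie in a convex neighborhood of randomness-producing boxes, and then combining with a martingale argument to aggregate per-round min-entropy across the $N$ rounds despite memory in the devices. Getting the robustness dependence to be $\poly(1-2\varepsilon)$ rather than, say, exponentially small, is the quantitative crux on which the whole theorem rests.
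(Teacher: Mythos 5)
Your high-level decomposition (protocol, soundness via per-round Bell certification and Azuma-type concentration, completeness for noisy quantum devices) matches the paper's, and your choice of a GHZ/Mermin-style four-party inequality is essentially what the paper uses (the inequality from G\"uhne et al.). The per-round certification you want in step (c) is established in the paper by a small linear program (Lemma~\ref{lin-prog}) bounding all conditional output probabilities away from $1$ given a small Bell value, and the aggregation across rounds is exactly the Azuma-plus-fraction argument you sketch (Lemmas~\ref{lemmaazuma},~\ref{lem:linear_fraction},~\ref{lem:min-entropy-0}). So the ``certification and concentration'' half of your soundness argument is on the right track.

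The genuine gap is in the extraction step. You propose applying a ``quantum-proof (in fact NS-proof) randomness extractor $\Extr$ to $x_{1:N}$ using fresh SV bits as seed,'' and to get there you invoke ``a chain rule for smooth min-entropy that holds against no-signaling side information.'' Neither of these ingredients is available in the needed form, and neither appears in the paper. Known extractors secure against no-signaling side information do not exist with usable parameters, and treating SV bits as a \emph{seed} is also wrong, since an SV string is far from uniform. The paper sidesteps both issues in one move: it fixes Eve's measurement in advance, which reduces the adversary to purely classical side information $(z,e)$ at the cost of a multiplicative penalty $|S|$ in the composable distance (Eq.~\eqref{eq:d_c_intro} and Proposition~\ref{lem:dc-dmax}); the penalty is then absorbed by choosing $|S| = 2^{\Omega(n^{1/4})}$ smaller than the raw min-entropy. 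After this reduction one needs only a \emph{classical two-source extractor}, and here the crucial point you do not use is Assumption~A2 (device independent of the SV source): under that assumption the device output string and the fresh SV bits are two genuinely independent min-entropy sources, so the Chor--Goldreich non-explicit two-source extractor of Lemma~\ref{extractors}(i) applies. Without the $|S|$-penalty reduction and without recognizing the SV bits as a second independent source rather than a seed, your soundness argument would stall at precisely the extraction step, and the ``NS-proof extractor plus NS chain rule'' route you outline does not close.
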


An important assumption in the theorem above is that the SV source is independent of the joint device shared by the honest parties and adversary (see Sec. \ref{sec:assumptions} for details). The theorem is based on Protocol I, given in Fig. \ref{protocolsingle} , using the four-partite Bell inequality given in Section \ref{subsec:Bell}. See Theorem \ref{thm:main_protocol_I} in Section \ref{sec:protocol1} for a precise formulation of the theorem.

Theorem \ref{mainthm} has the drawback that the extractor used in Protocol \ref{protocolsingle} is non-explicit (we only know that it exists by the probabilistic method). We can improve on this aspect if we are willing to increase the number of no-signaling devices and worsen the efficiency of the protocol with the output error.

\begin{thm}[informal]   \label{mainthm2}
For every $\varepsilon < \frac{1}{2}$, there is a protocol using an $\varepsilon$-SV source and eight no-signaling devices with the following properties:

\begin{itemize}
\item Using the devices $\poly(n, 1/\delta)$ times, the protocol either aborts or produces $n$ bits which are $\delta$-close to uniform and independent of any side information (e.g. held by an adversary).

\item Local measurements on many copies of a four-partite entangled state, with $\poly(1 - 2\varepsilon)$ error rate, give rise to devices that do not abort the protocol with probability larger than $1 - 2^{- \Omega(n)}$.
\end{itemize}

The protocol is fully explicit and runs in $\poly(n, 1/\delta)$ time.

\end{thm}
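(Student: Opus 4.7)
}
The plan is to bootstrap Theorem~\ref{mainthm} by replacing its non-explicit extractor with an explicit strong seeded extractor, paying for the construction with an additional four devices and a polynomial (rather than logarithmic) dependence on $1/\delta$. The basic strategy has two stages. First, I run Protocol~I (from Theorem~\ref{thm:main_protocol_I}) on four of the no-signaling devices; the Bell-test component of that protocol, together with the analysis used to prove Theorem~\ref{mainthm}, establishes that conditioned on not aborting and on all classical side information (the SV bits used to choose inputs, the declared outputs, and the adversary's register), the raw output is $\delta'$-close to a distribution with min-entropy $\Omega(n)$. In the second stage, I use the other four devices in a second, independent invocation of Protocol~I whose only job is to produce a short $\delta'$-close-to-uniform seed $S$ of length $d=O(\log(n/\delta))$; since the seed is short, this second run uses only $\poly(\log(n/\delta))$ uses of the devices. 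Finally, I apply an explicit strong seeded extractor $\mathrm{Ext}\colon\{0,1\}^{N}\times\{0,1\}^d\to\{0,1\}^n$ (for instance Trevisan's construction or any of its polynomial-time strong variants) with min-entropy requirement $\Omega(n)$, seed length $d=O(\log(N/\delta))$, and running time $\poly(N,1/\delta)$; the output $\mathrm{Ext}(X,S)$ is the final string.

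The efficiency bookkeeping is routine: both stages call the devices $\poly(n,\log(1/\delta))$ times, and the extractor contributes $\poly(n,1/\delta)$ time (which dominates); the error parameters can be reset to $\delta'=\delta/3$ so that the triangle inequality gives the final $\delta$-closeness. The error tolerance bound inherits directly from Theorem~\ref{thm:main_protocol_I} since each of the two four-device blocks uses the same Bell test, so a $\poly(1-2\varepsilon)$ noise rate still lets the honest quantum strategy pass both with probability $1-2^{-\Omega(n)}$ by a union bound.

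The main obstacle, as usual in composing protocols with a common SV source and an adversary that pre-shares all eight devices, is establishing sufficient independence between the source $X$ produced by the first block and the seed $S$ produced by the second block. One cannot simply quote the marginal guarantees of Theorem~\ref{thm:main_protocol_I} for each block, because Eve's register and the SV source are shared across the two invocations. The way I would handle this is to condition on the entire transcript $T_2$ of the second block (inputs, outputs and the seed $S$) and to use the no-signaling assumption between the two blocks together with the SV-source assumption from Sec.~\ref{sec:assumptions} to argue that, conditioned on $T_2$ and on Eve's register, the inputs driving the first block still form a (slightly degraded) SV source with parameter $\varepsilon'<1/2$. Applying Theorem~\ref{thm:main_protocol_I} to the first block relative to this augmented side information then yields the required $X$ with linear min-entropy conditioned on $(T_2,E)$; strongness of the extractor finishes the argument via the standard identity
\[
\bigl\|\bigl(\mathrm{Ext}(X,S),S,T_2,E\bigr)-\bigl(U_n,S,T_2,E\bigr)\bigr\|_1\le \delta/3+\delta/3+\delta/3=\delta.
\]
The delicate part is quantifying the extra leakage from $T_2$ into the first block's distribution, but this follows the same de~Finetti / azuma-type argument that underlies the proof of Theorem~\ref{thm:main_protocol_I} applied in the augmented conditional model, and costs only a constant-factor worsening of $\varepsilon$.
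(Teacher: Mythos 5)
Your proposal takes a genuinely different route from the paper's, and it has a fundamental gap. You propose to obtain a short near-uniform seed $S$ by running Protocol~I on a second set of four devices. But Protocol~I only produces near-uniform bits by invoking the \emph{non-explicit} two-source extractor of Lemma~\ref{extractors}(i); running it as a subroutine leaves the overall protocol non-explicit, which is exactly what Theorem~\ref{mainthm2} is supposed to fix. If instead you intend to build $S$ directly from the second block's raw outputs together with fresh SV bits, you need an explicit two-source extractor acting on a source whose min-entropy rate is only $\Theta(1/\sqrt{n_2})$ (from Proposition~\ref{prop:max_x_small} the post-test outputs have min-entropy $\Theta(\sqrt{n_2})$ over $\Theta(n_2)$ bits, not $\Omega(n_2)$) and an SV source; no explicit two-source extractor for such low rates with exponentially small error was available, and this unavailability is precisely why the paper resorts to Rao's explicit \emph{three}-source extractor, Lemma~\ref{extractors}(ii). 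Shrinking the seed length does not circumvent this: there is no known brute-force or Chor--Goldreich-style way to produce even $O(\log(n/\delta))$ uniform bits from these two weak sources explicitly.

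There is a second, independent gap. Even granting the no-signaling conditions and Assumption~A2, the two four-partite devices can be prepared by Eve in an arbitrarily correlated joint no-signaling state, so their outputs $X$ and $S$ need not be even approximately independent conditioned on all inputs and side information. Conditioning on the second block's transcript $T_2$ and arguing that the first block's inputs ``remain a slightly degraded SV source'' addresses the randomness of the inputs but says nothing about the pre-established correlation between the outputs. The paper's remedy is the limited-randomness de Finetti bound, Lemma~\ref{lem:deFinetti-bound1}: the second device runs $N_2$ blocks, an SV-selected block is approximately independent of the first device \emph{on average over the choice of block}, and making the right-hand side of Eq.~\eqref{deFinetti} exponentially small forces $N_2 = 2^{\Omega(n)}$ blocks. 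This is where the $\poly(n,1/\delta)$ device count (as opposed to your claimed $\poly(n,\log(1/\delta))$) comes from, and with only one block per device, as in your plan, there is no basis for the ``constant-factor worsening of $\varepsilon$'' you anticipate. In short: the paper uses two devices, many blocks on the second, a de Finetti argument to obtain an approximately independent third source, and a three-source extractor; your plan uses one block per device and a seeded extractor, and has no mechanism either to produce the explicit uniform seed or to enforce independence between the two blocks.
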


The protocol is given in Fig. \ref{protocoltwo}. Its proof of correctness is analogous to the proof of Theorem \ref{mainthm}, with a new ingredient that we show how to simulate two independent sources selecting subsystems at random 
(applying an analogue of the de Finetti Theorem of \cite{Brandao} to subsystems selected by a SV source). The advantage of this new step is that we have three independent sources and so can use known explicit and computationally efficient extractors.

\subsection{Features of the Protocols and Comparison with Previous Works}

In this section we compare our result with similar work in the area. The main protocols for randomness amplification proposed so far are summarized in Table \ref{table-properties}. There are several aspects of a protocol for randomness amplification to be considered:

\begin{itemize}

\item \textbf{Source-Device Correlations:} In this paper (as in \cite{Renner, Acin}) we assume that the device shared by the honest parties and Eve is independent of the imperfect randomness source\footnote{The analogous classical problem would be that one wishes to extract fully random bits from a weak source of randomness and an unknown channel. While this is clearly impossible, the situation is different when one considers non-local correlations.}. A less demanding requirement is to require only that the source has randomness \textit{conditioned} on the devices \footnote{This is the assumption considered in the work of Chung, Shi, and Wu \cite{CY13} discussed below.}. 

\item \textbf{Number of Devices:} This is the total number of different no-signalling devices that are used in the protocol. For a protocol to be practical, it must involve only a small constant number of devices. 

\item \textbf{Robustness:} This is the amount of error per basic element that the protocol can withstand while still working correctly. A basic element is either a two-qubit gate, a qubit measurement, or the storage of one qubit for one time step. For the protocol to be practical it must tolerate a constant amount of noise per basic element. 

\item \textbf{Eavesdropper:} This represents the type of adversary under which the protocol is secure. It can be either a quantum adversary or a more powerful no-signalling adversary.  

\item \textbf{Composability:} This represents whether the protocol is composable (being secure even if Eve measures her part of the device after learning some of the output random bits). 

\item \textbf{Source:} Two types of sources have been considered so far: Santha-Vazirani sources and the less demanding min-entropy sources (where one only requires that the source has a certain amount of min-entropy). 

\item \textbf{Public Source:} This item makes a distinction between the way the bits of the random source are distributed. In a public source the bits are drawn from the source and then communicated to all the parties (including Eve). In a private source, the honest parties have exclusive access to a part of the source, and the eavesdropper only learn about the bits in this part from her correlations with the other parts of the source under the adversary's control. 


\item \textbf{Run Time:} This item quantifies the computational complexity of the protocol. 
\end{itemize}

\begin{table}
\begin{tabular}{@{}c||c|c|c|c|c|c|c|c|c|c|c@{}}
     		&    \vtop{\hbox{\strut Source-Device}\hbox{\strut Correlations}}   &  \# Devices 	&  \small{Robustness}	&  \small{Eve} 	& \small{Comp.} &  \small{Source}	  &   \vtop{\hbox{\strut Public}\hbox{\strut Source}}   & \small{Run time} \\
\hline \hline
\vtop{\hbox{\strut Colbeck-}\hbox{\strut Renner \cite{Renner}}}      	&     \textcolor{black}{indep.}     		&  \textcolor{black}{2}    	&     $  \textcolor{black}{\frac{1}{\text{poly}(m)}}$    	&      \textcolor{black}{NS}    	&    \textcolor{black}{no}      	&  \vtop{\hbox{\strut \hspace{0.3 cm}  \textcolor{black}{SV}}\hbox{\strut \textcolor{black}{$\varepsilon < .08$}}}		&      \textcolor{black}{yes}   		 &     \textcolor{black}{poly$(\frac{1}{\delta},m)$}   	 \\
\hline
Gallego \textit{et al} \cite{Acin}  		&    \textcolor{black}{indep.}    	&  \textcolor{black}{poly($\frac{1}{\delta}, m$) }  	&  $  \textcolor{black}{\frac{1}{\text{poly}(m)}}$    &     \textcolor{black}{NS} 	     &  \textcolor{black}{yes}    &      \textcolor{black}{SV}	     &    \textcolor{black}{yes}  &    \textcolor{black}{$2^{\text{poly}(\log\frac{1}{\delta}, m)}$}  	 \\

\hline
\vtop{\hbox{\strut  \textbf{This Paper}}\hbox{\strut Protocol I}}         	 &      \textcolor{black}{indep.}              		&    \textcolor{black}{4}    	&   \textcolor{black}{ $\Omega(1 - 2\varepsilon)$   }    	&     \textcolor{black}{NS}  		&       \textcolor{black}{yes}      	&        \textcolor{black}{SV}	          &         \textcolor{black}{no}       &    \textcolor{black}{$2^{\text{poly}(\log\frac{1}{\delta}, m)}$} 	 \\

\hline
\vtop{\hbox{\strut  \textbf{This Paper}}\hbox{\strut Protocol II}}    &     \textcolor{black}{indep.}            		&        \textcolor{black}{8}   	&     \textcolor{black}{$\Omega(1 - 2\varepsilon)$}       	&           \textcolor{black}{NS}   	&         \textcolor{black}{yes}        &      \textcolor{black}{SV}     &      \textcolor{black}{no}    	&         \textcolor{black}{$\poly(\frac{1}{\delta}, m)$}	 \\

\hline
Chung-Shi-Wu \cite{CSW14}    	 &  \vtop{\hbox{\strut \hspace{0.2 cm}  \textcolor{black}{positive cond.}}\hbox{\strut  \textcolor{black}{min-entropy}}}  &    \textcolor{black}{poly($\frac{1}{\delta}$)}    &     \textcolor{black}{$\Omega(1)$}      	&   \textcolor{black}{Q}	 &       \textcolor{black}{yes}     	&          \textcolor{black}{$H_{\text{min}}$}       &        \textcolor{black}{yes}          &    \textcolor{black}{poly($\log \frac{1}{\delta}, m$)} 	\\

\hline
\vtop{\hbox{\strut Mironowicz-}\hbox{\strut Pawlowski  \textcolor{black}{*} \cite{Pawlowski}}}      &   \textcolor{black}{indep.}          		&   \textcolor{black}{2}      	&    $\Omega(1)$  	&  \textcolor{black}{Q}  		&       \textcolor{black}{no}    	&      \textcolor{black}{SV}         &       \textcolor{black}{no}        	&     \textcolor{black}{$\poly(\log \frac{1}{\delta}, m)$} 	 \\

\hline
Bouda \textit{et al} \textcolor{black}{*}    \cite{BPPP14} 		 &     \vtop{\hbox{\strut \hspace{0.2 cm}  \textcolor{black}{positive}}\hbox{\strut  \textcolor{black}{min-entropy}}} 		&     \textcolor{black}{3}   	&      \textcolor{black}{$\frac{1}{\text{poly}(m)}$}      	&    \textcolor{black}{Q}	 &       \textcolor{black}{no}   	&      \textcolor{black}{$H_{\text{min}}$}        &      \textcolor{black}{yes}      &  
      \textcolor{black}{$\poly(\log \frac{1}{\delta}, m)$} 	\\	
\hline
\end{tabular}
\vspace{0.3cm} \caption{Comparison of protocols for randomness amplification. In the table we used: 1) $m$ : number of output bits;
2) $\delta$: distance from uniform of output bits; 3) NS and Q stands for no-signaling and quantum adversaries, respectively. black indicates less demanding assumptions/better parameters, while red indicates more depanding assumptions/worse parameters. \textcolor{black}{*}The protocols proposed in \cite{Pawlowski} and \cite{BPPP14} were not given a full security proof so far.} \label{table-properties}
\end{table}

In addition to the two protocols analysed in this paper, several other interesting protocols with complementary advantages have been proposed. Apart from the work of Colbeck and Renner \cite{Renner} and Gallego \textit{et al} \cite{Acin} discussed in the introduction, another interesting protocol was proposed by Chung, Shi, and Wu \cite{CSW14} (after the first version of this paper appeared). Their protocol is sound against  quantum adversaries and can amplify the randomness of every min-entropy source. Moreover it has the distinguishing feature that the correct functioning of the protocol is guaranteed based only on the source having positive min-entropy conditioned on the (quantum) state of the devices. A~drawback of the protocol is that it requires an unbounded number of devices.

Another interesting development was the work of Coudron and Yuen \cite{CY13}, who shown that any composable protocol can be made to have infinity rate (at the cost of increasing the number of devices by two) and decreasing the robustness. Combining their result with ours one obtains a protocol whose rate is infinite (the number of bits from the SV source only determines the output error).


\subsection{Protocols and Outline of their Proofs of Correctness}

\begin{figure}[h]
\begin{protocol*}{Protocol I}
\begin{enumerate}
\item The $\varepsilon$-SV source is used to choose the measurement settings $u^1_{\leq n}$ for the single device.
The device produces output bits $x^1_{\leq n}$. 
\item The parties perform an estimation of the violation of the Bell inequality in the  device by computing the empirical average $\textit{L}_n :=  \frac{1}{n} \sum_{i=1}^{n} B(x_i, u_i)$. The protocol is aborted unless $\textit{L}_n \leq \delta$ (for $\delta>0$ is a~constant depending only on $\epsilon$).
\item Conditioned on not aborting in the previous step, the parties apply the extractor from part (i) of Lemma \ref{extractors} to the sequence of outputs from the device and further $n$ bits from the $\varepsilon$-SV source.
\end{enumerate}
\end{protocol*}
\caption{Protocol for device-independent randomness amplification from a single (four-partite) device with non-explicit extractor.}
\label{protocolsingle}
\end{figure}

\begin{figure}
\scalebox{0.60}{\includegraphics{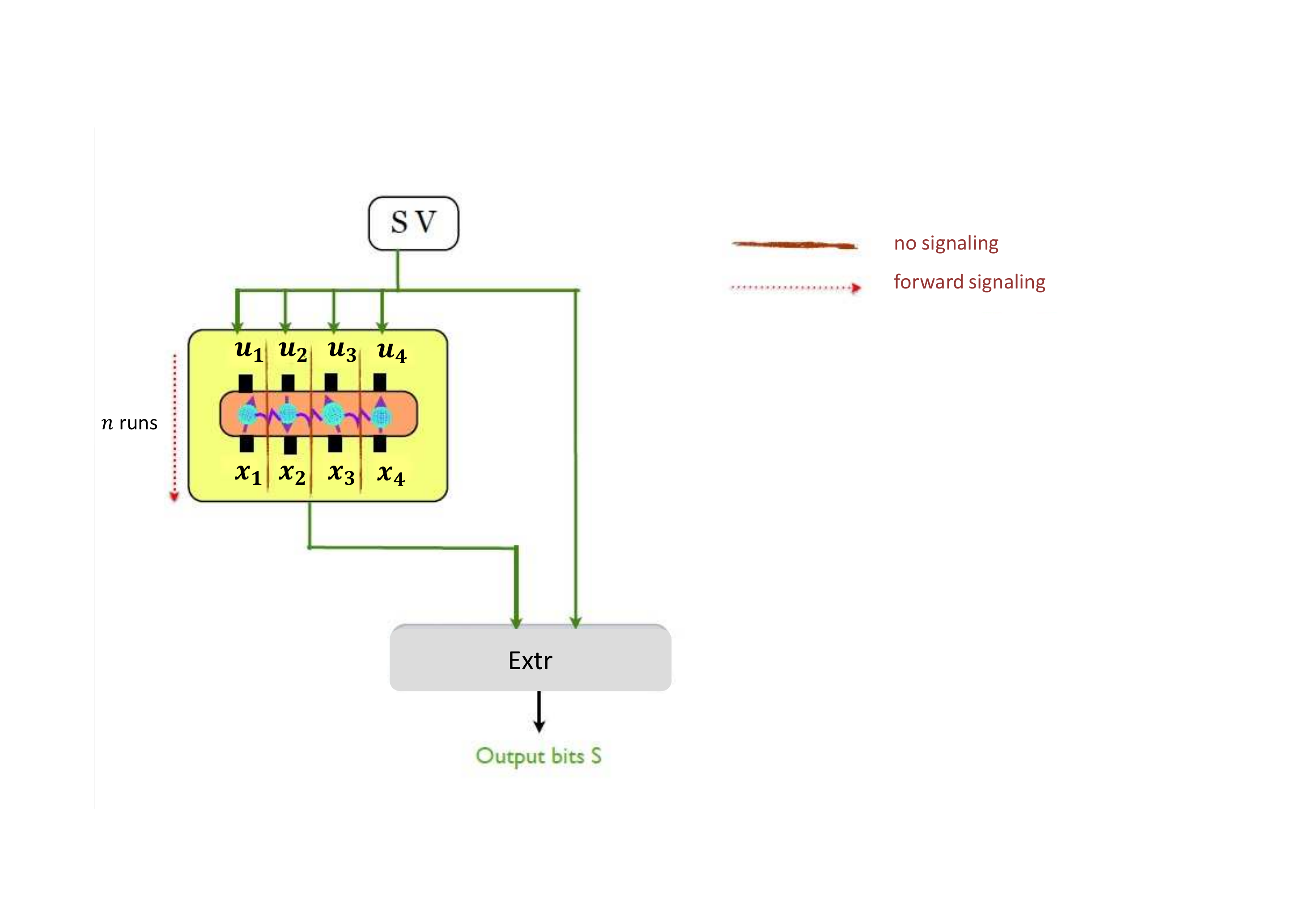}}
\caption{Illustration of the protocol for randomness amplification from a single device (of four no-signaling parts) of $n$ runs by using a non-explicit randomness extractor.}
\label{protocol-fig-1}
\end{figure}

\textit{Protocol I} is simpler than the previous protocols \cite{Renner, Acin, our}. It is given precisely in Fig. \ref{protocolsingle} and illustrated in Fig. \ref{protocol-fig-1}, but its rough structure is the following: First, one uses several bits from the Santha-Vazirani source in order to choose inputs for four no-signaling boxes (each of which is reused many times). Then, one collects the outputs of the boxes and using the empirical data on inputs and outputs decides whether to abort. Then if the protocol is not aborted, one applies a randomness extractor (see Sec. \ref{subsec:rand-ext}) to all output bits of the four devices and to a further set of bits taken from the SV source. The output bits of the protocol, which by Theorem \ref{mainthm} are close to fully random, are just the output bits of the extractor. 

The proof of correctness of the protocol consists of three main steps.

\bei 
\item[(i)] We show that by increasing the output error of the protocol (and the assumption that the source is private) it suffices to prove the security of the protocol considering an adversary that only has classical side-information about the devices of the honest parties. This is achieved in Section \ref{sec:assumptions}.

\item[(i)] We show that, conditioned on passing a certain test on the input and output of the four no-signaling boxes, with high probability over the given input, the output is a source of linear min-entropy. This is achieved in Proposition \ref{prop:max_x_small}. 
\item[(iii)]  We use know results on extractors for two independent sources. 
\eei 

Step (ii) is established by a sequence of implications (which have a similar flavor to the estimation of \cite{Pironio, Pironio13, Fehr11} for the related task of randomness expansion) as follows:
\bei
\item In Sec. \ref{subsec:Azuma} we present an estimation procedure which ensures that with high probability the value of the Bell expression with settings chosen from a SV source is small \footnote{The Bell inequality we consider is such that the maximum possible violation corresponds to the zero value. Thus the smaller the Bell value, the larger the violation.} for a~linear fraction of boxes conditioned on previous inputs and outputs. This will follow from an simple application of Azuma's inequality.
\item In Sec. \ref{subsec:SV-Bell} we show that a small  value of the Bell value with settings chosen from a~SV source implies that for any setting the probability of any output is bounded away from one. This is achieved by solving a linear-program, analogously to the approach of \cite{Acin,our}. 
\item In Sec. \ref{subsec:min-entropy-source}, in turn, we show that if a constant fraction of conditional boxes has probability of outputs bounded away from one,  then, given the input, the output has linear min-entropy.
\eei 

Step (iii) is an application of known results regarding extracting randomness from two independent min-entropy sources 
(see Sec. \ref{subsec:rand-ext}). They say, that one can extract randomness from two or more independent min-entropy sources.
In our case the independence follows from our main assumption saying that device is independent of the SV source.

\begin{figure}[h]
\begin{protocol*}{Protocol II}
\begin{enumerate}
\item The $\varepsilon$-SV source is used to choose the measurement settings $u^1_{\leq M_1}, u^2_{\leq M_2}$ for the $2$ devices.
The devices produce output bits $x^1_{\leq M_1}, x^2_{\leq M_2}$. 


\item 
The measurements in device $j$ (for $j = 1, 2$) are partitioned into $N_j$ blocks of boxes each containing $n$ boxes (so that $M_j = n N_j$).

\item The parties choose at random one block of boxes of size $n$ from each device, using bits from the $\varepsilon$-SV source.
 
\item The parties perform an estimation of the violation of the Bell inequality in the chosen block from each of the two devices by computing the empirical average $\textit{L}^j_n :=  \frac{1}{n} \sum_{i=1}^{n} B(x^j_i, u^j_i)$. The protocol is aborted unless for both of them, $\textit{L}^j_n \leq \delta$.
\item Conditioned on not aborting in the previous step, the parties apply the extractor from part (ii) of Lemma \ref{extractors} to the sequence of outputs from the chosen block in each device.
\end{enumerate}
\end{protocol*}
\caption{Protocol for device-independent randomness amplification from two (four-partite) devices with an explicit, and efficient, extractor.}
\label{protocoltwo}
\end{figure}

It is instructive to point out already here where the non-local nature of our protocol plays a role. It is in the second step that the non-local nature of correlations is exploited. There, looking at the inputs and outputs of the devices obtained, one can \textit{verify} in a device-independent manner that the outputs must have been somewhat random. This is impossible classically without making further assumptions.

\vspace{0.5 cm}

\textit{Protocol II}: In the scenario where we require an explicit extractor (Protocol II described in Fig. \ref{protocoltwo} and illustrated in Fig. \ref{protocol-fig}), a further set of four no-signaling boxes is taken into account. As before, bits from the SV source are input into these boxes and a test is performed on the empirical data of inputs and outputs. In this case, the randomness extractor is applied to the output bits of all eight no-signaling devices and a further set of bits from the SV source.

\begin{figure}
\scalebox{0.60}{\includegraphics{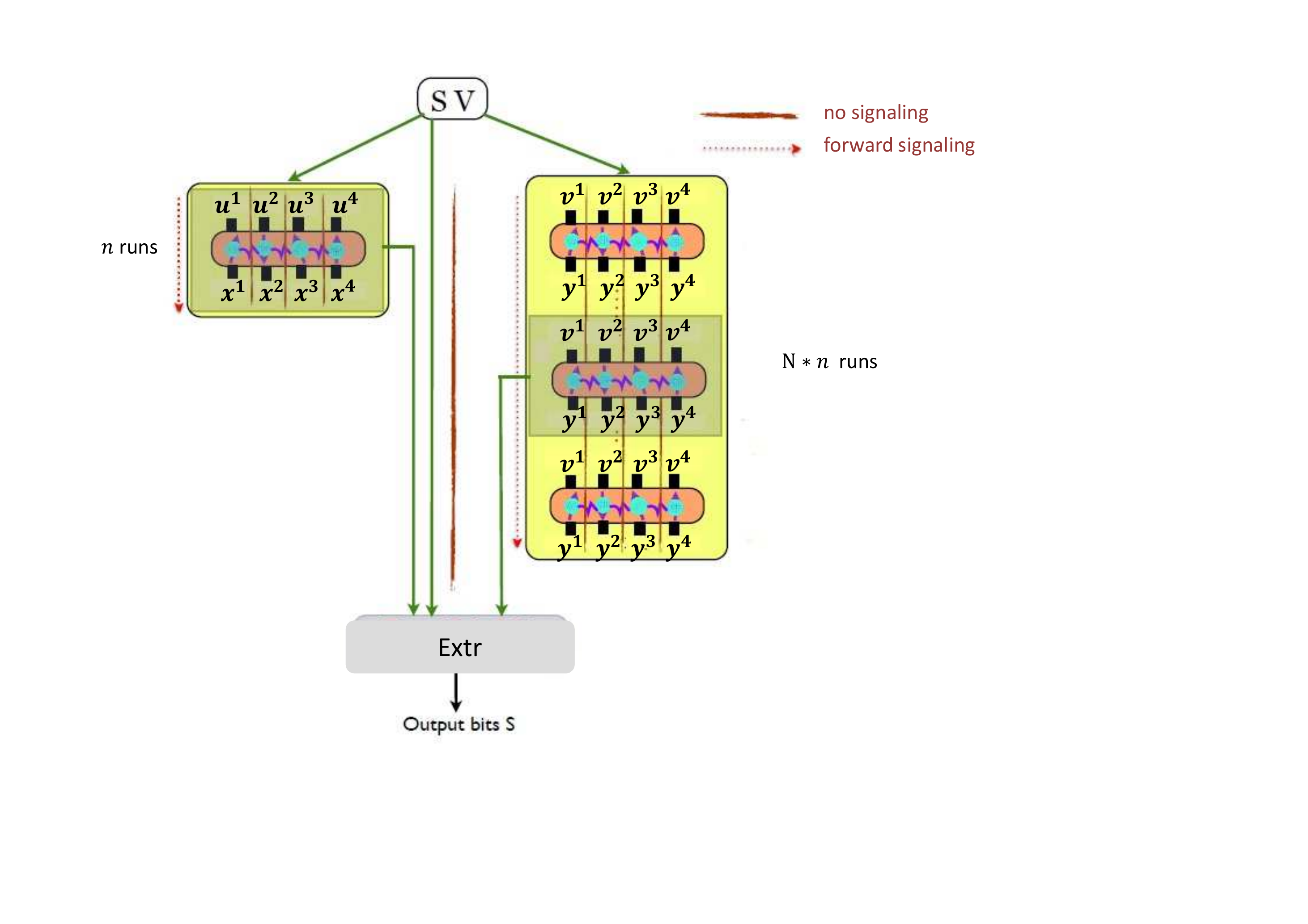}}
\caption{Illustration of the protocol for randomness amplification from two devices (of four no-signaling parts each) with $N_1 = 1$ block of $n$ runs from the first device and $N_2$ blocks of $n$ runs from the second.}
\label{protocol-fig}
\end{figure}

\tgr{The proof of security of Protocol II is contained in Sec. \ref{sec:protocol2}, with the main result stated in Theorem 
\ref{thm:main_protocol_II}.} 
We need here an extra step to reduce the problem to having three independent sources of week randomness. Namely we show that we can effectively work with two independent \tgr{devices 
(of four no-signaling parts each);} by independent we mean devices that, given any fixed inputs, produce uncorrelated outputs.

The idea is to adapt recent de Finetti theorems for no-signaling devices \cite{Brandao, Brandao2} (based on information-theoretical methods) to the situation in which subsystems are selected from a Santha-Vazirani source, instead of being selected uniformly at random. 
In Lemma \ref{lem:deFinetti-bound1} we show that given two no-signaling devices, if we select at random (using a SV source) a block of 
uses from the second device among a sufficiently large number of blocks, this block of will be approximately uncorrelated with the first device. 


\section{Preliminaries}


\subsection{Randomness Extractors}
\label{subsec:rand-ext}
Recall the definition of a min-entropy source:
\begin{definition}[\tgr{The min-entropy source}]
The min-entropy of a random variable $S$ is given by 
\begin{equation}
H_{min}(S) = \min_{s \in supp(S)} \log \frac{1}{P(S = s)}.
\end{equation}
When $S \in \{0,1\}^n$, it is called an $(n, H_{min}(S))$.
\end{definition} 
From multiple weak sources of randomness, one can use independent source extractors \cite{Xin-Li} to extract nearly uniform random bits. 
\begin{definition}[Independent-source extractor]
An independent source extractor is a function $Ext: (\{0,1\}^n)^k \rightarrow \{0,1\}^m$ that acting on $k$ independent $(n, H_{min}(S))$ sources, outputs $m$ bits which are $\xi$-close to uniform, i.e. for $k$ independent $(n, H_{min}(S))$ sources $S_1, \ldots, S_k$ we have
\begin{equation}
\Vert Ext(S_1, \ldots, S_k) - U_m \Vert_1 \leq \xi,
\end{equation}
where $\| . \|_1$ is the variational distance between the two distributions and $U_m$ denotes the uniform distribution on the $m$ bits. 
\end{definition}   

The results about extractors that we will use are summarized in the following lemma:
\begin{lemma}[Extractors Constructions] \label{extractors}

\mbox{}

\bei 
\item[(i)] \cite{CG} There exists a (non-explicit) deterministic extractor that, given two independent sources of min-entropy larger than $h$, outputs $\Omega(h)$ bits $2^{-\Omega(h)}$-close to uniform.

\item[(ii)] \cite{Rao}  There exists an explicit extractor that given three independent sources, one having min-entropy larger than $\tau n$ (for any $\tau > 0$) and the other two larger than $h \geq \log^c(n)$ (with $c > 0$ a universal constant), outputs $\Omega(h)$ bits $2^{- h^{\Omega(1)}}$-close to uniform.The extractor can be implemented in time $\poly(n, h)$. 
\eei 

\end{lemma}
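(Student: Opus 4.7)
The plan is to prove the two parts separately by appealing to standard techniques from the randomness extraction literature, since both statements are known results (Chor--Goldreich for (i) and Rao for (ii)).

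For part (i), I would use the probabilistic method in the style of Chor--Goldreich. First reduce to flat sources: every $(n,h)$-source is a convex combination of flat sources supported on sets of size $K = 2^h$, so it suffices to exhibit a function $f : \{0,1\}^n \times \{0,1\}^n \to \{0,1\}^m$ that works for every pair of flat sources $(X,Y)$ on sets $A, B \subseteq \{0,1\}^n$ with $|A| = |B| = K$. I would sample $f$ uniformly at random. For any fixed flat pair $(A,B)$ and any test $T \subseteq \{0,1\}^m$, the probability mass $|f^{-1}(T) \cap (A \times B)|/K^2$ is an average of $K^2$ independent $\{0,1\}$ variables with mean $|T|/2^m$. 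A Chernoff bound gives failure probability at most $\exp(-\Omega(K^2 \cdot 2^{-m}))$ per test. Taking a union bound over all $2^{2^m}$ tests and over all $\binom{2^n}{K}^2 \le 2^{2nK}$ flat pairs, one verifies that choosing $m = c h$ with $c > 0$ sufficiently small yields a random $f$ that succeeds with nonzero probability, thereby producing $\Omega(h)$ output bits $2^{-\Omega(h)}$-close to uniform.

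For part (ii), I would invoke Rao's explicit multi-source extractor construction. The building blocks are (a) a strong seeded extractor (say Guruswami--Umans--Vadhan or the Leftover Hash Lemma), (b) Raz's two-source extractor which extracts from a pair of sources provided one has entropy rate above $1/2$, and (c) a somewhere-random source condenser obtained by partitioning a weak source into blocks and hashing each block with a short seed. The high-level strategy is: use the high min-entropy source ($\geq \tau n$) together with one of the low min-entropy sources to build a short block with entropy rate above $1/2$; then feed this block as the seed into Raz's extractor applied to the third source. Composing these explicit functions yields an extractor computable in time $\poly(n,h)$ with output length $\Omega(h)$ and error $2^{-h^{\Omega(1)}}$, with the lower bound $h \geq \log^c n$ arising from the polylogarithmic seed length of the seeded extractor in the condensing step.

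The main obstacle is in part (ii): each composition loses an additive $\Theta(\log(1/\xi))$ term from the min-entropy, and the entropy-rate boosting step requires that $h$ be at least polylogarithmic in $n$ so that the accumulated seeded-extractor errors remain negligible. Carefully tracking these parameter losses through the chain of reductions is the bulk of the proof. By contrast, part (i) involves no obstruction beyond balancing the Chernoff tail against the double union bound over sources and tests.
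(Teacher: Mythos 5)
The paper does not prove this lemma; it simply cites \cite{CG} for (i) and \cite{Rao} for (ii) and treats the statements as imported facts. Your write-up is therefore not comparable to a "paper's proof" but rather a reconstruction of the cited results, which is a legitimate thing to attempt.

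Your sketch of (i) is the right argument and is essentially what is in the literature: reduce to flat sources, apply a Chernoff bound to the indicator variables $\mathbf{1}[f(a,b)\in T]$ over $(a,b)\in A\times B$, and union-bound over all tests $T\subseteq\{0,1\}^m$ and all $\binom{2^n}{K}^2$ flat pairs. One caveat you should make explicit: balancing $\exp(-\Omega(K^2 2^{-m}))$ against $2^{2nK}\cdot 2^{2^m}$ with $K=2^h$ and $m=ch$ requires $(1-c)h \gtrsim \log n$, so the non-explicit extractor only exists once $h = \Omega(\log n)$. The lemma as stated omits this, but it is harmless in the paper's application where $h = \Theta(\sqrt{n})$.

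Part (ii) is where the proposal is substantially incomplete. You correctly name the building blocks appearing in Rao's construction (strong seeded extractors, Raz's unbalanced two-source extractor for one source of rate above $1/2$, and a somewhere-random condenser), but the claimed conclusion — output length $\Omega(h)$, error $2^{-h^{\Omega(1)}}$, and the threshold $h\ge\log^c n$ — depends entirely on the parameter accounting you defer to "carefully tracking these losses." In Rao's argument this bookkeeping is the proof: the polylogarithmic threshold on $h$ comes from the seed length and error of the seeded extractor used in condensing, the $2^{-h^{\Omega(1)}}$ error comes from iterating the condenser a constant number of times, and the $\Omega(h)$ output length requires that the final Raz step be fed a seed block whose entropy rate genuinely exceeds $1/2$ after all the losses. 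Without exhibiting those inequalities your sketch establishes only that some composition of this shape is plausible, not that the parameters in the lemma are achieved. For the purposes of this paper the honest move is the one the authors take — cite Rao's theorem directly rather than reprove it — but if you do want to reprove it you need to supply that arithmetic.
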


When we say the first extractor is non-explicit we mean that its existence is only guaranteed by the probabilistic method.



Theorem \ref{mainthm} uses the non-explicit extractor for two sources stated in part (i), while Theorem \ref{mainthm2} uses Rao's extractor \cite{Rao} stated in part (ii).

\subsection{The Bell inequality} \label{subsec:Bell}
The inequality we consider involves four spatially separated parties with measurement settings $\textbf{u} = \{\textbf{u}^1, \textbf{u}^2, \textbf{u}^3, \textbf{u}^4\}$ and respective outcomes $\textbf{x} = \{\textbf{x}^1, \textbf{x}^2, \textbf{x}^3, \textbf{x}^4\}$. Each party chooses one of two measurement settings with two outcomes each so that $\textbf{u}^i \in \{0, 1\}$ and $\textbf{x}^i \in \{0,1\}$ for $i \in \{1,..,4\}$. 
The measurement settings for which non-trivial constraints are imposed by the inequality can be divided into two sets
\begin{equation}
\textsl{U}_0 = \{ \{0001\}, \{0010\}, \{0100\}, \{1000\} \} \hspace{0.3 cm} \text{and} \hspace{0.3 cm} \textsl{U}_1 = \{ \{0111\}, \{1011\}, \{1101\}, \{1110\} \}. 
\end{equation}
The inequality is then \cite{Guhne}
\begin{eqnarray}
\label{Bell-ineq}
\sum_{\textbf{x}, \textbf{u}} (\texttt{I}_{\oplus_{i=1}^{4} \textbf{x}^i = 0} \; \texttt{I}_{\textbf{u} \in \textsl{U}_0} \; + \texttt{I}_{\oplus_{i=1}^{4} \textbf{x}^i = 1} \; \texttt{I}_{\textbf{u} \in \textsl{U}_1}) \; P(\textbf{x}|\textbf{u}) \geq 2,
\end{eqnarray}
where the indicator function $\texttt{I}_{L} = 1$ if $L$ is true and $0$ otherwise. The local hidden variable bound is $2$ and there exist no-signaling distributions that reach the algebraic limit of $0$. For any no-signaling box represented by a vector of probabilities $\{P(\textbf{x}|\textbf{u})\}$, the Bell inequality may be written as
\begin{equation}
\textbf{B}.\{P(\textbf{x} | \textbf{u})\} = \sum_{\textbf{x}, \textbf{u}} B(\textbf{x} , \textbf{u}) P(\textbf{x} | \textbf{u}) \geq 2, 
\end{equation}
where $\textbf{B}$ is an indicator vector for the Bell inequality with $2^4 \times 2^4$ entries 
\begin{equation}
B(\textbf{x} , \textbf{u}) = \texttt{I}_{\oplus_{i=1}^{4} \textbf{x}^i = 0} \; \texttt{I}_{\textbf{u} \in \textsl{U}_0} + \texttt{I}_{\oplus_{i=1}^{4} \textbf{x}^i = 1} \; \texttt{I}_{\textbf{u} \in \textsl{U}_1}.
\label{eq:Bell_indicator}
\end{equation} 
Consider the quantum state 
\begin{equation} \label{state}
|\Psi \rangle = \frac{1}{\sqrt{2}} (|\phi_{-}\rangle |\tilde{\phi}_{+} \rangle + |\psi_{+} \rangle |\tilde{\psi}_{-} \rangle),
\end{equation}
where $|\phi_{-} \rangle = \frac{1}{\sqrt{2}}(|0\rangle |0\rangle - |1\rangle |1\rangle)$, $|\psi_{+}\rangle = \frac{1}{\sqrt{2}} (|0\rangle |1 \rangle + |1\rangle |0\rangle)$, $|\tilde{\phi}_{+}\rangle = \frac{1}{\sqrt{2}}(|0\rangle |+\rangle + |1\rangle |-\rangle)$, and $|\tilde{\psi}_{-}\rangle = \frac{1}{\sqrt{2}}(|0\rangle |-\rangle - |1\rangle |+\rangle)$. Measurements in the $X$ basis 
\begin{equation} \label{mea1}
\{|+\rangle = \frac{1}{\sqrt{2}} (|0\rangle + |1\rangle), |-\rangle = \frac{1}{\sqrt{2}}(|0\rangle - |1\rangle)\}
\end{equation}
correspond to $\textbf{u}^i = 0$ and measurements in the $Z$ basis 
\begin{equation} \label{mea2}
\{|0\rangle, |1\rangle\} 
\end{equation}
correspond to $\textbf{u}^i = 1$ for each of the four parties $i \in \{1, \dots, 4\}$. These measurements on $|\Psi \rangle$ lead to the algebraic violation of the inequality, i.e., the sum of the probabilities appearing in the inequality is zero. 

The reason for the choice of this Bell inequality is twofold. Firstly, as we have seen, there exist quantum correlations achieving the maximal no-signaling violation of the inequality, which implies that free randomness amplification starting from any initial $\varepsilon$ of the SV source may be possible. Secondly, we will show (in Lemma \ref{lin-prog}) that, for any measurement setting $\textbf{u}$ out of the $2^4$ possible settings in the inequality, the probability of any of the $2^4$ output bit strings $\textbf{x}$ is bounded away from one (for any no-signaling box) by a linear function of the uniform value of the Bell expression. 

\section{General Set-up, Assumptions, and Randomness Criterion}
\label{sec:assumptions}
We have the following general setup. We consider the variable $\sv$ from the SV-source 
and the box held by the honest parties and Eve. The devices held by the honest parties have input and output denoted by $\abin$ and $\about$ and Eve's input and output are denoted $\ein$ and $\eout$, respectively. 

The devices held by the honest parties are separated into $m$ components with corresponding inputs and outputs $\abin^i$ and $\about^i$ for $i \in [m]$; furthermore, for each device component we have $l$ sequential runs with inputs and outputs \tgr{denoted by $\abin^i_{k}$ and $\about^i_k$ for $k\in[l]$, respectively.} The variable $\sv$ is divided into three parts $(\sve,\svin,\svhash)$,
where $\sve$ is generated by Eve, while $\svin$ and $\svhash$ are generated by the honest parties. 
The variables $\svin$  will be used as inputs to devices. 

The honest parties will check whether $\about$ and $\abin$ satisfy some 
joint acceptance condition (based on the violation of a particular Bell inequality); this acceptance event is denoted by $\ACC$. Conditioned on acceptance, the variables $\svhash$ together with the outputs $\about$ of devices will be hashed (using a multi-source extractor) to produce the output $s$ - the output random bits. The variables $\svin$ and $\svhash$ as well as the outputs $\about$ of the devices will be destroyed in the protocol (and in particular will remain hidden from Eve). 

The initial state of the system will be defined as the following correlation box: 
\be
p(\about,\eout,\svin,\svhash,\sve|\abin,\ein)
\ee

We assume the following:
\begin{itemize}
\item {\bf No-signaling assumptions:} 
The box satisfies the constraint of no-signaling between the honest parties and Eve 
\ben 
p(\about|\abin, \ein) &=& p(\about|\abin), \nonumber \\
p(\eout|\abin, \ein) &=& p(\eout|\ein),
\label{eq:as-nosig}
\een
as well as a no-signaling condition between device components
\ben
p(\about^I|\abin) = p(\about^I|\abin^I) \; \; \; \forall I \subseteq [m].
\een
Each device component also obeys a time-ordered no-signaling (\texttt{tons}) condition for the $k \in [l]$ runs performed on it:
\ben
p(\about^i_{k}|\eout,\abin^i,\ein,\sv) = p(\about^i_{k} | \eout,\abin^i_{\leq k},\ein,\sv) \; \; \; \forall k \in [l]
\label{eq:tons}
\een 
where $\abin^i_{\leq k} := \abin^i_{1}, \ldots, \abin^i_{k}$.

\item {\bf SV conditions:} 
The variables $(\svin,\svhash,\sve)$ form an SV source, that is satisfy Eq. (\ref{SVdef}). In particular,
$(\svhash|\svin,\sve)$ is also an SV source.

\end{itemize}

\begin{itemize}
\item {\bf Assumption A1:} The devices do not signal to the SV source, i.e. the distribution of $\sv$ is 
independent of the inputs $(\abin,\ein)$:
\be
\sum_{\about,\eout} p(\about,\eout,\sv|\abin,\ein) = p(\sv) \; \; \; \forall{(\about, \eout, \sv, \abin,\ein)}.
\label{eq:assumption1}
\ee

\item {\bf Assumption A2:} The box is fixed independently of the SV source:
\be
p(\about,\eout|\abin,\ein,\sv) = p(\about,\eout|\abin,\ein) \; \; \; \forall{(\about, \eout, \sv, \abin,\ein)}.
\label{eq:assumption2}
\ee
\end{itemize}

Assumption A1 is not v ery restrictive and is required to meaningfully describe the process of inputting a variable from the  SV source into the box. Assumption (A2), in turn, is more restrictive, although in our view is still a natural one. It is the quantum analogue of the problem of extracting randomness from one source and an independent (but unknown and arbitrary) channel. While it is easy to see that no randomness can be extracted in this classical setting, the situation is different considering non-local correlations. This assumption was also employed in the pioneering results \cite{Renner,Acin,Pawlowski}, as well as in all other results on the topic apart from \cite{CSW14}. 
%

After $\svin$ is input as $\abin$, we obtain the following box, 	
\be
p(\about,\eout,\svin,\svhash,\sve| \ein) :=  p(\about,\eout,\svin,\svhash,\sve|\svin, \ein)
\ee
Due to assumption A1, which assures in particular no-signaling from $\abin$ to $\svin$, this is a normalized probability distribution.

Conditioning on acceptance $\ACC$ and applying a hash function $s(\about,\svhash)$, one gets the following box
\begin{eqnarray}
\begin{aligned}
r(s,\eout,\sve|\ein) :=& \hspace{0.1 cm} p(s,\eout,\sve|\ein,\ACC )\\
\equiv &\sum_{\svin}\sum_{s(\about,\svhash)=s}   p(\about,\eout,\svin,\svhash,\sve,|\ein,\ACC )
\end{aligned}
\end{eqnarray}
where 
\be
p(\about,\eout,\svin,\svhash,\sve|\ein, \ACC)= 
\frac{p(\about,\eout,\svin,\svhash,\sve|\ein )}{\sum_{(\about,\svin)\in\ACC}p(\about,\svin|\ein )}
\ee

The composable security criterion \cite{RennerKoenigcomposability,Mayers} is defined in terms of the distance of $r(s,\eout,\sve|\ein )$ to an ideal box
$r^{id}=\frac{1}{|S|} r(\eout,\sve|\ein )$, with $r(\eout,\sve|\ein )=\sum_s r(s,\eout,\sve|\ein )$.
The distance is the standard variational distance between probability distributions
maximized over all possible measurements applied to the box. Since the most general measurement 
that Eve can apply is to look at the register $\sve$, as well as the register $s$, and input $\ein$ that may depend on both of them, we have:
\be
d_c=\sum_{s,\sve} \max_{\ein } \sum_{\eout} \left |r(s,\eout,\sve|\ein ) - \frac{1}{|S|}r(\eout,\sve|\ein )\right|.
\ee
One can rewrite it as follows
\be
d_c=\sum_{s,\sve}\max_{\ein }\sum_{\eout}  r(\eout,\sve|\ein )\left|r(s|\eout,\sve, \ein ) - \frac{1}{|S|}\right|
\ee
Rewriting it in terms of the box $p$ we obtain
\be
\label{eq:true_dc}
d_c= \sum_{s,\sve} \max_{\ein } \sum_{\eout } p(\eout, \sve |\ein ,\ACC) \left|p(s|\eout, \sve, \ein,  \ACC ) - \frac{1}{|S|}\right|.
\ee
The composable secure definition says that the protocol is $\varepsilon$-secure if $d_c \leq \varepsilon$, for a chosen error $\varepsilon$. It guarantees that even if part of $s$ is given to Eve, the rest is still secure.

We now note that 
\be
\label{eq:d_c_intro}
d_c\leq |S| \sum_{\sve} \max_{\ein } \sum_{\eout } p(\eout, \sve|\ein,\ACC) \sum_s \left|p(s|\eout,\sve, \ein, \ACC ) - \frac{1}{|S|}\right|.
\ee
In our proofs we will use this estimate and therefore will have to handle the extra factor of the size of the output $|S|$ in the error. The benefit will be that one can fix the measurement of Eve beforehand. In particular this allows us to use an extractor sound only with respect to a classical adversary and nevertheless obtain security under a no-signaling adversary. 

We note that assuming Eve is quantum, in other problems such as quantum key distribution and randomness expansion, it is possible to avoid the increase of the output error by $|S|$ using extractors that are sound against quantum side-information.  The fact that an extractor  that outputs $n$ bits with error $\varepsilon$ sound against classical side information is also sound against quantum (or even no-signaling) side information with error $2^n \varepsilon$ is well known (see e.g. \cite{Berta14}). However in many applications this error blow up is prohibitive. It turns out that in our approach we can afford it by tracing out a large fraction of the number of output bits.



\section{Tools for randomness amplification}
\label{Toolbox}

In this section we give the tools we will employ proving the correctness of Protocols I and II. 

\subsection{Estimation of the Bell value} 
\label{subsec:Azuma}
In this section we first show, that 
with high probability, the arithmetic average of mean values for conditional boxes is close to the observed value.  
As a corollary, we obtain that if the average is small, then with high probability for a linear fraction of 
all boxes the mean will be small too.  It has a similar flavor to previous results \cite{Pironio, Pironio13, Fehr11} 
obtained in the context of the related problem  of randomness expansion. 
\begin{remark}
\tgr{In the proof of Protocol I the variables $W_i$ will be interpreted as $W_i = (x_i,u_i)$,
for $i\geq 1$, where any $x_i$ and $u_i$ are of the form of $\bx=(\bx^1,\ldots,\bx^4)$ and $\bu=(\bu^1,\ldots,\bu^4)$, respectively, both introduced in Subsection II B, and  $W_0 = (z,e)$.  As far as we consider the general set-up described in Sec. III, we just put $\about:=x_i$, $\abin:=u_i$, $\eout:=z$ and $\ein:=e$ (see Sec. V for details).
}

\tgr{As for Protocol II, the lemma will be applied twice. For the first time, 
we will take  $W_i=(x_i,u_i)$ for $i\geq1$ and 
 $W_0= (z,v^j, M_j, e)$. For the second time, we will take  $W_i=(y^j_i,v^j_i)$ for $i\geq1$ and 
 $W_0= (z,u, M_j, e)$ (see Sec. VI for the definitions of all variables used above).
 }
 
\tgr{
The function $B_i$ will be the same for both those cases, and  for all $i\geq 1$, given by \eqref{eq:Bell_indicator}.
 }
 \end{remark}

\begin{lemma} \label{lemmaazuma} 
Consider  arbitrary random variables $W_i$, for $i=0,1,\ldots,n$, and binary random variables $B_i$, for $i=1,\ldots n$, that are functions of $W_i$, i.e. $B_i=f_i(W_i)$ 
for some functions $f_i$. Let us denote $\overline{B}_i=\mathbb{E}(B_i|W_{i-1},\ldots,W_1,W_0)$, for $i=1,\ldots,n$ (i.e. $\overline{B}_i$ are conditional expectation values).
Define, for $k = 1, \ldots, n$, the empirical average
\be
\Lazuma_k=\frac1k\sum_{i=1}^k B_i
\ee
and the arithmetic average of conditional expectation values 
\be
\label{aver-cond-mean}
\overline{\Lazuma}_k=\frac1k\sum_{i=1}^k \overline{B}_i.
\ee
Then we have 
\be
P(|\Lazuma_n-\overline{\Lazuma}_n|\geq s)\leq 2 e^{-n\frac{s^2}{2}}.
\label{eq:poor_man}
\ee
\label{lem:poor_man}
\end{lemma}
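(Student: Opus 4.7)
The plan is to recognize the statement as a direct application of the Azuma--Hoeffding inequality to a bounded martingale difference sequence built from the $B_i$'s.

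First, I would set up the filtration $\mathcal{F}_k = \sigma(W_0, W_1, \ldots, W_k)$ for $k \geq 0$, and introduce the centered increments
\begin{equation*}
Y_i \defeq B_i - \overline{B}_i = f_i(W_i) - \mathbb{E}(B_i \mid \mathcal{F}_{i-1}), \qquad i = 1, \ldots, n.
\end{equation*}
By construction, $\overline{B}_i$ is $\mathcal{F}_{i-1}$-measurable and $B_i$ is $\mathcal{F}_i$-measurable, so $Y_i$ is $\mathcal{F}_i$-measurable and $\mathbb{E}(Y_i \mid \mathcal{F}_{i-1}) = 0$. Hence the partial sums $S_k \defeq \sum_{i=1}^k Y_i$ form a martingale with respect to $(\mathcal{F}_k)_{k \geq 0}$, with $S_0 = 0$ and martingale increments $S_k - S_{k-1} = Y_k$.

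Next, I would observe that the increments are uniformly bounded: since each $B_i$ takes values in $\{0,1\}$, the conditional expectation $\overline{B}_i$ takes values in $[0,1]$, and thus $|Y_i| \leq 1$ almost surely for every $i$. Noting the identity
\begin{equation*}
L_n - \overline{L}_n = \frac{1}{n} \sum_{i=1}^n (B_i - \overline{B}_i) = \frac{S_n}{n},
\end{equation*}
the event $\{|L_n - \overline{L}_n| \geq s\}$ coincides with $\{|S_n| \geq ns\}$.

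Finally, the Azuma--Hoeffding inequality, applied to a martingale with increments bounded by $c_i = 1$ and deviation threshold $t = ns$, yields
\begin{equation*}
P(|S_n| \geq ns) \leq 2 \exp\!\left(-\frac{(ns)^2}{2\sum_{i=1}^{n} c_i^2}\right) = 2 \exp\!\left(-\frac{n s^2}{2}\right),
\end{equation*}
which is exactly the bound claimed in \eqref{eq:poor_man}. I do not foresee a real obstacle here: the only subtle point is verifying that $Y_i$ is adapted and bounded, both of which follow immediately from $B_i = f_i(W_i)$ being binary. The rest is a textbook invocation of Azuma's concentration bound.
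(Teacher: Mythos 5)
Your proof is correct and takes essentially the same route as the paper: the paper sets $X_l = l(L_l - \overline{L}_l) = \sum_{i=1}^l (B_i - \overline{B}_i)$, which is precisely your partial sum $S_l$, verifies the martingale property and the bound $|X_l - X_{l-1}| \le 1$, and then invokes Azuma--Hoeffding with $c_l = 1$ and $t = ns$. Your framing via the filtration $\mathcal{F}_k$ and centered increments $Y_i$ is just a cleaner restatement of the same argument.
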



To prove lemma \ref{lem:poor_man}, we need to state the Azuma-Hoeffding inequality.
Let $X_0, \ldots, X_k$ and $W_0, \ldots, W_k$ be two sequences of random variables. Then $X_0, \ldots, X_k$ is said to be a
martingale with respect to $W_0, \ldots, W_k$ if for all $0 \leq i \leq k$, $\mathbb{E}|X_{i}| < \infty$ and $\mathbb{E}(X_{i} | W_0, \ldots, W_{i-1}) = X_{i-1}$.

\begin{lemma} 
(Azuma-Hoeffding) Suppose $X_0, \ldots, X_k$ is a martingale with respect to $W_0, \ldots, W_k$, and that $|X_{l+1} - X_l| \leq c_l$ for all $0 \leq l \leq k-1$. Then, for all positive reals $t$, 
\begin{equation}
P \left( |X_k - X_0| \geq t  \right) \leq 2\exp \left( - \frac{t^2}{2 \sum_{l=1}^{k} c_l^2}  \right).\\
\end{equation}
\label{lem:azuma}
\end{lemma}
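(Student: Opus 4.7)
The plan is to use the classical Chernoff/exponential-moment method for martingales. First I would reduce the two-sided statement $P(|X_k - X_0| \geq t)$ to a one-sided bound $P(X_k - X_0 \geq t)$: the sequence $(-X_0, \ldots, -X_k)$ is also a martingale with respect to $(W_0, \ldots, W_k)$ satisfying the same increment bounds $c_l$, so a one-sided tail bound applied to both $X$ and $-X$ together with a union bound accounts for the factor of $2$ in front of the exponential. Writing the martingale differences $D_l := X_l - X_{l-1}$, I have $X_k - X_0 = \sum_{l=1}^{k} D_l$ with $|D_l| \leq c_{l-1}$ and, crucially, $\mathbb{E}[D_l \mid W_0, \ldots, W_{l-1}] = 0$ by the martingale property.

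For any $\lambda > 0$, Markov's inequality applied to the non-negative random variable $e^{\lambda(X_k - X_0)}$ gives
\begin{equation}
P(X_k - X_0 \geq t) \leq e^{-\lambda t}\,\mathbb{E}\bigl[e^{\lambda(X_k - X_0)}\bigr].
\end{equation}
The heart of the argument is to estimate the moment generating function by iterated conditioning. I would condition on $W_0, \ldots, W_{k-1}$, which freezes $D_1, \ldots, D_{k-1}$, and apply Hoeffding's lemma (which states that for any random variable $Y$ with $\mathbb{E}[Y] = 0$ and $|Y| \leq c$ one has $\mathbb{E}[e^{\lambda Y}] \leq e^{\lambda^2 c^2 / 2}$) to the remaining conditional expectation $\mathbb{E}[e^{\lambda D_k}\mid W_0,\ldots,W_{k-1}]$, producing a factor $e^{\lambda^2 c_{k-1}^2 / 2}$. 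Peeling off one martingale increment at a time yields
\begin{equation}
\mathbb{E}\bigl[e^{\lambda(X_k - X_0)}\bigr] \leq \exp\Bigl(\tfrac{\lambda^2}{2}\,\textstyle\sum_l c_l^2\Bigr).
\end{equation}

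Substituting back into the Markov estimate gives $P(X_k - X_0 \geq t) \leq \exp\bigl(-\lambda t + \tfrac{\lambda^2}{2}\sum_l c_l^2\bigr)$, and optimizing over $\lambda > 0$ by choosing $\lambda = t / \sum_l c_l^2$ produces the claimed bound $\exp\bigl(-t^2/(2\sum_l c_l^2)\bigr)$. Combining with the analogous bound for $X_0 - X_k \geq t$ completes the proof. The only nontrivial ingredient is Hoeffding's lemma itself, which is the main obstacle if one insists on a fully self-contained derivation; it follows by showing that the cumulant generating function $\psi(\lambda) := \log \mathbb{E}[e^{\lambda Y}]$ satisfies $\psi(0) = \psi'(0) = 0$ and $\psi''(\lambda) \leq c^2$ (using that the variance of $Y$ under the tilted measure $e^{\lambda Y} / \mathbb{E}[e^{\lambda Y}]$ is at most $c^2$ when $|Y| \leq c$), whence Taylor's theorem gives $\psi(\lambda) \leq \lambda^2 c^2 / 2$.
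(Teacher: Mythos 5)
Your proof is correct, and it is the canonical derivation of the Azuma--Hoeffding inequality: reduce to a one-sided bound via the martingale $-X$, apply Markov's inequality to $e^{\lambda(X_k-X_0)}$, control the moment generating function by peeling off one increment at a time with Hoeffding's lemma, and optimize over $\lambda$. Note, however, that the paper does not prove this lemma at all --- it is quoted as a standard classical result and then used as a black box in the proof of Lemma \ref{lem:poor_man} --- so there is no in-paper argument to compare against; your write-up simply supplies the textbook proof that the authors omit. One cosmetic remark: the statement bounds $|X_{l+1}-X_l|\leq c_l$ for $0\leq l\leq k-1$ while the exponent sums $\sum_{l=1}^{k}c_l^2$; this off-by-one in the indexing is present in the paper's own formulation (and is immaterial in its application, where every $c_l=1$), and your bookkeeping with $|D_l|\leq c_{l-1}$ handles it correctly.
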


Now we can prove Lemma \ref{lem:poor_man}.
\begin{proof}
Define $X_0=0$, $X_l=l(\Lazuma_l-\overline{\Lazuma}_l)$. Let us show that $\{X_i\}_{i=0}^n$ and $\{W_i\}_{i=0}^n$
satisfy the assumptions of Lemma \ref{lem:azuma}. First, 
\be
|X_l-X_{l-1}|=|\overline{B}_l-B_l|\leq 1,
\ee
since $B_l$ is binary and $0\leq\overline{B}_l\leq 1$. Let us now check that 
$\{X_l\}_{l=0}^n$ is a martingale with respect to $\{W_l\}_{i=0}^{n-1}$. We have $\mathbb{E}|X_{l}|\leq l < \infty$.
Moreover, for $l\geq 2$ we have
\ben
\EE(X_l|W_{l-1},\ldots,W_0)&=& \sum_{i=1}^l\EE(B_i|W_{l-1},\ldots,W_0)-
\sum_{i=1}^l \EE(\overline{B}_i |W_{l-1},\ldots,W_0) \\ 
&=& \sum_{i=1}^l \EE(B_i |W_{l-1},\ldots,W_0) -\sum_{i=1}^l \overline{B}_i   \\
&=& \sum_{i=1}^{l-1} \EE(B_i |W_{l-1},\ldots,W_0) - \sum_{i=1}^{l-1} \overline{B}_i 
\label{eq:mart}
\een
where we used the property 
\be
\label{eq:EABC}
\EE(\EE(A|B)|BC)=\EE(A|B)
\ee
for random variables $A,B,C$. 
Now let us note that since $B_i$ is a function of $W_i$, we get  for $i\leq l-1$
\be
\EE(B_i |W_{l-1},\ldots,W_1,W_0)=B_i
\ee
Thus the last line of Eq. \eqref{eq:mart} is  equal to $X_{l-1}$, so that we have 
\be
\EE(X_l|W_{l-1},\ldots,W_1,W_0)=X_{l-1}
\label{eq:mart2}
\ee
for $l=2,\ldots,n$. For $l=1$ one may verify Eq.\eqref{eq:mart2} by checking directly that $\EE(X_1|W_0)=0$,
being hence equal to $X_0$ defined to be zero. \tgr{Indeed, in the latter case $X_1 = B_1 - \overline{B}_1$ where $\overline{B}_1 = \EE(B_1|W_0)$, hence using \eqref{eq:EABC} we get 
$\EE(B_1-\overline{B}_1|W_0)= \EE(B_1|W_0) - \EE(\EE(B_1|W_0)|W_0) = 0$}. Now, we apply Lemma \ref{lem:azuma} with $c_l=1$, and obtain the inequality \eqref{eq:poor_man}.
\end{proof}

We also note the following useful fact. 
\begin{lemma}
\label{lem:linear_fraction}
If the arithmetic average $\overline{\Lazuma}_n$ of $n$ conditional means in Eq.(\ref{aver-cond-mean}) satisfies  $\overline{\Lazuma}_n\leq \delta$ for some parameter $\delta > 0$, then 
in at least $(1-\sqrt{\delta}) n$ of positions $i$ we have $\overline{B}_i\leq \sqrt{\delta}$
\end{lemma}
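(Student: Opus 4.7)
The plan is to derive this as a direct finite-sample Markov-type inequality, using only the nonnegativity of the $\overline{B}_i$ (which follows from $B_i \in \{0,1\}$, so $0 \leq \overline{B}_i \leq 1$) together with the hypothesis $\frac{1}{n}\sum_{i=1}^n \overline{B}_i \leq \delta$.

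Concretely, I would argue by contradiction. Suppose that the number of indices $i \in \{1,\ldots,n\}$ for which $\overline{B}_i > \sqrt{\delta}$ strictly exceeds $\sqrt{\delta}\, n$. Denote this set of ``bad'' indices by $I$, so $|I| > \sqrt{\delta}\, n$. Then, using nonnegativity to drop the contribution from the complement, I would bound
\begin{equation}
\sum_{i=1}^n \overline{B}_i \;\geq\; \sum_{i \in I} \overline{B}_i \;>\; |I| \cdot \sqrt{\delta} \;>\; \sqrt{\delta}\, n \cdot \sqrt{\delta} \;=\; \delta\, n,
\end{equation}
which gives $\overline{\Lazuma}_n > \delta$, contradicting the assumption. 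Hence at most $\sqrt{\delta}\, n$ positions satisfy $\overline{B}_i > \sqrt{\delta}$, so at least $(1-\sqrt{\delta})n$ positions satisfy $\overline{B}_i \leq \sqrt{\delta}$, as claimed.

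There is essentially no obstacle here: the statement is a deterministic pigeonhole/Markov argument on a fixed sequence of nonnegative numbers, and does not invoke the probabilistic structure (the conditional expectation interpretation of the $\overline{B}_i$ is not used beyond the bounds $0 \leq \overline{B}_i \leq 1$). The only mild subtlety is making sure one states the thresholds consistently (strict versus non-strict inequalities), which is handled cleanly by the contradiction formulation above.
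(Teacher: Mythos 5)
Your proposal is correct and is essentially the same Markov/pigeonhole argument the paper gives: the paper directly bounds the size of the set $I = \{i : \overline{B}_i \geq \delta/\gamma\}$ via $|I|\cdot\delta/\gamma \leq n\delta$ and then sets $\gamma = \sqrt{\delta}$, which is your contradiction argument written in the forward direction. The only cosmetic difference is that the paper's direct version concludes with a strict inequality $\overline{B}_i < \sqrt{\delta}$ on the complementary set, whereas your contradiction formulation gives the non-strict $\overline{B}_i \leq \sqrt{\delta}$ exactly as stated in the lemma.
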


\begin{proof}
Assume that $\overline{\Lazuma}_n = \frac{1}{n} \sum_{i=1}^n \overline{B}_i \leq \delta$ with $\overline{B}_i \geq 0 \; \; \forall i$. Consider the set $I :=\{ i | \overline{B}_i \geq \frac{\delta}{\gamma} \}$. Then, $\frac{|I|}{n} \frac{\delta}{\gamma} \leq \delta$ so that $|I| \leq n \gamma$. Choosing $\gamma = \sqrt{\delta}$ we obtain that the fraction $\mu$ of positions $i$  with value $\overline{B}_i < \sqrt{\delta}$ is given by $\mu = (1 - \frac{|I|}{n}) \geq 1 - \sqrt{\delta}$.  
\end{proof}

\subsection{Randomness of individual box from good Bell value} 
\label{subsec:SV-Bell}

In Sec. \ref{subsec:Azuma} we have shown that if the observed Bell value is small, then there is linear 
number of conditional boxes, with small Bell value (with settings chosen from a SV source).
In Sec. \ref{subsec:min-entropy-source} we will show that in order to obtain a min-entropy source we need to ensure that a~constant fraction of the conditional boxes has randomness. 
In this section we will tie up the two observations by arguing that the randomness of a box is ensured if the value of the Bell expression with inputs taken from a SV source is small. 

Let ${U}$ denote all the settings appearing in the Bell expression. We consider first the uniform Bell value
\be
\label{uniform-Bell}
\overline{B}^U := \frac{1}{|\textsl{U}|} \textbf{B}. \{ P(\textbf{x} | \textbf{u}) \} = \frac{1}{|\textsl{U}|} \sum_{\bx,\bu}  B(\bx,\bu) P(\bx|\bu),
\ee
where $|\textsl{U}|$ denotes the cardinality of ${U}$, i.e. the total number of settings in the Bell expression. If the Bell function $B(\bx,\bu)$ is properly chosen, one can prove using linear programming that if $\overline{B}^U$ is small, the probabilities of any outputs are bounded away from 1. However, since our inputs to each device are chosen using a SV source, we will be only able to estimate the value of the following expression
\be
\label{SV-Bell}
\overline{B}^{SV}= \sum_{\bx,\bu} \nu(\bu)  B(\bx,\bu) P(\bx|\bu),
\ee
where $\nu(\bu)$ is the distribution from an (unknown) SV source. We will show that for a suitably chosen Bell function, when the latter expression is small, the former is also small which implies randomness. 

In the following lemma, we prove the relation between SV Bell value and randomness for a~particular Bell inequality given by Eq. \eqref{Bell-ineq}. It says that for SV source of arbitrary $\varepsilon \not= \frac12$,
if the SV Bell value is small enough, the probability of any outcome is bounded away from 1.
\begin{lemma}
\label{lem:B_SV_rand}
Consider a four-partite no-signaling box $P (\textbf{x}| \textbf{u})$ satisfying 
\be
\overline{B}^{SV}\leq \delta,
\ee
where $\overline{B}^{SV}$ is given by Eq. \eqref{SV-Bell} with $B(\bx,\bu)$ given by Eq. \eqref{eq:Bell_indicator}. 
Then, for any measurement setting $\textbf{u}^*$ and any output $\textbf{x}^*$, we have
\begin{equation} \label{SV-output-bound}
P \left( \textbf{x}^* | \textbf{u}^*\right) \leq   \frac13 \left(1+ \frac{2 \delta}{(\frac12 -\epsilon)^4 }  \right).          
\end{equation}\end{lemma}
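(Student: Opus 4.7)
The plan is to reduce the claim to a linear-programming question about no-signaling boxes, in the same spirit as \cite{Acin,our}. First I would convert the SV-averaged Bell value into a bound on the unnormalized sum. The SV condition gives $\nu(\textbf{u}) \geq (\tfrac12-\varepsilon)^4$ for every four-bit string $\textbf{u}$ (one factor per party), and since the indicator $B(\textbf{x},\textbf{u})$ of Eq.~\eqref{eq:Bell_indicator} is nonnegative everywhere, the hypothesis $\overline{B}^{SV}\leq\delta$ implies
\begin{equation}
\sum_{\textbf{x},\textbf{u}} B(\textbf{x},\textbf{u})\,P(\textbf{x}|\textbf{u}) \;\leq\; \frac{\delta}{(\tfrac12-\varepsilon)^4} \;=:\; \delta'.
\end{equation}
So it suffices to prove the pointwise bound $P(\textbf{x}^*|\textbf{u}^*) \leq \tfrac13 + \tfrac{2\delta'}{3}$ under the combined assumptions of no-signaling, normalization, and $\sum B\,P \leq \delta'$.

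Next I would phrase this as a primal linear program: maximize $P(\textbf{x}^*|\textbf{u}^*)$ over the vector of conditional probabilities subject to (i) nonnegativity, (ii) normalization $\sum_{\textbf{x}}P(\textbf{x}|\textbf{u})=1$ for every $\textbf{u}$, (iii) the four-partite no-signaling equalities $P(\textbf{x}^I|\textbf{u})=P(\textbf{x}^I|\textbf{u}^I)$ for every $I\subseteq[4]$, and (iv) the Bell constraint $\sum B\,P \leq \delta'$. The claim is that the LP optimum equals $\tfrac13(1+2\delta')$. I would establish this in two steps. At $\delta'=0$, the algebraic-violation constraints force $P(\textbf{x}|\textbf{u})=0$ on the 16 outcomes picked out by $B$, cutting the no-signaling polytope down to a low-dimensional face; showing that $P(\textbf{x}^*|\textbf{u}^*)\leq \tfrac13$ on this face is the key inequality. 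Once that is in hand, linearity of the LP in $\delta'$ gives the general bound: if $P^{(0)}$ is an optimizer at algebraic violation and $P^{\mathrm{det}}$ is any deterministic box with $P^{\mathrm{det}}(\textbf{x}^*|\textbf{u}^*)=1$ and $\sum B\,P^{\mathrm{det}}=:\beta>0$, then the mixture $(1-\lambda)P^{(0)} + \lambda P^{\mathrm{det}}$ with $\lambda=\delta'/\beta$ is feasible and achieves $\tfrac13 + \lambda(1-\tfrac13) = \tfrac13 + \tfrac{2\delta'}{3\beta}$; a small dual certificate of the form $P(\textbf{x}^*|\textbf{u}^*) \leq \tfrac13 + \tfrac{2}{3}\sum B\,P$ (matched with $\beta=1$ on the extremal boxes) then matches this upper bound.

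The hardest step is establishing the $\tfrac13$ bound at algebraic violation. I would argue by cases on whether $\textbf{u}^*$ lies in $\textsl{U}_0\cup\textsl{U}_1$ or outside it. In the first case, the parity constraint selects eight of the sixteen outcomes, and the no-signaling marginals to the three neighboring settings (obtained by flipping a single input bit, which lands in the complementary set) impose further linear relations among the allowed probabilities; combining them gives a uniform upper bound $\tfrac13$ on any single $P(\textbf{x}^*|\textbf{u}^*)$. In the second case ($\textbf{u}^*\in\{0000,1111,\ldots\}$), the single-party no-signaling marginals coincide with those of settings in $\textsl{U}_0\cup\textsl{U}_1$, so the bound transfers. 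Executing this case analysis cleanly, most likely by writing down an explicit Bell-type dual certificate valid on all no-signaling boxes rather than chasing no-signaling relations by hand, is the main computational obstacle; however, it is a finite LP over 256 variables with a tractable symmetry group (permutations of parties and the $\mathbb{Z}_2$ input-flip that exchanges $\textsl{U}_0\leftrightarrow\textsl{U}_1$), so the optimum can be determined exactly.
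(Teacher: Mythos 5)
Your proposal follows the same two-step route as the paper: (i) use $\nu(\bu)\geq(\tfrac12-\varepsilon)^4$ together with non-negativity of the Bell terms to convert $\overline{B}^{SV}\leq\delta$ into a bound $\textbf{B}.\{P\}\leq\delta/(\tfrac12-\varepsilon)^4$ on the unweighted Bell value, and (ii) invoke the linear-programming bound of Lemma~\ref{lin-prog} to get $P(\bx^*|\bu^*)\leq\tfrac13(1+2\delta')$. Step (i) is identical to Eq.~\eqref{SV-uniform} in the paper (you use only the useful half of that sandwich). For step (ii), the paper simply states Lemma~\ref{lin-prog} and asserts the existence of a dual-feasible point with objective value $(1+2\delta)/3$, so the structure matches.

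Where your sketch goes astray is in the claim that ``linearity of the LP in $\delta'$ gives the general bound.'' The primal mixture $(1-\lambda)P^{(0)}+\lambda P^{\mathrm{det}}$ shows the LP optimum is \emph{at least} $\tfrac13+\tfrac{2\delta'}{3\beta}$; it does not give the required upper bound. Moreover, the LP optimum as a function of the right-hand-side parameter $\delta'$ is \emph{concave}, so the linear interpolation between its values at $\delta'=0$ and $\delta'=1$ is again a lower bound, not an upper bound; concavity could in principle allow the value to exceed $\tfrac13+\tfrac{2\delta'}{3}$ for intermediate $\delta'$. The upper bound is precisely the content of a dual certificate $P(\bx^*|\bu^*)\leq\tfrac13+\tfrac23\,\textbf{B}.\{P\}$ valid over the no-signaling polytope, which you state but do not establish. (The paper leaves the same work implicit, so this is a gap in degree of detail rather than a wrong approach.) One further soft spot: for $\bu^*\notin\textsl{U}_0\cup\textsl{U}_1$, the argument that ``the single-party no-signaling marginals coincide, so the bound transfers'' is too weak as stated --- single-party marginals do not pin down the joint conditional, and one would need the full set of multi-party no-signaling constraints, which is exactly what the LP dual encodes. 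In short: the reduction from SV to uniform is correct and is the paper's argument; the LP part would need the explicit dual vector to be a proof, and the linearity heuristic you invoke is directionally wrong.
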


\begin{proof}
From the definition of an $\varepsilon$-SV source we have 
\begin{equation}
(\frac{1}{2} - \varepsilon)^4 \leq \nu(\bu) \leq (\frac{1}{2} + \varepsilon)^4.
\end{equation}
so that 
\begin{equation}
\label{SV-uniform}
\frac{1}{(\frac{1}{2} + \varepsilon)^4 |\textsl{U}|} \overline{B}^{SV} \leq \overline{B}^{U} \leq \frac{1}{(\frac{1}{2} - \varepsilon)^4 |\textsl{U}|} \overline{B}^{SV}
\end{equation}
Then the claim follows from lemma \ref{lin-prog}, relating  $\overline{B}^{U}$ with 
$P \left( \textbf{x}^* | \textbf{u}^* \right)$ by use of linear programming. 
\end{proof}

\subsubsection{Bounding output probabilities by linear programming}

Let us show that for the specific Bell inequality we consider, when the value of the Bell expression is small there is weak randomness. Consider a four-partite no-signaling box $P(\textbf{x} | \textbf{u})$ that obtains a value $\delta$ for the Bell expression in Eq. \eqref{Bell-ineq}. The following lemma shows that for any measurement setting $\textbf{u}$, the probability of any outcome $\textbf{x}$ is bounded from above by a function of~$\delta$.

\begin{lemma} \label{lin-prog}
Consider a four-partite no-signaling box $P (\textbf{x}| \textbf{u})$ satisfying 
\begin{equation} 
\label{goodindividual}
\overline{B}^U := \frac{1}{|\textsl{U}|} \textbf{B}. \{ P(\textbf{x} | \textbf{u}) \} \leq \frac{\delta}{|\textsl{U}|},
\end{equation}
for some $\delta \geq 0$, with $\textbf{B}$ the indicator vector for the Bell expression in Eq. \eqref{Bell-ineq} and $|\textsl{U}| = 16$ the number of settings in the Bell expression. For any measurement setting $\textbf{u}^*$ and any output $\textbf{x}^*$, we have
\begin{equation} \label{prod-dist}
P \left( \textbf{x}^* | \textbf{u}^*\right) \leq  \frac{1+ 2 \delta}{3}.          
\end{equation}
\end{lemma}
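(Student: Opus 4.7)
The lemma is the optimal value of a linear program: maximize $P(\textbf{x}^*|\textbf{u}^*)$ over the no-signaling polytope of four-party boxes (binary inputs and outputs), subject to the single extra constraint $\sum_{\textbf{x},\textbf{u}} B(\textbf{x},\textbf{u}) P(\textbf{x}|\textbf{u}) \leq \delta$. The plan is to verify the bound $(1+2\delta)/3$ by LP duality. Concretely, one seeks a non-negative multiplier $\beta$ on the Bell constraint and a multiplier $\lambda_{\textbf{u}^*}$ on the normalization of $\textbf{u}^*$, together with appropriately signed multipliers for the no-signaling equalities, such that the induced dual inequality is pointwise feasible for every $(\textbf{x},\textbf{u})$. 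Matching the bound $(1+2\delta)/3$ suggests the choice $\beta = 2/3$ and $\lambda_{\textbf{u}^*} = 1/3$, so the work reduces to verifying dual feasibility case by case.

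I would first cut the number of targets $(\textbf{x}^*,\textbf{u}^*)$ using the symmetries of $B$: invariance under permutations of the four parties and under the involution that flips all four input bits simultaneously (exchanging $\textsl{U}_0$ and $\textsl{U}_1$) together with a parity-preserving flip on the outputs. These symmetries partition the $256$ pairs into only a handful of equivalence classes. Within the reduced list, two regimes appear naturally: (a) $B(\textbf{x}^*,\textbf{u}^*) = 1$, and (b) $B(\textbf{x}^*,\textbf{u}^*) = 0$; the latter subdivides into $\textbf{u}^* \in \textsl{U}_0 \cup \textsl{U}_1$ with the ``wrong'' output parity, and $\textbf{u}^* \notin \textsl{U}_0 \cup \textsl{U}_1$.

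Regime (a) is immediate: since $P(\textbf{x}^*|\textbf{u}^*)$ is a single non-negative summand of the Bell sum, $P(\textbf{x}^*|\textbf{u}^*) \leq \sum_{\textbf{x},\textbf{u}} B(\textbf{x},\textbf{u}) P(\textbf{x}|\textbf{u}) \leq \delta \leq (1+2\delta)/3$ whenever $\delta \leq 1$ (the claim being trivial otherwise). In regime (b) the idea is to write $3\,P(\textbf{x}^*|\textbf{u}^*)$ as a telescoping sum of three no-signaling marginals --- obtained by using no-signaling to transport $P(\textbf{x}^*|\textbf{u}^*)$ to settings inside $\textsl{U}_0 \cup \textsl{U}_1$ --- that together amount to a single normalization (hence sum to at most $1$), plus at most two terms that appear in the Bell expression (hence sum to at most $2\delta$). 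Because every element of $\{0,1\}^4$ is at Hamming distance at most one from some element of $\textsl{U}_0 \cup \textsl{U}_1$ (which consists of the weight-$1$ and weight-$3$ strings), the required no-signaling steps are always available.

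The main obstacle is the bookkeeping in regime (b): for each symmetry representative, one must find the correct triple of marginals and pair of Bell terms whose linear combination equals exactly $3\,P(\textbf{x}^*|\textbf{u}^*)$. The symmetry-reduced list is short, and because each coordinate of any $\textbf{u}^* \notin \textsl{U}_0 \cup \textsl{U}_1$ is flippable into an element of $\textsl{U}_0 \cup \textsl{U}_1$, several candidate triples exist and the successful one can be identified case by case. Once found, the dual certificate with $\beta = 2/3$, $\lambda_{\textbf{u}^*} = 1/3$, and the corresponding no-signaling multipliers encodes exactly this decomposition, certifying the bound \eqref{prod-dist}.
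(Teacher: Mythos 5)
Your approach is the same as the paper's: formulate $\max P(\mathbf{x}^*|\mathbf{u}^*)$ as an LP over the four-party no-signaling polytope with the additional Bell-value constraint, and certify the bound $(1+2\delta)/3$ by exhibiting a feasible dual solution $\lambda_{\mathbf{x}^*,\mathbf{u}^*}$ with $c^T\lambda_{\mathbf{x}^*,\mathbf{u}^*}\le (1+2\delta)/3$. The paper's proof likewise only asserts that such a dual $\lambda$ exists for each $(\mathbf{x}^*,\mathbf{u}^*)$ and does not display it, so you are at essentially the same level of detail; your regime-(a) observation and the dual weights $\beta=2/3$, $\lambda_{\mathbf{u}^*}=1/3$ are correct structural observations.

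Two cautions on the regime-(b) heuristic that you should resolve before treating the plan as a proof. First, your Hamming-distance remark does not cover the subcase $\mathbf{u}^*\in\textsl{U}_0\cup\textsl{U}_1$ with wrong-parity $\mathbf{x}^*$: every single-flip neighbour of a weight-$1$ or weight-$3$ string has weight $0$, $2$, or $4$, hence leaves $\textsl{U}_0\cup\textsl{U}_1$ entirely, so the ``one-step transport into the Bell-constrained settings'' you invoke is simply unavailable there and a longer no-signaling chain would be required. Second, the accounting ``three marginals forming one normalization $\le 1$ plus two Bell terms summing to $\le 2\delta$'' is internally odd: any two distinct summands of the Bell expression already sum to at most $\delta$ (since all Bell summands are non-negative and their total is $\le\delta$), which would give the stronger bound $(1+\delta)/3$; obtaining the slope $2/3$ of the stated constant really corresponds to the dual placing weight $2/3$ on the full Bell constraint, not to two disjoint Bell summands entering once each. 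Neither of these is fatal to the LP-duality strategy, but they indicate the decomposition as literally described is not the right one, and you would still have to produce the actual dual certificates case by case (just as the paper implicitly does).
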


\begin{proof}
Consider any measurement setting $\textbf{u}^*$ and any corresponding output $\textbf{x}^*$ for this setting. 
Then $P \left( \textbf{x}^* | \textbf{u}^* \right)$ can be computed by the following linear program
\begin{eqnarray}
\label{lin-prog1}
P \left( \textbf{x}^* | \textbf{u}^* \right) &=& \max_{ \{ P \}}: \textit{M}_{\textbf{x}^*, \textbf{u}^*}^T \{ P(\textbf{x} | \textbf{u}) \} \nonumber \\
&&s.t. \; \; \textit{A} \{ P( \textbf{x} | \textbf{u}) \} \leq \textit{c}.
\end{eqnarray}
Here, the indicator vector $\textit{M}_{\textbf{x}^*, \textbf{u}^*}$ is a $2^4 \times 2^4$ element vector with entries 
$M_{\textbf{x}^*, \textbf{u}^*}(\textbf{x}, \textbf{u}) = \texttt{I}_{\textbf{x} = \textbf{x}^*} \texttt{I}_{\textbf{u} = \textbf{u}^*}$. 
The constraint on the box $\{P(\textbf{x} | \textbf{u})\}$ written as a vector with $2^4 \times 2^4$ entries is given by the matrix $\textit{A}$ and the vector $\textit{c}$. These encode the no-signaling constraints between the four parties, the normalization and the positivity constraints on the probabilities $P(\textbf{x} | \textbf{u})$. In addition, $\textit{A}$ and $\textit{c}$ also encode the condition that $\textbf{B}.\{ P(\textbf{x}| \textbf{u}) \} \leq \delta$ with $\delta$ the bound on the Bell value for the box. Analogous programs can be formulated for each of the $2^4$ measurement settings appearing in the Bell inequality in Eq. (\ref{Bell-ineq}) and each of the $2^4$ corresponding outputs.

The solution to the primal linear program in Eq. (\ref{lin-prog1}) can be bounded by any feasible solution to the dual program which is written as
\begin{eqnarray}
\label{dual-lin-prog1}
&&\min_{ \lambda_{\textbf{x}^*, \textbf{u}^*}}: \textit{c}^T \lambda_{\textbf{x}^*, \textbf{u}^*} \nonumber \\
&& s.t. \; \; \; \textit{A}^T  \lambda_{\textbf{x}^*, \textbf{u}^*} = \textit{M}_{\textbf{x}^*, \textbf{u}^*}, \nonumber \\
&&\; \; \; \; \; \; \; \;  \lambda_{\textbf{x}^*, \textbf{u}^*} \geq 0.
\end{eqnarray}
For each $\{\textbf{u}^*, \textbf{x}^*\}$, we find a feasible $\lambda_{\textbf{x}^*, \textbf{u}^*}$ satisfying the constraints to the dual program above that gives $\textit{c}^T \lambda_{\textbf{x}^*, \textbf{u}^*} \leq \left( \frac{1 + 2 \delta}{3} \right)$. We therefore obtain by the duality theorem of linear programming that
\begin{equation}
P \left( \textbf{x}^* | \textbf{u}^* \right) \leq  \left( \frac{1+ 2 \delta}{3}  \right),
\end{equation}
which is the required bound.
\end{proof}

\subsection{A min-entropy source from randomness of conditional boxes}
\label{subsec:min-entropy-source}

In this section we show that if a device is such that a linear number of conditional boxes have randomness (in the weak sense that the probability of their outputs is bounded away from one), then the distribution on outputs constitutes a min-entropy source. The considerations in this section will be applicable to any of the devices $j \in [k]$ and any chosen block. Therefore we will skip the indices for simplicity. 

Let any sequence $(\xuseq)$ be such that $x_i$ and $u_i$, $i\in \{1,\ldots,n\}$, are of the form of $\bx=(\bx^1,\ldots,\bx^4)$ and $\bu=(\bu^1,\ldots,\bu^4)$, respectively, both introduced in Subsection II B. We will show that if, with large probability over sequences $(\xuseq)$, a constant fraction of those boxes has, for any setting, probability of every output bounded away from $1$, then the total probability distribution is close in variational distance to a min-entropy source (see \cite{Pironio, Pironio13, Fehr11} for a similar result in the context of randomness expansion).

We first prove that, if this happens for {\it all} sequences (i.e. with probability $1$), 
then the total box is a min-entropy source itself and subsequently consider the case when the probability is close to~$1$. 

\begin{lemma}
\label{lem:min-entropy-0}
Fix any measure $P$ on the space of sequences $(\xuseq)$. 
Suppose that for a sequence $(\xuseq)$, there exists $\texttt{K} \subseteq [n]$ of size larger than $\mu n$, such that for all $l \in \texttt{K}$ the conditional boxes $\pxu$ satisfy 
\be
\pxu\leq \gamma,
\label{eq:random-box2}.
\ee
Then, 
$\pxutot$ satisfies 
\begin{equation}
\pxutot\leq \gamma^{\mu n}
\label{eq:hmin_cond}
\end{equation}
\end{lemma}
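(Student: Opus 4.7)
My approach would be to decompose the joint conditional $P(x_1,\ldots,x_n|u_1,\ldots,u_n)$ via the chain rule of conditional probability and then bound each factor using the hypothesis. Concretely, the plan is to write
\[
\pxutot \;=\; \prod_{l=1}^{n} P(x_l \mid x_1,\ldots,x_{l-1},\, u_1,\ldots,u_n)
\;=\; \prod_{l=1}^{n} P(x_l \mid x_{<l},\, u_{\leq l}),
\]
where the second equality follows from the time-ordered no-signaling condition \eqref{eq:tons}, which guarantees that the marginal of $x_l$ does not depend on future inputs $u_{l+1},\ldots,u_n$. This step is crucial: without \texttt{tons}, the naive chain rule would condition on \emph{all} inputs $u_1,\ldots,u_n$, and there would be no reason for that conditional to coincide with the quantity $P_{x_{<l}, u_{<l}}(x_l \mid u_l)$ appearing in the hypothesis.

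Next, I would identify the factors appearing in the product with the conditional boxes of the hypothesis, namely $P_{x_{<l}, u_{<l}}(x_l \mid u_l) = P(x_l \mid x_{<l}, u_{\leq l})$, and split the product according to whether the index $l$ belongs to the distinguished subset $\texttt{K}$ or not:
\[
\pxutot \;=\; \prod_{l \in \texttt{K}} P_{x_{<l}, u_{<l}}(x_l \mid u_l) \;\cdot\; \prod_{l \notin \texttt{K}} P_{x_{<l}, u_{<l}}(x_l \mid u_l).
\]
For $l \in \texttt{K}$ the hypothesis \eqref{eq:random-box2} bounds each factor by $\gamma$, while for $l \notin \texttt{K}$ each factor is trivially bounded by $1$. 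Since $|\texttt{K}| \geq \mu n$ and $\gamma \leq 1$, the product is bounded by $\gamma^{|\texttt{K}|} \leq \gamma^{\mu n}$, which is exactly \eqref{eq:hmin_cond}.

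I do not anticipate any significant obstacle; the lemma is essentially a deterministic bookkeeping statement once one invokes \texttt{tons} to justify the chain-rule factorization in the correct form. The only point that requires a moment of care is making explicit that the conditioning on the sequence $(x_{<l}, u_{<l})$ is consistent with the notation $P_{x_{<l}, u_{<l}}(x_l \mid u_l)$ used for the conditional boxes, so that the hypothesis can be applied factorwise. No probabilistic inequalities or limit arguments are needed here; the probabilistic content of the randomness-amplification pipeline is packaged separately in the Azuma estimate (Lemma \ref{lem:poor_man}) and in the later step that upgrades this deterministic pointwise bound into a statement about a min-entropy source when the favourable event holds with probability close to $1$.
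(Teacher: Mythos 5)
Your proof is correct and takes essentially the same route as the paper: a chain-rule (successive Bayes rule) factorization justified by the time-ordered no-signaling condition, followed by bounding the $|\texttt{K}|\geq\mu n$ factors in $\texttt{K}$ by $\gamma$ and the rest trivially by $1$. If anything, you are slightly more explicit than the paper's terse proof in flagging exactly where \texttt{tons} is needed to drop the conditioning on future inputs $u_{>l}$.
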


\begin{proof}
The proof proceeds by successive application of the Bayes rule and the time-ordered no-signaling structure, i.e. 
\ben
\pxutot&=&P(x_1|u_1)P(x_2|u_2, x_1,u_1)\nonumber \\
&&\ldots P(x_n|u_n, x_{n-1},u_{n-1},\ldots,x_1,u_1),
\een
where we have used the fact that the outputs of the $l$-th box can depend only upon the inputs and outputs of the previous boxes due to the time-ordered structure of the boxes (see Eq. (\ref{eq:tons})). Now, due to the assumption that at least 
$\mu n$ of the conditional boxes $P_{{x}_{< l}, {u}_{< l}}({x}_l |{u}_l)$ 
satisfy Eq. \eqref{eq:random-box2}, we have that 
\begin{equation}
\pxutot \leq \gamma^{\mu n}.
\end{equation}
\end{proof}

\subsection{Imposing independence between devices by a de Finetti bound with limited randomness}
\label{deFinetti-section}
Consider two devices, the first consisting of $n$ boxes and the second consisting of $N_2$ blocks of $n$ boxes each. In this section, we show that, for suitable choice of $N_2$, the boxes from the first device are close to being uncorrelated with the boxes in a block chosen from the second device using an $\varepsilon$-SV source. The lemma is based on the information-theoretic approach of \cite{Brandao, Brandao2} for proving de Finetti theorems for quantum states and no-signaling distributions. 

We denote the box by $P({X}^1, {X}_{\leq N_2}^2 | {U}^1, {U}_{\leq N_2}^2)$, where the superscript denotes the device and the subscript denotes the block of uses of the device. Capital letters denote the inputs and outputs for a set of $n$ boxes so that ${X}^1 = ({x}^1_1, \ldots,{x}^1_{n})$ and ${X}^2_{\leq N_2} = ({x}^2_{1,1}, \ldots,{x}^2_{n,1}, \ldots,{x}^2_{1, N_2}, \ldots, {x}^2_{n, N_2})$ with the second subscript denoting the block. Note that any $x^1_k$ or $x^2_{k,l}$ for $k\in\{1,\ldots,n\}$, $l\in\{1,\ldots,N_2\}$ are of the form $\bx=(\bx^1,\ldots,\bx^4)$ introduced in Sec. II B. Similarly $u^1_k$ and $u^2_{k,l}$, $k\in\{1,\ldots,n\}$ for any $l\in\{1,\ldots,N_2\}$
 are of the form  $\bu=(\bu^1,\ldots,\bu^4)$.

\begin{lemma}
\label{lem:deFinetti-bound1}
Let $P({X}^1, {X}_{\leq N_2}^2 | {U}^1, {U}_{\leq N_2}^2)$ 
satisfy the no-signaling conditions, i.e. Eqs. (\ref{eq:as-nosig})-(\ref{eq:tons}),
with output and input alphabets $\Sigma$ and $\Lambda$, respectively (i.e. $P : \Sigma^{\times (N_2 + 1)n} \times \Lambda^{\times (N_2 + 1)n} \rightarrow \mathbb{R}^+$). The distribution $P$ represents two devices with the first containing $n$ boxes and the second $N_2$ blocks of $n$ boxes each. Let $A_2 \in [N_2]$ and $({U}^1, {U}_{\leq N_2}^2)$ be chosen from an $\varepsilon$-SV source\tgr{; we write $\nu(j, {U}^1,{U}^2_{\leq N_2})$ and the distribution $\nu$ satisfies condition (\ref{SVdef})}. Then, we have
\begin{eqnarray}
\label{deFinetti}
&&\mathop{\mathbb{E}}_{ (j,{U}^1, {U}^2_{\leq N_2}) \sim \nu} \hspace{0.1 cm}  \mathop{\mathbb{E}}_{{X}^{2}_{< j} \sim P(.|{U}^1,{U}^2_{\leq N_2} )} \left \Vert \tilde{P}({X}^1,{X}^{2}_{j} |{U}^1, {U}^{2}_{j}) - \tilde{P}({X}^1 | {U}^1) \otimes \tilde{P}({X}^{2}_{j} | {U}^2_{j}) \right \Vert \nonumber \\
&& \hspace{6cm} \leq \sqrt{2 \ln{(2)} N_2^{\log{(1 + 2 \varepsilon})} \frac{n \log{|\Sigma|}}{N_2}},
\end{eqnarray}
where $\tilde{P}$ is the conditional box given the inputs ${U}_{< j}^2$ and outputs $\textbf{X}_{< j}^2$ of all prior boxes to the ones in the $j$-th block, i.e.
\begin{equation}
\label{cond-box-a}
\tilde{P}({X}^1, {X}^2_{j} | {U}^1, {U}^2_{j}) := P_
{{X}_{< j}^2, {U}_{< j}^2} ({X}^1, {X}^2_{j} |{U}^1, {U}^2_{j}).
\end{equation}
\end{lemma}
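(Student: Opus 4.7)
The plan is to follow the information-theoretic route to a de Finetti theorem (in the spirit of \cite{Brandao, Brandao2}), modifying only the averaging step to handle the weighting by the $\varepsilon$-SV distribution $\nu$ instead of a uniform one. First, for each fixed choice of inputs $U = (U^1, U^2_{\leq N_2})$ and each fixed prefix of outputs $X^2_{<j}$, I apply Pinsker's inequality to the joint distribution $\tilde P(X^1, X^2_j | U^1, U^2_j)$ versus the product of its marginals $\tilde P(X^1|U^1) \otimes \tilde P(X^2_j|U^2_j)$, which are well defined by the no-signaling conditions (\ref{eq:as-nosig})--(\ref{eq:tons}). This yields, pointwise in $(j, U, X^2_{<j})$, a bound by $\sqrt{2\ln 2 \cdot D(X^2_{<j})}$ where $D(X^2_{<j})$ is the Kullback--Leibler divergence of the joint from the product. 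Taking expectation over $X^2_{<j} \sim \tilde P(\cdot | U)$ and using Jensen's inequality (concavity of $\sqrt{\cdot}$) to pull the square root outside, the expected KL divergence becomes a conditional mutual information, giving
\begin{equation*}
\mathop{\mathbb{E}}_{X^2_{<j}} \big\Vert \tilde P(X^1, X^2_j|U^1, U^2_j) - \tilde P(X^1|U^1) \otimes \tilde P(X^2_j|U^2_j) \big\Vert \leq \sqrt{2\ln 2 \cdot I_j(U)},
\end{equation*}
with $I_j(U) := I(X^1 : X^2_j \mid X^2_{<j}, U^1, U^2_{\leq N_2})$ computed under the fixed inputs $U$.

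Next, I take the expectation over $(j, U) \sim \nu$ and apply Jensen once more, reducing the task to bounding $\mathbb{E}_{(j,U)\sim \nu} I_j(U)$. The decisive observation is that, since $\nu$ is $\varepsilon$-SV on the entire bit string and the bits encoding the inputs $U$ are read out before the block index $j$ (as in Protocol II, where the inputs are consumed first and only then a block is selected), successive application of the SV bound (\ref{SVdef}) to the $\log_2 N_2$ bits encoding $j$ yields the pointwise estimate
\begin{equation*}
\nu(j \mid U) \leq \left(\tfrac{1}{2}+\varepsilon\right)^{\log_2 N_2} = \frac{N_2^{\log(1+2\varepsilon)}}{N_2}.
\end{equation*}
Combining this with the chain rule for conditional mutual information and the entropy bound $H(X^1|U^1) \leq n \log|\Sigma|$,
\begin{equation*}
\sum_{j=1}^{N_2} I_j(U) = I\!\left(X^1 : X^2_{\leq N_2} \mid U^1, U^2_{\leq N_2}\right) \leq H(X^1 \mid U^1) \leq n \log|\Sigma|,
\end{equation*}
valid for every fixed $U$, one obtains $\mathbb{E}_{(j,U)\sim\nu} I_j(U) \leq \frac{N_2^{\log(1+2\varepsilon)}}{N_2} \cdot n \log|\Sigma|$, and substitution into the Pinsker bound reproduces the right-hand side of (\ref{deFinetti}).

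The main obstacle I anticipate is a careful bookkeeping of conditional distributions in the no-signaling setting: one must verify that the conditional box $\tilde P$ defined in (\ref{cond-box-a}) is itself no-signaling between device $1$ and the block of device $2$ indexed by $j$, so that marginalizing $\tilde P(X^1, X^2_j|U^1, U^2_j)$ over $X^2_j$ indeed gives $\tilde P(X^1|U^1)$ independent of $U^2_j$, and that the chain rule decomposition of $I(X^1 : X^2_{\leq N_2}|U)$ into the per-block terms $I_j(U)$ is internally consistent. Both statements follow from the time-ordered no-signaling condition (\ref{eq:tons}) between blocks of device $2$ and from no-signaling between the two devices; once these are in hand, the SV nature of $\nu$ enters only through the single correction factor $N_2^{\log(1+2\varepsilon)}/N_2$ in place of the $1/N_2$ one would obtain from a uniformly chosen block.
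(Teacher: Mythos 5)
Your proposal is correct and follows essentially the same route as the paper: bound $\nu(j\mid U)\le(\tfrac12+\varepsilon)^{\log_2 N_2}$, use the chain rule for mutual information together with $I(X^1:X^2_{\le N_2}\mid U)\le n\log|\Sigma|$, and convert mutual information to trace distance via Pinsker plus a convexity/Jensen step. The only cosmetic difference is the order of operations — you apply Pinsker pointwise and then Jensen twice, while the paper first bounds the averaged conditional mutual information and invokes Pinsker and convexity of $x\mapsto x^2$ at the very end — but the ingredients and the resulting estimate are identical.
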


\begin{proof}
Using the upper bound on mutual information $I(A:B) \leq \min (\log{|A|}, \log{|B|})$ and the chain rule $I(A: B C) = I(A:B) + I(A:C | B)$, we have that for every distribution \tgr{$\nu$}
\begin{eqnarray}
n \log |\Sigma| &\geq &  \mathop{\mathbb{E}}_{{U}^1, {U}^2_{\leq N_2} \sim \tgr{\nu}} I({X}^1 : {X}^2_{\leq N_2})_{\tgr{P(\cdot|{U}^1, {U}^2_{\leq N_2})}} \nonumber \\
&=& \mathop{\mathbb{E}}_{{U}^1, {U}^2_{\leq N_2} \sim \tgr{\nu}} 
\left( I({X}^1 : {X}^2_1)_{\tgr{P(\cdot|{U}^1, {U}^2_{\leq N_2})}} 
+ \ldots + I({X}^1 : {X}^2_{N_2} | {X}^2_{< N_2})_{\tgr{P(\cdot|{U}^1, {U}^2_{\leq N_2})}} \right) \nonumber \\
&=&  \mathop{\mathbb{E}}_{{U}^1,{U}^2_{\leq N_2} \sim \tgr{\nu}} \mathbb{E}_{j \sim U(N_2)} I({X}^1 : {X}^2_{j} | {X}^2_{1}, \ldots, {X}^2_{j-1})_{\tgr{P(\cdot|{U}^1, {U}^2_{\leq N_2})}}\nonumber \\
&=&  \mathop{\mathbb{E}}_{{U}^1,{U}^2_{\leq N_2} \sim \tgr{\nu}} \mathbb{E}_{j \sim U(N_2)} I({X}^1 : {X}^2_{j} )_{\tgr{\tilde{P}(\cdot|{U}^1, {U}^2_j)}},
\end{eqnarray}
where $U(N_2)$ is the uniform distribution over the set $[N_2]$.

Therefore, if $j, {U}^1, {U}^2_{\leq N_2}$ are chosen from an $\varepsilon$-SV source $\nu$, we find 
\begin{equation} \label{mutual-k}
\begin{aligned}
\mathop{\mathbb{E}}_{(j, {U}^1, {U}^2_{\leq N_2}) \sim \nu}  
\mathop{\mathbb{E}}_{{X}^1, {X}^{2}_{< j} \sim P} &I({X}^1 : {X}^{2}_{A_2})_{\tgr{\tilde{P}(\cdot|U^1,U^2_j)}}\\
&=  \mathop{\mathbb{E}}_{({U}^1, {U}^2_{\leq N_2}) \sim \nu}  
\mathop{\mathbb{E}}_{j \sim \nu(.|{U}^1,{U}^2_{\leq N_2})}  
\mathbb{E}_{{X}^1, {X}^{2}_{< j} \sim P} I({X}^1 : {X}^{2}_{A_2})_{\tgr{\tilde{P}(\cdot|U^1,U^2_j)}}   \\  
&\leq  \left( \frac{1}{2} + \varepsilon \right)^{\log{N_2}}  
\mathop{\mathbb{E}}_{({U}^1, {U}^2_{\leq N_2}) \sim \nu} 
\mathop{\mathbb{E}}_{j \sim U(N_2)}  
\mathop{\mathbb{E}}_{{X}^1, {X}^{2}_{< j} \sim P} I({X}^1 : {X}^{2}_{j})_{\tgr{\tilde{P}(\cdot|U^1,U^2_j)}}  \\  
&\leq \left( \frac{1}{2} + \varepsilon \right)^{\log{N_2}} n \log |\Sigma|. 
\end{aligned}
\end{equation}
We now use Pinsker's inequality  relating the mutual information and trace distance 
for any measure $Q$ as 
\begin{equation}
\label{Pinsker}
I(A : B)_Q \geq \frac{1}{2 \ln{(2)}} \left \Vert Q(A,B) - Q(A) \otimes Q(B) \right \Vert^2
\end{equation}
with $Q:=\tilde{P}(\cdot|U^1,U^2_j)$, 
and the convexity of $x \mapsto x^2$ to obtain Eq. (\ref{deFinetti}).
\end{proof}



\section{Proof of correctness of Protocol I}
\label{sec:protocol1}

In Protocol I, the honest parties and Eve share a no-signaling box $P(x,z|u',w)$, where $(x, u')$ denotes the outputs and inputs of all the honest parties for the $n$ runs of the protocol and $(z, w)$ denotes the output and input of the adversary Eve. The honest parties obtain bits $u$ from the SV source that will serve as inputs to their box, so $u'$ will be set to be equal to $u$. They also draw further bits $t$ from the SV source to feed together with $x$ into the randomness extractor, obtaining the final 
output of the protocol 
\be
\label{eq:s_protocolI}
s=s(x,t).
\ee
Eve has classical information $e$ which are bits correlated to $u$, $t$. The initial box, describing all initial variables and inputs is given by
\be
p(x,z,u,t,e| u', w)
\ee
which is a family of probability distributions labelled by $u'$ and $w$
(we also denote them by $p_{u', w}(x,z,u,t,e)$). 
The described setup is illustrated in Fig. \ref{fig:protocol1}. 
\begin{figure}
\scalebox{0.40}{\includegraphics{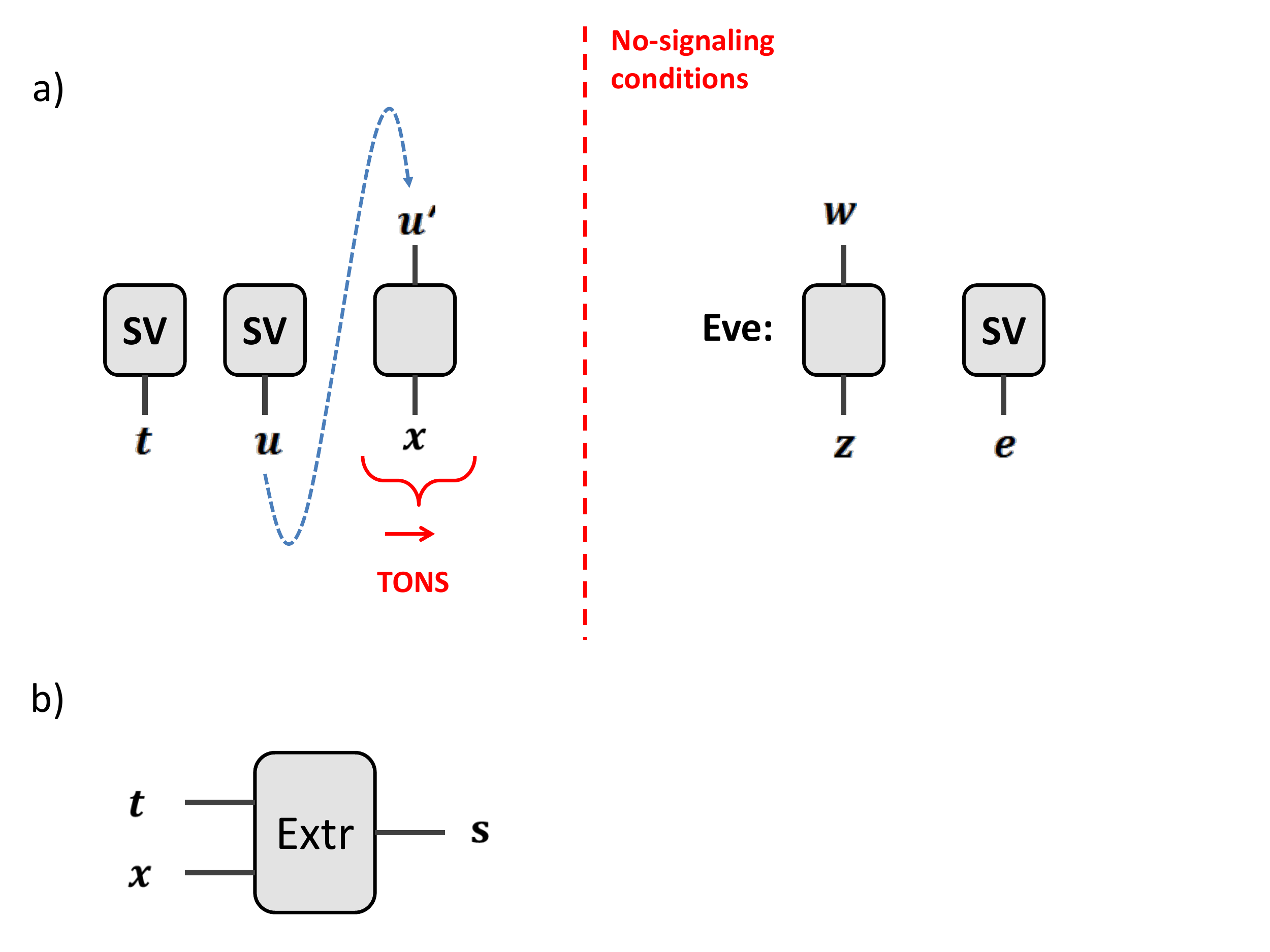}}
\caption{The setup for protocol I.} 
\label{fig:protocol1}
\end{figure}
In notation from Sec. \ref{subsec:min-entropy-source}, we have $x=(x_1,\ldots,x_n)$, $u=(u_1,\ldots, u_n)$ where every $x_k$ and $u_k$ for $k\in\{1,\ldots,n\}$ is of the form $\bx=(\bx^1,\bx^2,\bx^3,\bx^4)$ and $\bu=(\bu^1,\bu^2,\bu^3,\bu^4)$, respectively.

Now, referring to notation in Sec. \ref{sec:assumptions},  for the  present protocol, 
we have $\svin=u$, $\svhash=t$, $\sve=e$, $\about=x$, $\abin=u'$, $\ein=w$, $\eout=z$. 
The assumptions of Sec. \ref{sec:assumptions} \tgr{thus read} as follows.

\vspace{0.2 cm}

\begin{itemize}

\item {\bf No-signaling assumptions:}
\tgr{\ben 
\label{eq:NS_I1}
p(x|u',w) &=& p(x|u'),  \\
\label{eq:NS_I2}
p(z|u',w) &=& p(z|w),  \\
\label{eq:NS_Itons}
p(x_{j}|u',w,z,u,t,e) &=& p(x_{j} | u'_{\leq j},w,z,u,t,e) \; \; \; \forall j \in [n].
\een }

\item {\bf Assumption A1-(I):} The devices do not signal to the SV source, 
i.e. the distribution of $(u, t, e)$ is independent of the inputs $(u', w)$:
\be
\sum_{x,z} p(x,z,u,t,e|u',w) = p(u,t,e) \; \; \; \forall{(u,t,e,u',w)}.
\label{eq:assumption1}
\ee

\item {\bf Assumption A2-(I):} The box is fixed independently of the SV source:
\be
p(x,z|u',w,u,t,e) = p(x,z|u',w) \; \; \; \forall{(x,z,u,t,e,u',w)}.
\label{eq:assumption2}
\ee

\item {\bf SV conditions:} 
\tgr{The distribution $p(u,t,e)$ satisfies an SV condition (\ref{SVdef}); in particular,
$p(t|u,e)$ satisfies Eq. (\ref{SVdef}) too.}
\end{itemize}
After inputting $u$ as $u'$  (as is done in the protocol) we obtain 
\be
\label{eq:p_w}
p_w(x,z, u, t, e) := p(x,z,u, t, e|u, w).
\ee
We note that, \tgr{due to Assmuption A1-(I)} (i.e. no-signaling from input $u'$ to the variable $u$), it is a~normalized probability distribution for every $w$. 

Now, for
\be
L(x,u)= \frac1n \sum_{i=1}^n B_i(x_i,u_i),
\ee
we define the sets $\ACCd$ and $\ACCd_u$ for acceptance of the protocol as follows
\be
\ACCd = \{(x,u): L(x,u) \leq \delta \},
\ee
and 
\be
\ACCd_u = \{ x : (x,u) \in \ACCd\}.
\ee
Upon acceptance of the protocol, the family of probability distributions \eqref{eq:p_w} is modified to $p_w(x,z,u,t,e|\ACCd)$.

\tgr{To quantify the quality of the output $s$}, we will use the universally composable distance defined in  \eqref{eq:true_dc}
which in this case reads as
\begin{equation}
\label{eq:dcintro_I}
\dcintro =  \sum_{s,e} \max_w \sum_{z} \left| p_w(s,z,e|\ACCd) - \frac{1}{|S|} p_w(z,e|\ACCd) \right|. 
\end{equation}
with $s$ given by \eqref{eq:s_protocolI}.
Here the probability distributions  $p_w(s,z,e|\ACCd)=\sum_u p_w(s,z,u,e|\ACCd)$ are computed from probability distributions \eqref{eq:p_w}.
Actually, in the proofs we will deal with slightly modified distance 
\begin{equation}
\label{eq:dist-w}
\dcours :=  \sum_{s,e} \max_w \sum_{z,u} \left| p_w(s,z,u,e|\ACCd) - \frac{1}{|S|} p_w(z,u,e|\ACCd) \right|. 
\end{equation}
By triangle inequality, we have 
\be
\label{eq:dintro_dours}
\dcintro\leq \dcours,
\ee
hence it is enough to bound $\dcours$.

We now define an auxiliary quantity
\be
\label{eq:dist-w}
\dwithw := \sum_{e} p(e|\ACCd) \max_w \sum_{z,u} p_w(z,u|e,\ACCd)  \sum_s \left| p_w(s|z,u,e,\ACCd) - \frac{1}{|S|}\right|
\ee
for any family of probability distributions $\{p_w(x,z,u,t,e)\}$.

\tgr{\begin{remark}\label{rem:implication}
In the last two sections we many times use the following, easy to prove, implication
\be\label{implication}
P(a|b,c,d)=P(a|b)\quad\Rightarrow\quad P(a|b,c,d)=P(a|b,c)=P(a|b,d).
\ee
where $P$ is an arbitrary probability measure. 
\end{remark}}
\tgr{From Assumption A1-(I) and A2-(I), as well as no-signaling assumptions (Eqs. (\ref{eq:NS_I1})-(\ref{eq:NS_Itons})),
we find that the distributions $\{p_w(x,z,u,t,e)\}$ satisfy:
\ben
\label{eq:p-cond1}
&&p_w(x,u)=p(x,u) \quad (\text{implied by Eq. (\ref{eq:NS_I1})} \; \text{and A1-(I)}),  \\
\label{eq:p-cond2}
&& p_w (u,t,e)=p(u,t,e)  \quad (\text{follows from A1-(I)}), \\
\label{eq:p-cond3}
&& \forall_w \hspace{0.1 cm} p_w(x,z|u,t,e)= p_w(x,z|u) \quad (\text{follows directly from A2-(I)}), \\
\label{eq:p-cond4}
&& \forall_w \hspace{0.1 cm} p_w(x,z|u,t,e)= p_w(x,z|u,e) \quad (\text{follows from A2-(I) and 
Rem. \ref{rem:implication}}), \\
\label{eq:p-cond5}
&& p_w(x|z,u,t,e) = p_{z,t,e,w}(x|u) \; \; \; \text{is time-ordered no-signaling box (by (\ref{eq:NS_Itons}))}, \\
\label{eq:p-cond6}
&& p_w(u|z,e) \; \text{and} \; p_w(t|z,u,e) \; \; \; \text{are SV sources} \; (\text{from A2-(I) and SV conditions above}).
\een}

\tgr{\begin{remark}\label{SV_proof}
To be precise, the proof of property (\ref{eq:p-cond6}) goes as follows. We know that $p(u,t,e)$ is an SV source. Note that
\be 
p_w(u|z)=\frac{p_w(u,z)}{p_w(z)}=\frac{\sum_{t,e}p_w(z,u,t,e)}{p_w(z)}
\stackrel{\text{A2-(II) \& Eq. (\ref{eq:NS_I2})}}{=}\frac{p_w(z)\sum_{t,e}p_w(u,t,e)}{p_w(z)}
\stackrel{\text{A1-(I)}}{=}\sum_{t,e}p(u,t,e)
\ee
and hence $p_w(u|z)$ is an SV source. Further, we also obtain
\be 
p_w(t|z,u,e)=\frac{p_w(z,u,t,e)}{\sum_t p_w(z,u,t,e)}
\stackrel{\text{A2-(I) \& Eq. (\ref{eq:NS_I2})}}{=}
\frac{p_w(z)p_w(u,t,e)}{p_w(z)\sum_t p_w(u,t,e)}=p_w(t|u,e)
\ee
and therefore, by the fact that $p(t|u,e)$ satisfies an SV condition (\ref{SVdef}), the distribution $p_w(t|z,u,e)$ also is an SV source.
\end{remark}}
For each $e$, let $w_e$ denote Eve's input $w$ and let $p_{w_{e}}(x,z,u,t|e)$ denote the corresponding distribution that achieves the maximum in Eq.(\ref{eq:dist-w}). Using the fact that $p_{w}(e) = p(e)$ and that $\ACCd$ is a~set of $(x,u)$ which obey $p_{w}(x,u) = p(x,u)$ 
(from Eq.(\ref{eq:p-cond1})), we see that the distribution that achieves the maximum takes the form $p(e) p_{w_{e}}(x,z,u,t|e)$. 
We now set 
\be
\label{eq:qvspw}
q(x,z,u,t,e) := p(e) p_{w_{e}}(x,z,u,t|e). 
\ee
It can be readily seen that this $q(x,z,u,t,e)$ obeys the restrictions: 
\ben
\label{eq:q-cond1}
&&q(x,z|u,t,e)= q(x,z|u), \\
\label{eq:q-cond2}
&&q(x,z|u,t,e)= q(x,z|u,e), \\
\label{eq:q-cond3}
&&q(x|z,u,t,e)=q_{t,e,z}(x|u) \; \; \; \text{is time ordered no-signaling box}, \\
\label{eq:q-cond4}
&&q(u|z,e) \; \text{and} \; q(t|z,u,e) \; \; \; \text{are SV sources}.
\een

We can therefore define 
\be
\label{eq:dist-no-w}
\dnow :=  \sum_{z,u,e} q(z,u,e|\ACCd) \sum_s \left| q(s|z,u,e,\ACCd) - \frac{1}{|S|}\right| 
\ee
and observe that 
\be
\label{eq:dnow_dwithw}
\dnow = \dwithw.
\ee 
We now have
\begin{prop}
\label{lem:dc-dmax}
For any distribution $p_w(x,z,u,t,e)$ given by (\ref{eq:p_w}) we have
\be
\dcintro \leq |S| \dnow.
\ee
\end{prop}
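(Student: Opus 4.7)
The plan is to chain three observations: the triangle-inequality bound $\dcintro \leq \dcours$ recorded in Eq.~(\ref{eq:dintro_dours}), the identity $\dwithw = \dnow$ recorded in Eq.~(\ref{eq:dnow_dwithw}), and a middle inequality $\dcours \leq |S| \dwithw$, which is the only real content of the proposition.

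To prove the middle inequality I would first rewrite, via the conditional chain rule,
\[
\dcours = \sum_{s,e} \max_w \sum_{z,u} p_w(z,u,e|\ACCd) \left| p_w(s|z,u,e,\ACCd) - \frac{1}{|S|} \right|.
\]
Next, interchange the outer $\sum_s$ with the $\max_w$ at the cost of a factor $|S|$, using the elementary bound $\sum_s \max_w f_s(w) \leq |S| \max_w \sum_s f_s(w)$ valid for nonnegative $f_s$ (since $f_s(w)\leq \sum_{s'} f_{s'}(w)$ for every $s$). Finally, factor $p_w(z,u,e|\ACCd) = p_w(e|\ACCd)\,p_w(z,u|e,\ACCd)$ and pull $p_w(e|\ACCd)$ outside the maximum by showing it equals the $w$-independent quantity $p(e|\ACCd)$; the resulting expression is precisely $|S|\dwithw$.

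The only step that really needs care is the $w$-independence of $p_w(e|\ACCd)$. For this I would combine Assumption A1-(I), which yields $p_w(u,e) = p(u,e)$ after marginalising over $t$, with Eqs.~(\ref{eq:p-cond1}), (\ref{eq:p-cond3}) and (\ref{eq:p-cond4}), which together give $p_w(x|u,e) = p_w(x|u) = p(x|u)$. Multiplying produces $p_w(x,u,e) = p(u,e)\,p(x|u)$, independent of $w$; since the acceptance event $\ACCd$ is a set of pairs $(x,u)$, summing over $(x,u)\in\ACCd$ gives $w$-independence of both $p_w(e,\ACCd)$ and $p_w(\ACCd)$, and hence of the conditional. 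The $|S|$ blow-up introduced by the max/sum swap is exactly the classical-side-information cost anticipated in the discussion following Eq.~(\ref{eq:d_c_intro}).
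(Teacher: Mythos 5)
Your proposal is correct and follows essentially the same route as the paper: the chain $\dcintro \leq \dcours \leq |S|\,\dwithw = |S|\,\dnow$, with the $|S|$ factor coming from bounding $\max_w$ of a single-$s$ term by $\max_w$ of the sum over all $s'$, and the $w$-independence of $p_w(e|\ACCd)$ justified from Assumption A1-(I) together with Eq.~(\ref{eq:p-cond1}) so that $p(e|\ACCd)$ can be pulled outside the maximum. The only cosmetic difference is that you phrase the $|S|$ blow-up as the elementary inequality $\sum_s \max_w f_s(w) \leq |S| \max_w \sum_s f_s(w)$ whereas the paper writes it inline in a chain of inequalities; the substance is identical.
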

\begin{proof}
This is a consequence of (\ref{eq:dintro_dours}) and the following inequalities
\begin{eqnarray}
\label{eq:dc-dmax}
\dcintro\leq\dcours &=& \sum_{s, e} p(e|\ACCd) \max_w \sum_{z,u} p_w(z,u|e,\ACCd) \left| p_w(s|u,e,z, \ACCd) - \frac{1}{|S|} \right|
 \nonumber \\
&\leq & \sum_{s,e} p(e|\ACCd) \max_w \sum_{s',z,u} p(z,u|e,\ACCd) \left| p_w(s'|z,u,e, \ACCd) - \frac{1}{|S|} \right|  \nonumber \\
&\leq & |S| \sum_{e} p(e|\ACCd) \max_w \sum_{s,z,u} p_w(z,u|e,\ACCd) \left| p_w(s|z,u,e,\ACCd) - \frac{1}{|S|} \right|  \nonumber \\
& =& |S| \dwithw \stackrel{\text{\eqref{eq:dnow_dwithw}}}{=}  |S| \dnow.
\end{eqnarray}
\end{proof}
We therefore see that we can effectively work with the distribution $q(x,z,u,t,e)$ and $\dnow$. 
\begin{lemma}
\label{lem:Markov_acc}
For any probability distribution $q(x,z,u,t,e)$ satisfying (\ref{eq:q-cond1})-(\ref{eq:q-cond4}) it follows that
\be 
q(x|z,u,t,e,ACC) = q(x|z,u,ACC).
\ee
\end{lemma}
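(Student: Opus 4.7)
The plan is to use the fact that the acceptance event $\ACCd$ is a deterministic function of $(x,u)$ only, together with the conditional independence property already encoded in (\ref{eq:q-cond1}). The strategy is to first upgrade property (\ref{eq:q-cond1}), which is stated at the level of the joint distribution $q(x,z|u,t,e)$, to the statement that $q(x|z,u,t,e)=q(x|z,u)$, and then show that conditioning on $\ACCd$ preserves this equality because $\ACCd$ does not involve $z,t,e$.

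First I would derive the unconditional Markov identity. Summing $q(x,z|u,t,e) = q(x,z|u)$ over $x$ gives $q(z|u,t,e) = q(z|u)$, and dividing yields
\[
q(x|z,u,t,e) \;=\; \frac{q(x,z|u,t,e)}{q(z|u,t,e)} \;=\; \frac{q(x,z|u)}{q(z|u)} \;=\; q(x|z,u).
\]
This is essentially Remark~\ref{rem:implication} applied to (\ref{eq:q-cond1}).

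Next I would insert the conditioning on $\ACCd$ by hand. Writing $\I_{(x,u)\in\ACCd}$ for the indicator of acceptance and using that this indicator depends only on $(x,u)$, I would compute
\[
q(x,\ACCd|z,u,t,e) \;=\; \I_{(x,u)\in\ACCd}\, q(x|z,u,t,e) \;=\; \I_{(x,u)\in\ACCd}\, q(x|z,u),
\]
and, summing over $x$,
\[
q(\ACCd|z,u,t,e) \;=\; \sum_{x}\I_{(x,u)\in\ACCd}\, q(x|z,u) \;=\; q(\ACCd|z,u).
\]
Dividing the two identities gives
\[
q(x|z,u,t,e,\ACCd) \;=\; \frac{\I_{(x,u)\in\ACCd}\, q(x|z,u)}{q(\ACCd|z,u)} \;=\; q(x|z,u,\ACCd),
\]
which is the claim. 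No real obstacle is expected here since the argument is purely bookkeeping with Bayes' rule; the only subtle point is recognizing that among the four listed properties of $q$, it is precisely (\ref{eq:q-cond1}) (rather than the no-signaling or SV properties) that carries the content of the lemma, and that the specific structure of $\ACCd$ as a function of $(x,u)$ only is what allows the conditional independence of $x$ from $(t,e)$ given $(z,u)$ to pass to the post-selected distribution.
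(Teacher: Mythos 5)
Your proposal is correct and follows essentially the same route as the paper: both use Bayes' rule to express the post-selected conditional, use the fact that $\ACCd$ is a deterministic event in $(x,u)$, and invoke \eqref{eq:q-cond1} (via the implication in Remark~\ref{rem:implication}) to drop $(t,e)$ from the conditioning. You are slightly more explicit in first isolating the unconditional identity $q(x|z,u,t,e)=q(x|z,u)$ and then reintroducing the acceptance indicator, whereas the paper folds these two steps into a single ratio manipulation with the numerator and denominator both restricted to $\ACCd_u$, but the substance is identical.
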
  
\begin{proof}
For $(x,u) \notin ACC$ the claim holds trivially, since $q(x|z,u,t,e,ACC) = q(x|z,u,e,ACC) = 0$. For $(x,u) \in ACC$, we have
\begin{eqnarray}
q(x|z,u,t,e,ACC) 
&=& \frac{q(x,ACC_{u}|z,u,t,e)}{\sum_{(x,u) \in ACC} q(x,u|z,u,t,e)} \\ 
&=& \frac{q(x|z,u,t,e)}{\sum_{x \in ACC_u} q(x|z,u,t,e)} \nonumber \\ 
&\stackrel{\text{Eq. (\ref{eq:q-cond1})}}{=}&  \frac{q(x|z,u)}{\sum_{x \in ACC_u} q(x|z,u)} \nonumber \\ &=& q(x|z,u,ACC), \nonumber
\end{eqnarray}
which proves the claim.
\end{proof}

\begin{lemma}\label{lem:P(x|u)}
Consider the measure $q(x,z,u,t,e)$ satisfying conditions given by (\ref{eq:q-cond1})-(\ref{eq:q-cond4}). Let $\delta, \delta_{Az} > 0$ be constants and define the set
\begin{align}
\Azumadnc:=\{(z,u,e): Pr_{\sim q(x|z,u,e)}(\bar{L}\geq L+\delta_{Az})\leq\ep_{Az}\},
\end{align}
where $\ep_{Az}:=2e^{-\frac{1}{4}\delta_{Az}^2n}$ and
\ben 
&&L(x,u):=\frac1n \sum_{i=1}^n B_i({x}_{i}, u_{i}),\\
&&\overline{L}(x,z,u,e):=\frac1n\sum_{i=1}^n \mathbb E_{q(x_i,u_i|x_{<i},z,u_{<i},e)}B_i(x_i,u_i).
\een
Let $(z,u,e)\in \Azumadnc$. Then, for arbitrary $x\in\ACCd_u$, we obtain
\begin{align}
q(x|z,u,e)\leq\max\{\ep_{Az}, \gamma^{\mu n}\},
\end{align}
where
\begin{align}
\mu:=1-\sqrt{\delta+\delta_{Az}},\qquad 
\gamma = \frac13 \left(1+2\frac{\sqrt{\delta+\delta_{Az}}}{{(\frac12-\epsilon)^4}}\right).
\end{align}
\end{lemma}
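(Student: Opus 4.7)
The plan is to split the output space $\ACCd_u$ into two pieces according to whether $x$ is ``typical'' under the Azuma event, i.e.\ whether $\overline{L}(x,z,u,e)\leq L(x,u)+\delta_{Az}$, and to bound $q(x|z,u,e)$ on each piece separately. The hypothesis $(z,u,e)\in\Azumadnc$ says exactly that the atypical piece carries total $q(\cdot|z,u,e)$-mass at most $\ep_{Az}$, so every individual atypical $x$ trivially satisfies $q(x|z,u,e)\leq\ep_{Az}$, giving one side of the max.

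For the typical $x\in\ACCd_u$, combining $L(x,u)\leq\delta$ with the Azuma-typicality gives $\overline{L}(x,z,u,e)\leq\delta+\delta_{Az}$. Lemma \ref{lem:linear_fraction} applied to the sequence of conditional means $\overline{B}_i$ (with $z,e$ treated as fixed side information) then supplies an index set $K\subseteq[n]$ of size at least $\mu n=(1-\sqrt{\delta+\delta_{Az}})n$ on which $\overline{B}_i\leq\sqrt{\delta+\delta_{Az}}$.

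For each $i\in K$, I would reinterpret $\overline{B}_i$ as the $\varepsilon$-SV Bell value of the conditional no-signaling box $q(x_i|u_i,x_{<i},u_{<i},z,e)$ with input distribution $q(u_i|x_{<i},u_{<i},z,e)$. Using (\ref{eq:q-cond3}) and (\ref{eq:q-cond4}), together with no-signaling of $u_i$ with respect to the past outputs $x_{<i}$, one checks that the input remains $\varepsilon$-SV and the output side is a legitimate four-partite no-signaling correlation, so Lemma \ref{lem:B_SV_rand} applies and yields the single-box bound $\gamma=\tfrac{1}{3}\bigl(1+2\sqrt{\delta+\delta_{Az}}/(\tfrac12-\varepsilon)^4\bigr)$ for every output of that box. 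Expanding $q(x|z,u,e)$ by the chain rule, using the time-ordered no-signaling form (\ref{eq:q-cond3}) to rewrite each factor as $q(x_i\mid x_{<i},z,u_{\leq i},e)$, and bounding the $|K|\geq\mu n$ ``good'' factors by $\gamma$ while estimating the remaining factors by $1$, exactly as in Lemma \ref{lem:min-entropy-0}, gives $q(x|z,u,e)\leq\gamma^{\mu n}$ on the typical piece. Taking the maximum over the two cases finishes the proof.

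The main obstacle is expected to be bookkeeping rather than a conceptual hurdle: one must carefully justify that the SV property of $q(u|z,e)$ together with the time-ordered no-signaling structure of $q(x|z,u,e)$ implies that the conditional input distribution $q(u_i\mid x_{<i},u_{<i},z,e)$ is still $\varepsilon$-SV, so that Lemma \ref{lem:B_SV_rand} can be legitimately invoked at the conditional level for each $i\in K$.
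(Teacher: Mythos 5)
Your proof is correct and follows the same route as the paper: split $x\in\ACCd_u$ according to the Azuma event, use the hypothesis $(z,u,e)\in\Azumadnc$ to bound $q(x|z,u,e)\leq\ep_{Az}$ on the atypical piece, and on the typical piece chain together $L(x,u)\leq\delta$, $\bar L\leq\delta+\delta_{Az}$, Lemma~\ref{lem:linear_fraction} (a $\mu n$-size set of good positions), Lemma~\ref{lem:B_SV_rand} (per-box bound $\gamma$), and the chain-rule argument of Lemma~\ref{lem:min-entropy-0} to get $\gamma^{\mu n}$. One small refinement: you split via the one-sided condition $\bar L\leq L+\delta_{Az}$, which exactly matches the event defining $\Azumadnc$; the paper's $\Xgood$ uses the two-sided $|L-\bar L|\leq\delta_{Az}$, whose complement is not directly controlled by the one-sided definition of $\Azumadnc$, so your bookkeeping is the tighter and cleaner version of the same argument.
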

\begin{proof}
Let $(z,u,e) \in \Azumadnc$ and $x \in \ACCd_u$.
We further define two sets: 
\begin{align}
\Xgood =\{x: |\Lazuma(x,u) - \overline{\Lazuma}(x,z,u,e)|\leq  \delta_{Az}\}
\end{align}
and 
\begin{align}
\Xbad=\left(\Xgood\right)^c.
\end{align}
Note that, for $(z,u,e)\in \Azumadnc$ and $x\in \Xbad$, we have 
\be
Pr_{\sim q(x|z,u,e)}(|\Lazuma(x,u) - \overline{\Lazuma}(x,z,u,e)|> \delta_{Az})\leq  \ep_{Az}.
\ee
Hence, for $x \in \Xbad$,
\be\label{estim_for_xbad}
q(x|z,u,e)\leq \ep_{Az}.
\ee
Let us now analyze the case, when $x\in \Xgood$. 
Since $x\in\ACCd_u$, which implies $(x,u)\in \ACCd$, we have $L(x,u)\leq\delta$. Further, for $x\in \Xgood$, 
we obtain
\begin{align}
\bar{L}(x,z,u,e)\leq\delta+\delta_{Az}.
\end{align}
Recall that
\begin{align}
\bar{B}_i=\mathbb E_{\sim q(x_i,u_i|x_{<i},z,u_{<i},e)}(B_i),
\end{align}
where $u_{<i}$ and $x_{<i}$ are the components of $u$ and $x$ respectively.
Following Lemma \ref{lem:linear_fraction}, we know that, for $\mu n$ positions $i$, where $\mu:=(1-\sqrt{\delta+\delta_{Az}})$, we get
\begin{align}
\bar{B}_i\leq\sqrt{\delta+\delta_{Az}}.
\end{align}
and $\bar{B}_i$ plays the role of $\bar{B}^{SV}$ here. 
Then, by Lemma \ref{lem:B_SV_rand}, for $\mu n$ positions $i$ there holds
\be
\label{eq:gamma}
q_{x_{<i},z,u_{<i},e}(x_i|u_i) \leq \gamma, 
\ee           
for any measurement outcome $x_i$ and setting  $u_i$.
Applying Lemma \ref{lem:min-entropy-0}, we get 
\be\label{estim_for_x_good}
q(x|z,u,e) \leq\gamma^{\mu n}.
\ee
Then, due to (\ref{estim_for_xbad}) and (\ref{estim_for_x_good}), the proof is completed.
\end{proof}

\begin{prop}\label{prop:max_x_small}
Given that the probability distribution $q(x,z,u,e)$ satisfies conditions (\ref{eq:q-cond1})-(\ref{eq:q-cond4}),
we have
\be
Pr_{\sim q(z,u,e|\ACCd)}\left(\max_x q(x|z,u,e,\ACCd) \leq \sqrt{\frac{\delta_1}{q(\ACCd)}}\,\,\right) \geq 1- \sqrt{\frac{\delta_1}{q(\ACCd)}}.
\ee
where 
\be
\delta_1= \gamma^{\mu n} + 2\ep_{Az}.
\ee
\end{prop}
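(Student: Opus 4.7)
The plan is to first establish the expectation bound
\[
\EE_{q(z,u,e|\ACCd)}\left[\max_x q(x|z,u,e,\ACCd)\right] \leq \frac{\delta_1}{q(\ACCd)}
\]
and then apply Markov's inequality with threshold $\sqrt{\delta_1/q(\ACCd)}$; this immediately yields the claimed tail bound $\sqrt{\delta_1/q(\ACCd)}$.

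To compute the expectation, I would use that $q(x|z,u,e,\ACCd) = q(x|z,u,e)\,\mathbf{1}_{x\in\ACCd_u}/q(\ACCd_u|z,u,e)$ and $q(z,u,e|\ACCd) = q(z,u,e)\,q(\ACCd_u|z,u,e)/q(\ACCd)$, so the conditioning factors $q(\ACCd_u|z,u,e)$ cancel:
\[
\EE_{q(z,u,e|\ACCd)}\left[\max_x q(x|z,u,e,\ACCd)\right] = \frac{1}{q(\ACCd)}\sum_{z,u,e} q(z,u,e)\max_{x\in\ACCd_u} q(x|z,u,e).
\]
I would then split the inner sum according to whether $(z,u,e)\in\Azumadnc$. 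On $\Azumadnc$, Lemma \ref{lem:P(x|u)} gives $\max_{x\in\ACCd_u} q(x|z,u,e) \leq \max\{\ep_{Az},\gamma^{\mu n}\} \leq \ep_{Az} + \gamma^{\mu n}$; on its complement, use the trivial bound $\leq 1$. The inner sum is therefore at most $(\ep_{Az}+\gamma^{\mu n}) + q((\Azumadnc)^c)$.

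The remaining piece $q((\Azumadnc)^c) \leq \ep_{Az}$ follows from Azuma combined with an auxiliary application of Markov. Lemma \ref{lemmaazuma} applied with $s=\delta_{Az}$ gives $P_q(|L-\bar L| \geq \delta_{Az}) \leq 2 e^{-n\delta_{Az}^2/2} = \ep_{Az}^2/2$, where $\ep_{Az} = 2e^{-n\delta_{Az}^2/4}$. Writing $f(z,u,e) := P_{x\sim q(x|z,u,e)}(|L-\bar L| \geq \delta_{Az})$, we have $\EE_q[f] \leq \ep_{Az}^2/2$, so Markov yields $q(f > \ep_{Az}) \leq \ep_{Az}/2$. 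Since the one-sided event $\{\bar L \geq L + \delta_{Az}\}$ is contained in the two-sided event $\{|L-\bar L|\geq \delta_{Az}\}$, we get $(\Azumadnc)^c \subseteq \{f > \ep_{Az}\}$, hence $q((\Azumadnc)^c) \leq \ep_{Az}/2 \leq \ep_{Az}$. Combining gives the expectation bound $(\gamma^{\mu n} + 2\ep_{Az})/q(\ACCd) = \delta_1/q(\ACCd)$, and the final Markov step concludes.

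The main subtlety is the bookkeeping of the two nested Markov applications—one inside the Azuma-based estimate for $q((\Azumadnc)^c)$, and one at the end applied to the random variable $\max_x q(x|z,u,e,\ACCd)$ under $q(z,u,e|\ACCd)$—together with the observation that the factor of $2$ in $\delta_1 = \gamma^{\mu n} + 2\ep_{Az}$ precisely accounts for one copy of $\ep_{Az}$ contributed by Lemma \ref{lem:P(x|u)} on $\Azumadnc$ and a second copy upper-bounding the $q$-measure of its complement. Everything else is essentially a chain of elementary probabilistic manipulations.
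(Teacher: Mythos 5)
Your proof is correct and follows essentially the same route as the paper: split on membership in $\Azumadnc$, bound the in-set contribution via Lemma~\ref{lem:P(x|u)}, bound the complement by its $q$-measure, combine to get the expectation bound $\delta_1/q(\ACCd)$, and apply Markov with threshold $\sqrt{\delta_1/q(\ACCd)}$. One small added value: you explicitly derive $q((\Azumadnc)^c)\le\ep_{Az}$ via the Azuma inequality (Lemma~\ref{lemmaazuma}) followed by a secondary Markov step on the conditional tail probability, which the paper asserts without detail but which is exactly what makes the constant $\ep_{Az}=2e^{-n\delta_{Az}^2/4}$ (rather than $2e^{-n\delta_{Az}^2/2}$) the right choice.
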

\begin{proof}
Assume first that  $(z,u,e)\in \Azumadnc$. We then have 
\begin{eqnarray}
\label{eq:max_x_uyz}
\max_x q(x|z,u,e,\ACCd) &=& \max_x \frac{q(x,\ACCd|z,u,e)}{q(\ACCd|z,u,e)}  \\ 
&=& \frac{\max_{x\in\ACCd_u}q(x|z,u,e)}{q(\ACCd|z,u,e)} \nonumber \\ &\stackrel{\text{Lemma \ref{lem:P(x|u)}}}{\leq}& \frac{\max\{\ep_{Az}, \gamma^{\mu n}\}}{q(\ACCd|z,u,e)}.
\end{eqnarray}
We now consider 
\ben
&&\sum_{z,u,e}q(z,u,e|\ACCd) \max_x   q(x|z,u,e,\ACCd)  \nonumber \\
&&=\sum_{(z,u,e)\in\Azumad}q(z,u,e|\ACCd) \max_x   q(x|z,u,e,\ACCd) +\sum_{(z,u,e)\not\in\Azumad}q(z,u,e|\ACCd) \max_x   q(x|z,u,e,\ACCd).\nonumber\\
\een
We bound the first terms as follows:
\ben
\sum_{(z,u,e)\in\Azumadnc}q(z,u,e|\ACCd) \max_x   q(x|z,u,e,\ACCd) &\stackrel{\text{Eq. \eqref{eq:max_x_uyz}}}{\leq} & 
\sum_{(z,u,e)\in\Azumadnc}q(z,u,e|\ACCd) \frac{\max\{\ep_{Az}, \gamma^{\mu n}\}}{q(\ACCd|z,u,e)}  \nonumber \\
&\leq &\max\{\ep_{Az}, \gamma^{\mu n}\} \sum_{(z,u,e)} \frac{q(z,u,e)}{q(\ACCd)}  \nonumber \\ &=& \frac{\max\{\ep_{Az}, \gamma^{\mu n}\}}{q(\ACCd)}. \nonumber \\
\een	
\tgr{Let us now apply Lemma \ref{lemmaazuma}, taking $W_0=(e,z)$, $W_i=(x_i,u_i)$ for $i=1,\ldots, n$, and
$B_i$ given by \eqref{eq:Bell_indicator}. We obtain  $q(\Azumadnc) \geq 1 - \ep_{Az}$. Thus, the second term is bounded as follows}
\ben
\sum_{(z,u,e)\not\in\Azumadnc}q(z,u,e|\ACCd) \max_x   q(x|z,u,e,\ACCd) &\leq & \sum_{(z,u,e)\not\in\Azumadnc}q(z,u,e|\ACCd)  \nonumber \\
&\leq & \sum_{(z,u,e)\not\in\Azumadnc} \frac{q(z,u,e)}{q(\ACCd)} \nonumber \\
&\stackrel{\text{Lemma \ref{lemmaazuma}}}{\leq}& \frac{\ep_{Az}}{q(\ACCd)}.
\een
Altogether we find
\ben
\sum_{z,u,e}q(z,u,e|\ACCd) \max_x   q(x|z,u,e,\ACCd) &\leq& \frac{\max\{\ep_{Az}, \gamma^{\mu n}\}}{q(\ACCd)} +\frac{\ep_{Az}}{q(\ACCd)} \leq \frac{\gamma^{\mu n}+2\ep_{Az}}{q(\ACCd)}. \nonumber
\een
Applying Markov inequality and setting $\delta_1 = \gamma^{\mu n} + 2 \ep_{Az}$ completes the proof.
\end{proof}

The main theorem of this section is the following:

\begin{thm}
\label{thm:main_protocol_I}
Suppose we are given $\epsilon>0$. Set $\delta > 0$ such that  
\be
\label{eq:gamma_delta}
\frac13 \left(1+2\frac{\sqrt{2\delta}}{{(\frac12-\epsilon)^4}}\right) <1
\ee (see Fig. \ref{fig:delta-epsilon} trade-off between $\delta$ and $\epsilon$).
\tgr{Then for any probability distribution $p_w(x,z,u,t,e)$  
satisfying Eqs.  \eqref{eq:p-cond1}-\eqref{eq:p-cond6} 
there exists an extractor $s(x,t)$ with $|S| = 2^{\Omega(n^{1/4})}$ values, such that 
\be
\dcintro \cdot p(\ACCd) \leq 2^{-\Omega(n^{1/4})}, 
\ee
where $\dcintro$ is given by \eqref{eq:dcintro_I}.}
\end{thm}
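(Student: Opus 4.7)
The plan is to combine Proposition \ref{lem:dc-dmax} with the min-entropy bound of Proposition \ref{prop:max_x_small} and the two-source extractor of Lemma \ref{extractors}(i). By Proposition \ref{lem:dc-dmax} it suffices to bound $|S|\cdot\dnow$ for the auxiliary distribution $q(x,z,u,t,e)$ satisfying (\ref{eq:q-cond1})--(\ref{eq:q-cond4}). I first fix $\delta_{Az}$ as a small positive constant (independent of $n$) so that $\gamma=\frac{1}{3}(1+2\sqrt{\delta+\delta_{Az}}/(\frac12-\epsilon)^{4})<1$; this is possible by hypothesis (\ref{eq:gamma_delta}). With this choice, $\ep_{Az}=2e^{-\delta_{Az}^{2}n/4}$ is exponentially small and $\delta_{1}:=\gamma^{\mu n}+2\ep_{Az}=2^{-\Omega(n)}$. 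Proposition \ref{prop:max_x_small} then identifies a good set $G$ of triples $(z,u,e)$ with $q(G^{c}\,|\,\ACCd)\leq\sqrt{\delta_{1}/q(\ACCd)}$, on which $\max_{x}q(x\,|\,z,u,e,\ACCd)\leq\sqrt{\delta_{1}/q(\ACCd)}$; equivalently, the conditional law of $x$ has min-entropy at least $h:=\tfrac12\log(q(\ACCd)/\delta_{1})$.

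Next, for each $(z,u,e)\in G$ I apply the two-source extractor of Lemma \ref{extractors}(i) to the pair $(x,t)$ under the conditional distribution $q(\,\cdot\,|\,z,u,e,\ACCd)$. The three preconditions of that lemma can be checked directly from (\ref{eq:q-cond1})--(\ref{eq:q-cond4}) and Lemma \ref{lem:Markov_acc}: (a) $x$ has min-entropy $\geq h$ by the previous step; (b) since $\ACCd$ depends only on $(x,u)$ and, by the conditions on $q$, the variables $x$ and $t$ are independent given $(z,u,e)$, one has $q(t\,|\,z,u,e,\ACCd)=q(t\,|\,z,u,e)$, which is an SV source and therefore has min-entropy $\Omega(n)$; (c) Lemma \ref{lem:Markov_acc} gives $q(x\,|\,z,u,t,e,\ACCd)=q(x\,|\,z,u,\ACCd)$, so $x$ is independent of $t$ under $q(\,\cdot\,|\,z,u,e,\ACCd)$. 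The extractor therefore produces $m=\Omega(\min(h,n))$ bits that are $\xi=2^{-\Omega(\min(h,n))}$-close to uniform in variational distance.

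Splitting the sum defining $\dnow$ according to $G$ and $G^{c}$, using the extractor bound on $G$ and the trivial bound $2$ on $G^{c}$, I obtain
\[
\dnow\cdot q(\ACCd)\;\leq\;\xi\cdot q(\ACCd)+2\,q(G^{c},\ACCd)\;\leq\;\xi+2\sqrt{\delta_{1}\,q(\ACCd)}.
\]
A short case analysis now finishes the proof. If $p(\ACCd)\leq 2^{-n^{1/4}}$, then $\dcintro\cdot p(\ACCd)\leq 2\cdot 2^{-n^{1/4}}=2^{-\Omega(n^{1/4})}$ trivially. Otherwise $h=\Omega(n)$, in which case Lemma \ref{extractors}(i) can be used with output size chosen as $|S|=2^{\Omega(n^{1/4})}$ (discarding bits if needed), so that $|S|\cdot\xi$ and $|S|\cdot\sqrt{\delta_{1}}$ are each at most $2^{-\Omega(n)}$; combining with Proposition \ref{lem:dc-dmax} yields $\dcintro\cdot p(\ACCd)\leq|S|\cdot\dnow\cdot p(\ACCd)\leq 2^{-\Omega(n^{1/4})}$.

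The main obstacle is bookkeeping rather than a new conceptual idea: one must (i) verify that conditioning on $\ACCd$ preserves both the SV property of $t$ and the independence $x\perp t$ under the auxiliary measure $q$, (ii) tune $\delta_{Az}$ so that $\gamma<1$ while $\ep_{Az}$ remains negligible (this is exactly where the hypothesis (\ref{eq:gamma_delta}) is used), and (iii) absorb the factor $|S|$ introduced by Proposition \ref{lem:dc-dmax} by restricting the extractor output to $m=\Omega(n^{1/4})$ bits. The factor $|S|$ is the price paid for using an extractor sound only against classical side information and forces the output size to be subexponentially smaller than what the min-entropy would otherwise allow.
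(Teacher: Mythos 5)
Your proposal follows essentially the same strategy as the paper's proof: reduce to the classical-adversary distance $\dnow$ via Proposition \ref{lem:dc-dmax}, invoke Proposition \ref{prop:max_x_small} to exhibit a good set on which the conditional law of $x$ has high min-entropy, verify via Lemma \ref{lem:Markov_acc} and conditions (\ref{eq:q-cond1})--(\ref{eq:q-cond4}) that $t$ remains an independent SV source after conditioning on $\ACCd$, apply the two-source extractor of Lemma \ref{extractors}(i), and case-split on the magnitude of the acceptance probability. Your choice of a constant Azuma parameter $\delta_{Az}\leq\delta$ is a harmless variant of the paper's $\delta_{Az}=n^{-1/4}$; it actually yields $\delta_1=\gamma^{\mu n}+2\ep_{Az}=2^{-\Omega(n)}$, which is sharper than the paper's $2^{-\Omega(\sqrt n)}$, and the hypothesis (\ref{eq:gamma_delta}) is precisely what makes this admissible.

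There is, however, one genuine slip in the first case of your case-split. You write that if $p(\ACCd)\leq 2^{-n^{1/4}}$ then ``$\dcintro\cdot p(\ACCd)\leq 2\cdot 2^{-n^{1/4}}$ trivially,'' which rests on the bound $\dcintro\leq 2$. That bound is \emph{not} justified: in the definition (\ref{eq:dcintro_I}) the $\max_w$ sits inside the sum over $(s,e)$, and different output values $s$ can be maximized by different Eve-inputs $w$, so the total can be as large as $|S|+1$. (This is precisely why the paper first freezes Eve's input to $w_e$, passes to the $w$-free quantity $\dnow$ defined in (\ref{eq:dist-no-w}) --- for which $\dnow\leq 2$ \emph{does} hold --- and then re-introduces the factor $|S|$ via Proposition \ref{lem:dc-dmax}.) The fix is routine but must be made: in Case 1 you should bound $\dnow\cdot q(\ACCd)\leq 2q(\ACCd)\leq 2\cdot 2^{-n^{1/4}}$, and then appeal to $\dcintro\cdot p(\ACCd)\leq |S|\cdot\dnow\cdot q(\ACCd)$ (using $q(\ACCd)=p(\ACCd)$). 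This requires you to take $|S|=2^{c\,n^{1/4}}$ with $c<1$ so that the factor $|S|$ is absorbed; the conclusion $\dcintro\cdot p(\ACCd)\leq 2^{-\Omega(n^{1/4})}$ then holds. With this repair your proof is correct and, modulo the parameter choice for $\delta_{Az}$, coincides with the paper's.
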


\tgr{\begin{remark}\label{rem:p(ACC)=p_w,q(ACC)} Note, that due to first condition of \eqref{eq:p-cond1}, $p_w(\ACCd)$ 
does not depend on $w$, hence we could have written just $p(\ACCd)$ in the theorem. Moreover, we even have $q(\ACCd)=p(\ACCd)$.
\end{remark}}

\begin{figure}
\scalebox{0.60}{\includegraphics{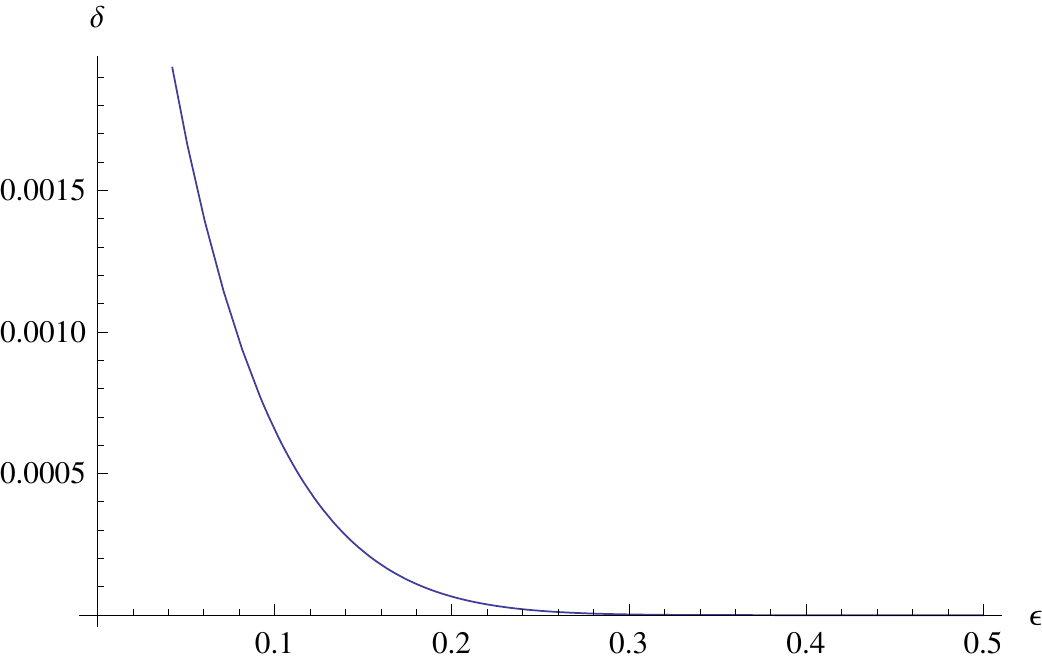}}
\caption{Trade-off between parameter $\epsilon$ of SV source and the amount of tolerated noise $\delta$}
\label{fig:delta-epsilon}
\end{figure}

\begin{proof}

Let $\delta>0$ satisfying Eq. (\ref{eq:gamma_delta}) be given. Set $\delta_{Az} = n^{-\frac14}$ so that $\ep_{Az} = 2e^{-\frac{1}{4}\delta_{Az}^2n} = 2^{-\Omega(\sqrt{n})}$. We  
consider only $n\geq n_0$ where $n_0$  is such that $\delta_{Az}\leq \delta$, i.e. $n_0=\lfloor\frac{1}{\delta^4}\rfloor$. 
Then $\mu\geq 1-\sqrt{2\delta}$ and $\gamma<1$. Now, let $\eta= \sqrt{\gamma^{\mu n}+2 \ep_{Az}}$,
so that $\eta=2^{-\Omega(\sqrt{n})}$.
We shall now consider distribution $q$ given by \eqref{eq:qvspw} and distance $\dnow$ of \eqref{eq:dist-no-w}. 
Suppose first that $q(\ACCd)\leq \eta$ and let us consider 
\be
\dnow = \sum_{z,u,e}q(z,u,e|\ACCd) \sum_{s=1}^{|S|}\biggl|q(s|z,u,e,\ACCd)-\frac{1}{|S|}\biggr|.
\ee 
Then, since  by definition $\dnow \leq 2$, 
we obtain $\dnow \cdot q(\ACCd)\leq 2^{-\Omega(\sqrt{n})}$. Now, suppose in turn, that  $q(\ACCd) \geq \eta$.
Then, from proposition \ref{prop:max_x_small} we get 
\be
\label{eq:good-prob}
Pr_{\sim q(z,u,e|\ACCd)} \left(\max_x q(x|z,u,e, \ACCd) \leq \sqrt{\eta} \right) \geq 1- \sqrt{\eta}.
\ee
Take the set $\good=\{(z,u,e): \max_x q(x|z,u,e, \ACCd) \leq \sqrt{\eta}\}$.
Then for $(z,u,e)\in \good$,  we have 
\be
H_{\min} \left( q(x|z,u,e,\ACCd)\right) \geq c \sqrt{n}.
\ee
We now consider the total probability distribution $q(x,z,u,t,e)$. 
From Eq. (\ref{eq:q-cond2}), we have 
that conditioned on $(z,u,e)$ the random variables $t$ and $x$ are independent 
\be
\label{x-t-ind}
q(x,t|z,u,e)=q(x|z,u,e) q(t|z,u,e).
\ee
Due to Lemma \ref{lem:Markov_acc}, we further obtain
\be
q(x,t|z,u,e, \ACCd)=q(x|z,u,e,\ACCd) q(t|z,u,e,\ACCd)
\ee
Moreover, by assumption (i.e. Eq. (\ref{eq:q-cond4})) $q(t|z,u,e)$ obeys the SV source conditions. Hence, if we show that $q(t|z,u,e,\ACCd) = q(t|z,u,e)$, then $q(t|z,u,e,\ACCd)$ obeys the SV source conditions as well. Note that
\be 
q(t|z,u,e,\ACCd) = \frac{\sum_{x \in \ACCd_u} q(t,x|z,u,e)}{\sum_{x \in \ACCd_u} q(x|z,u,e)} \stackrel{\text{Eq.(\ref{x-t-ind})}}{=} q(t|z,u,e).  
\ee
Therefore $q(t|z,u,e,\ACCd)$ is an min-entropy source and we have
\be
H_{\min}(q(t|z,u,e,\ACCd)) = c' n
\ee
where  $c'$ is a constant depending only on $\ep$. 
Thus $q(x,t|z,u,e, \ACCd)$ is a product of two min-entropy sources. 
By the application of extractor of lemma \ref{extractors} part (i) we obtain the output $s$ with 
\be
\sum_s \left|q(s|z,u,e,\ACCd) - \frac{1}{|S|}\right|\leq   2^{-\Omega(\sqrt{n})}.
\ee
For $(z,u,e)\not\in\good$, we use $ \sum_s \left|q(s|z,u,e,\ACCd) - \frac{1}{|S|}\right| \leq 2$ and obtain
\ben
\dnow 
&\leq & \sum_{(z,u,e)\in\good} q(z,u,e|\ACCd) 2^{-\Omega(\sqrt{n})} 
+\sum_{(z,u,e)\not\in\good} 2 q(z,u,e|\ACCd) \nonumber \\
&\leq & 2^{-\Omega(\sqrt{n})} +2 \frac{1-q(\good)}{q(\ACCd)}
\stackrel{\text{Eq. (\ref{eq:good-prob})}}{\leq} 2^{-\Omega(\sqrt{n})}+\frac{2 \sqrt{\eta}}{\eta}
=2^{-\Omega(\sqrt{n})}
\een
(recall we have set $\eta=2^{-\Omega(\sqrt{n})}$). 
We thus obtain that 
\be
\dnow \cdot q(\ACCd) \leq 2^{-\Omega(\sqrt{n})}
\ee
\tgr{By definition of $q$ (see also Rem. \ref{rem:p(ACC)=p_w,q(ACC)}), we have $q(\ACCd)=p(\ACCd)$, and  from Proposition \ref{lem:dc-dmax} 
we know that $\dcintro \leq |S| \dnow$.} So choosing $|S| = 2^{\Omega(n^{1/4})}$ the claim follows.  
\end{proof}

\section{Proof of correctness of Protocol II}
\label{sec:protocol2}
Protocol II considers the situation where the honest parties have two no-signaling devices, and share with the adversary Eve a no-signaling box $\{p(x,y^1, \ldots, y^N, z|u', v'^1, \ldots, v'^N, w)\}$. We work with the probability distribution $p(x,y^1, \ldots y^{N},z, u, v^1, \ldots, v^N, t, j, e, u',v'^1,\ldots v'^{N},w)$.
Here we consider $n$ uses of the first device with $u' (= u'_{1}, \dots, u'_{n})$ denoting the inputs for these, as well as $N$ blocks of the second device (each with $n$ uses) with $v'^1, \ldots, v'^{N}$ denoting the inputs for this device (note $v'^{k} = v'^{k}_{1}, \dots, v'^{k}_{n}$). The honest parties draw $u, v^1, \ldots, v^N$ as well as the bit strings $j$ and $t$ from the SV source. 
The parties input $u$ to the first device ($u' = u$) and $v^1,\ldots, v^{j}$ to the second device ($v'^k = v^k$ for $k \in [j]$). They obtain the corresponding outputs $x$ from the first device, and $y^1,\ldots, y^{j}$ from the second device. The remaining variables $y^{j+1},\ldots, y^N$ we define to be zero. The adversary Eve holds the bit string $e$ which is her classical information about the bits drawn from the SV source. Her input is denoted as $w$ with corresponding output $z$. To avoid cumbersome notation, we will use the shorthand $v' (v)$ to denote $v'^{1}, \dots, v'^{N}$ ($v^{1}, \dots, v^{N}$) as well as $y = y^1, \dots, y^N$ where there is no possibility of confusion. 
Now the initial box describing the initial variables and inputs is given by $p(x,y,z,u,v,t,j,e|u',v',w)$, a family of probability distributions labeled by $u', v', w$. The final output $s$ of the protocol the honest parties compute as a function of $x$, $y^j$ and $t$
\be
\label{eq:s_protocol2}
s=s(x,y^j,t).
\ee
Referring to notation in Sec. \ref{sec:assumptions},  for the  present protocol,  we have $\svin=(u,v)$, 
$\svhash=(t,j)$, $\sve=e$, $\about=(x,y)$, $\abin=(u',v')$, $\ein=w$, $\eout=z$.
The assumptions of Sec. III thus read as follows.

\begin{itemize}
\item{\bf No-signaling assumptions:}\\
We have full no-signaling between all parties and devices (see Fig. \ref{fig:protocol2}), i.e. 
\tgr{\ben 
\label{eq:NS_II1}
p(x,y|u',v',w) &=& p(x,y|u',v')\;\text{(no-signaling from Eve to honest parties)},  \\
\label{eq:NS_II2}
p(z|u',v',w) &=& p(z|w)\;\text{(no-signaling from honest parties to Eve)},  \\
\label{eq:NS_II3}
p(x,z|u',v',w)&=&p(x,z|u',w) \;
\\
\label{eq:NS_II4}
p(y,z|u',v',w)&=&p(y,z|v',w). 
\een
We also assume the following time ordered no-signaling conditions (see Fig. \ref{fig:protocol2}).
\ben
\label{eq:NS_IItons}
p(x_{k}|u',v',w,z,u,v,t,j,e) &=& p(x_{k} | u'_{\leq k},v',w,z,u,v,t,j,e) \; \; \; \forall k \in [n],\\
\label{eq:NS_IItons'}
p(y_k|u',v',w,z,u,v,t,j,e) &=& p(y_{k} | u',v'_{\leq k},w,z,u,v,t,j,e) \; \; \; \forall j \in [n].
\een }

\item{\bf Assumption A1-(II):} 
The devices do not signal to the SV source, i.e., the distribution of $(u,v,t,j,e)$ is independent of the inputs $(u', v', w)$:
\be
p(u,v,t,j,e |u', v', w) = p(u,v,t,j,e) \; \; \; \forall{(u,v,t,j,e, u', v', w)}.
\label{eq:assumption1-prot2}
\ee
\item{\bf Assumption A2-(II):} The form of the box is fixed independently of the SV source:
\be
p(x,y,z|u',v',w,u,v,t,j,e) = p(x,y, z|u',v',w) \; \; \; \forall{(x,y,z,u,v,t,j,e,u',v',w)}.
\label{eq:assumption2-prot2}
\ee

\item {\bf SV conditions:} 
\tgr{The distribution $p(u,v,t,j,e)$ satisfies an SV condition (\ref{SVdef}); in particular,
$p(t|u,v,j,e)$ satisfies Eq. (\ref{SVdef}) too.}
\end{itemize}
 
\tgr{\begin{remark}
Note that
\be\label{eq:NS_II3_more}
p(x|z,u',v',w)=\frac{p(x,z|u',v',w)}{p(z|u',v',w)}
\stackrel{Eq. (\ref{eq:NS_II3})}{=}\frac{p(x,z|u',w)}{p(z|u',v',w)}
\stackrel{Eqs. (\ref{eq:NS_II2}), (\ref{implication})}{=}\frac{p(x,z|u',w)}{p(z|u',w)}=p(x|z,u',w)\ee
and similary we may show that
\be\label{eq:NS_II4_more}
p(y|z,u',v',w)=p(y|z,v',w). 
\ee 
\end{remark}}

After the honest parties input $u' = u, v' = v$ in the protocol, we work with the distribution 
\be 
\label{eq:p_w-prot2}
p_w(x,y,z,u,v,t,j,e):=p(x,y,z,u,v,t,j,e|w)
\ee
Assumption A1-(II) ensures that this is a normalized probability distribution. 
The setup is shown in Figure \ref{fig:protocol2}.
\begin{figure}[!h]
\begin{center}
\includegraphics[trim=0cm 0cm 0cm 0cm, width=10cm]{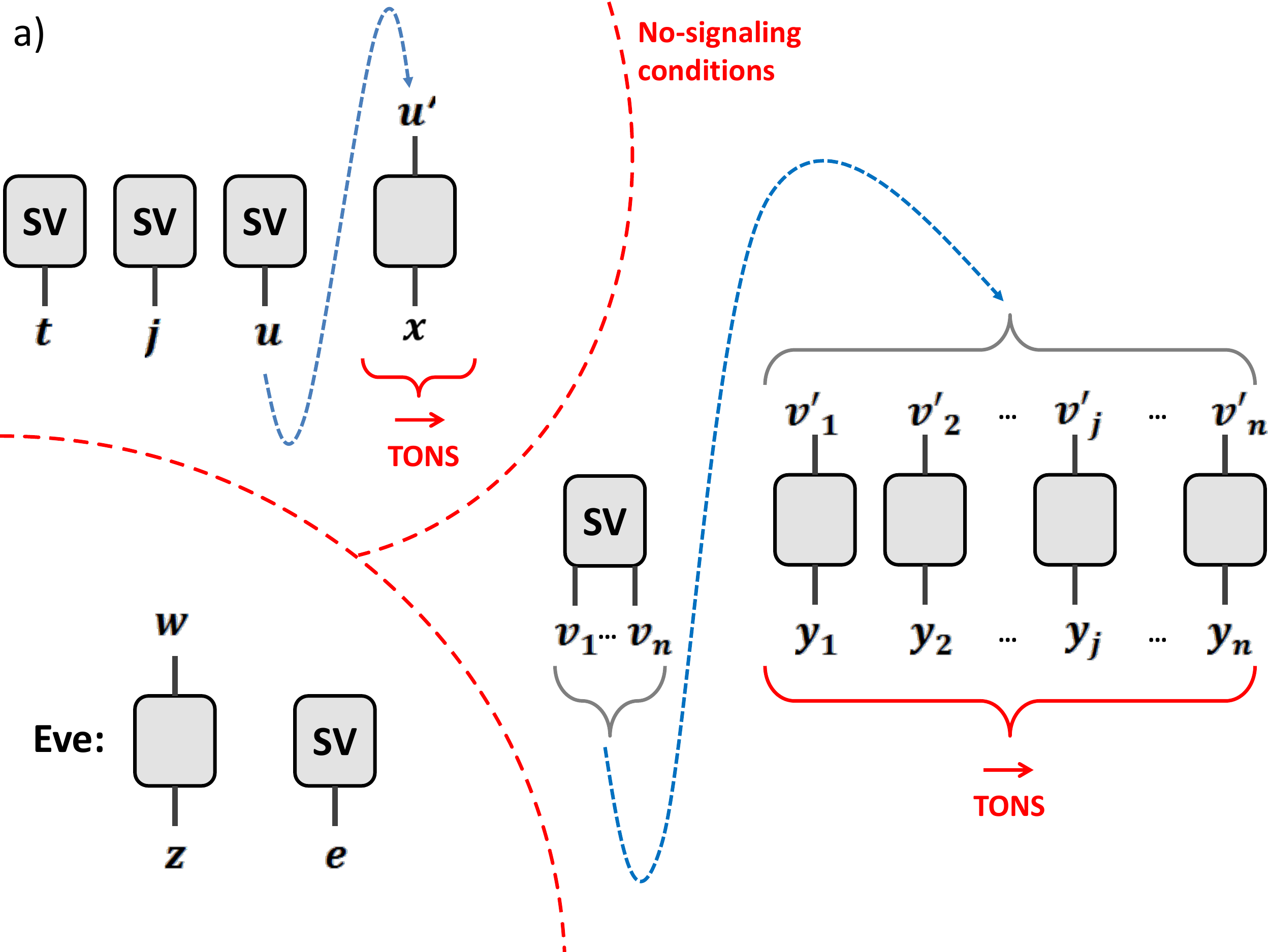}
\vspace{5mm}
\includegraphics[trim=0cm 10cm 0cm 0cm, width=10cm]{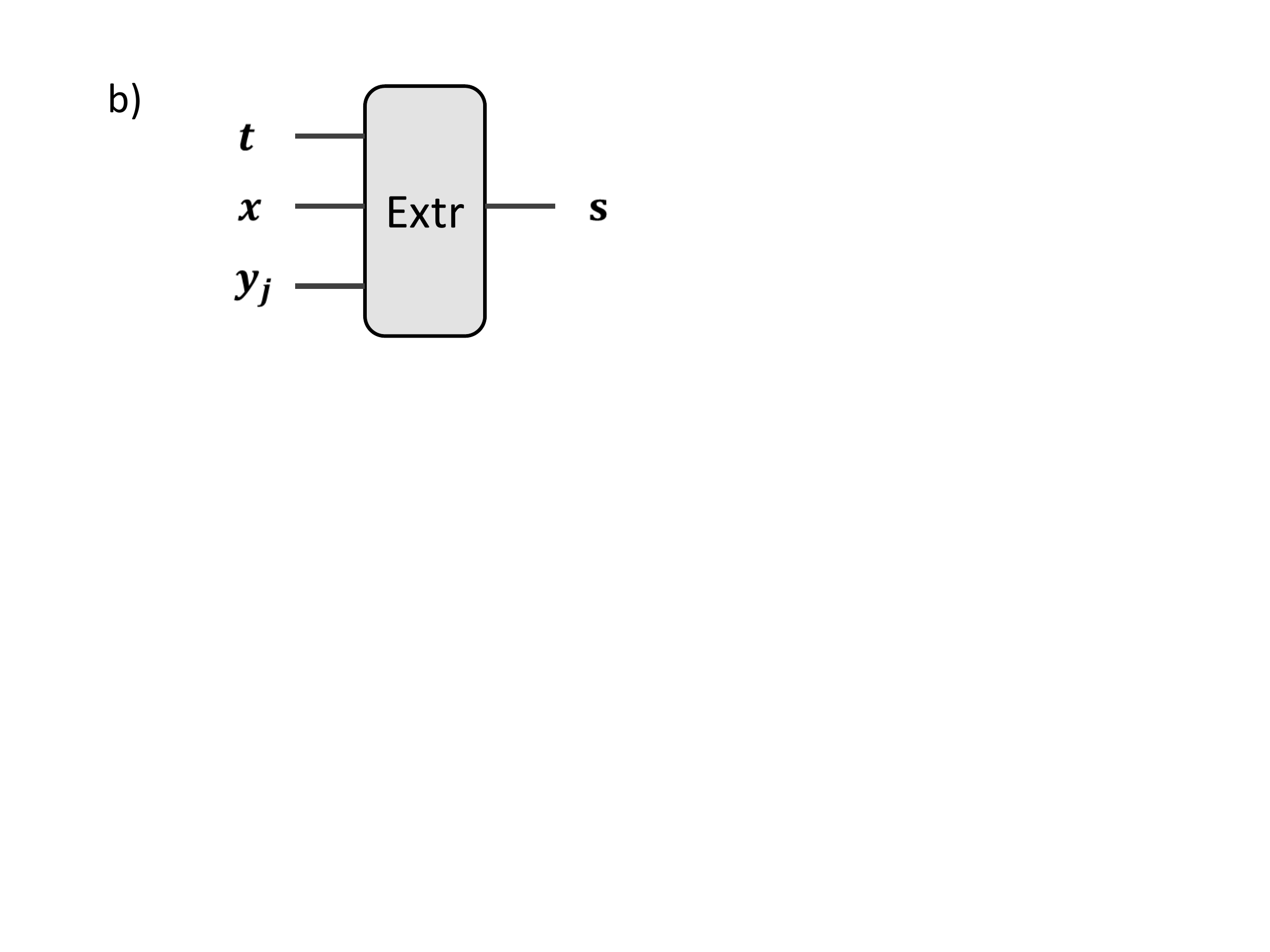}
\caption{Illustration of the protocol for randomness amplification from two devices (of four no-signaling parts each) 
 with one block of $n$ runs from the first device and $N$ blocks of $n$ runs from the second one.}
\label{fig:protocol2}
\end{center}
\end{figure}
We will use the further shorthand notation $R_j := ( y \setminus y^j, v \setminus v^j)$
and $M_j := (R_j, j)$.   
Further, we define the sets of acceptance as follows 
\begin{align} 
&\ACCd^1:=\{(x,u):L(x,u)\leq\delta\},\\ 
&\ACCd^j:=\{(y^j,v^j):L(y^j,v^j)\leq\delta\},\\
&\ACCdef:=\ACCd^1\cap\ACCd^j,
\end{align}
where $L(x,u) = \frac1n \sum_{i=1}^n B_i(x_i,u_i)$ and $L(y^j,v^j)=\frac1n\sum_{i=1}^n B_i(y^j_i, v^j_i)$.  
We also define the $u$-cut of the set $\ACCd^1$ as $\ACCd^1_{u} := \{x : (x, u) \in \ACCd^1\}$ and the $v^j$-cut of the set $\ACCd^j$ as $\ACCd^j_{v^j} := \{y^j: (y^j, v^j) \in \ACCd^j\}$. 

To quantify the quality of the output we will use the universally composable distance defined in  \eqref{eq:true_dc},
which is of the same form as the distance \eqref{eq:dcintro_I} used in  Protocol I 
\tgr{\begin{equation}
\label{eq:dcintroII}
\dcintroII :=  \sum_{s,e} \max_w \sum_{z} \left| p_w(s,z,e|\ACCdef) - \frac{1}{|S|} p_w(z,e|\ACCdef) \right|,
\end{equation}}
with $s$ given by \eqref{eq:s_protocol2}. The probability distribution $p_w(s,z,e|\ACCdef)$ is computed from the 
probability distributions $p_w(x,y,z,u,v,t,j,e)$ given by Eq. \eqref{eq:p_w-prot2}. Actually, in the proofs we will 
deal with slightly modified distance 

\begin{equation}
\dcoursII :=  \sum_{s,e} \max_w \sum_{z,u,v^j,M_j} \left| p_w(s,z,u,v^j,M_j,e|\ACCdef) - \frac{1}{|S|} p_w(z,u,v^j,M_j,e|\ACCdef) \right|,
\end{equation}
By triangle inequality, we have 
\be
\label{eq:dintro_doursII}
\dcintroII\leq \dcoursII
\ee
and hence it is enough to bound $\dcoursII$.

As in Protocol I, we now define the distance quantities $\dwithw$ and $\dnow$ to see that we can work with a~probability distribution without $w$. Let us first define the analogous quantity $\dwithwII$ for Protocol II
\tgr{\be
\label{eq:dist-w-prot2}
\dwithwII := \sum_{e} p(e|\ACCdef) \max_w \sum_{z,u,v^j, M_j} p_w(z,u,v^j,M_j|e,\ACCdef)  \sum_s \left| p_w(s|z,u,v^j,M_j,e,\ACCdef) - \frac{1}{|S|}\right|
\ee
where $p_w(s,z,u,v^j,M_j,e|\ACCdef)$ is computed from any family of probability distributions $\{p_w(x,y,z,u,v,t,j,e)\}$ satisfying 
\ben
\label{eq:p-cond-prot2_1}
&&p_w(x,y,u,v)=p(x,y,u,v) \quad \text{(by Eq. (\ref{eq:p-cond-prot2_1}) and A1-(II))}, \\
\label{eq:p-cond-prot2_2}
&& p_w (u,v,t,j,e)=p(u,v,t,j,e)  \quad \text{(follows from A1-(II))}, \\
\label{eq:p-cond-prot2_3} 
&& \forall_w\; p_w(x,y,z|u,v,t,j,e)= p_w(x,y,z|u,v)  \quad \text{(follows directly from A2-(II))}, \\
\label{eq:p-cond-prot2_4}
&& \forall_w\; p_w(x,y,z|u,v,t,j,e)= p_w(x,y,z|u,v,j,e)  \quad \text{(by A2-(II) and Rem. \ref{rem:implication})}, \\
\label{eq:p-cond-prot2_5}
&&p_w(x|z,u,v,t,j,e)= p_w(x|z,u,t,j,e)  \quad \text{(by A2-(II) and Eq. (\ref{eq:NS_II3_more}))}, \\
\label{eq:p-cond-prot2_6}
&&p_w(y|z,u,v,t,j,e)= p_w(y|z,v,t,j,e)  \quad \text{(by A2-(II) and Eq. (\ref{eq:NS_II4_more}))}, \\
&& p_w(x|z,u,v,t,j,e) = p_{w,z,v,t,j,e}(x|u)  \; \text{and} \; p_w(y|z,u,v,t,j,e) = p_{w,z,u,t,j,e}(y|v)  \nonumber\\
\label{eq:p-cond-prot2_7}
&&\qquad\qquad\qquad\qquad\qquad
\text{are time-ordered no-signaling (tons) boxes by (\ref{eq:NS_IItons}) and (\ref{eq:NS_IItons'})},\\
&& p_w(u|z,e),\: p_w(v|z,u,e),\:p_w(j|z,u,v,e) \; \text{and} \; p_w(t|z,u,v,j,e) \; \text{are SV sources (by A2-(II), } \nonumber\\
\label{eq:p-cond-prot2_8}
&& \text{Rem. (\ref{rem:implication}) and SV conditions - the proof goes in the same manner as in Rem. \ref{SV_proof})}.
\een}
For each $e$, let $w_e$ denote Eve's input $w$ and let $p_{w_{e}}(x,y,z,u,v,t,j|e)$ denote the corresponding distribution that achieves the maximum in Eq. (\ref{eq:dist-w-prot2}). Using the fact that $p_{w}(e) = p(e)$ and that $\ACCdef$ is a set of $(x,y,u,v)$ which obey $p_{w}(x,y,u,v) = p(x,y,u,v)$ (from Eq. (\ref{eq:p-cond-prot2_1})), we see that the distribution that achieves the maximum takes the form $p(e) p_{w_{e}}(x,y,z,u,v,t,j|e)$. 
We now set 
\be
\label{eq:qvspwII}
q(x,y,z,u,v,t,j,e) := p(e) p_{w_{e}}(x,y,z,u,v,t,j|e).
\ee
It can be readily seen that this $q(x,y,z,u,v,t,j,e)$ obeys the restrictions: 
\tgr{\ben
\label{eq:q-cond1-prot2}
&&q(x,y,z|u,v,t,j,e)= q(x,y,z|u,v),\\
\label{eq:q-cond2-prot2}
&&q(x,y,z|u,v,t,j,e)= q(x,y,z|u,v,j,e),\\
\label{eq:q-cond3-prot2}
&&q(x|z,u,v,t,j,e)=q(x|z,u,t,j,e) \; \text{and} \; q(y|z,u,v,t,j,e) = q(y|z,v,t,j,e),\\
\label{eq:q-cond4-prot2}
&&q(x|z,u,v,t,j,e)=q_{z,v,t,j,e}(x|u) \; \text{and} \; q(y|z,u,v,t,j,e) = q_{z,u,t,j,e}(y|v) \; \; \; \text{are tons boxes},\\
\label{eq:q-cond5-prot2}
&&q(u|z,e), q(v|z,u,e), q(j|z,u,v,e) \; \text{and} \; q(t|z,u,v,j,e) \; \; \; \text{are SV sources}.
\een}
We can therefore define 
\be
\label{eq:dist-no-w-prot2}
\dnowII :=  \sum_{z,u,v^j,M_j,e} q(z,u,v^j,M_j,e|\ACCdef) \sum_s \left| q(s|z,u,v^j, M_j,e,\ACCdef) - \frac{1}{|S|}\right|. 
\ee
and observe that $\dnowII = \dwithwII$. 

As for protocol I, the distance quantity $\dcintroII$ is bounded as in the following proposition.
\begin{prop}
\label{lem:dc-dmax-prot2}
For any distribution $p_w(x,y,z,u,v,t,j,e)$ given by (\ref{eq:p_w-prot2}) we have
\be
\dcintroII \leq |S| \dnowII.
\ee
\end{prop}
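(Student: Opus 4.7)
The plan is to mirror the proof of Proposition \ref{lem:dc-dmax} essentially verbatim, with the conditioning tuple $(z,u)$ appearing there replaced throughout by the enlarged tuple $(z,u,v^j,M_j)$. First, by \eqref{eq:dintro_doursII} it suffices to upper bound $\dcoursII$. Writing the absolute value using $p_w(s,z,u,v^j,M_j,e|\ACCdef) = p_w(z,u,v^j,M_j,e|\ACCdef)\, p_w(s|z,u,v^j,M_j,e,\ACCdef)$ gives
\begin{equation*}
\dcoursII = \sum_{s,e} \max_w \sum_{z,u,v^j,M_j} p_w(z,u,v^j,M_j,e|\ACCdef)\left|p_w(s|z,u,v^j,M_j,e,\ACCdef)-\frac{1}{|S|}\right|.
\end{equation*}
I would then factor $p_w(z,u,v^j,M_j,e|\ACCdef) = p_w(e|\ACCdef)\, p_w(z,u,v^j,M_j|e,\ACCdef)$ and pull $p_w(e|\ACCdef)$ outside the $\max_w$. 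This rearrangement is only valid because $p_w(e|\ACCdef)$ is in fact independent of $w$, which is the single nontrivial ingredient in the proof.

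To establish $p_w(e|\ACCdef) = p(e|\ACCdef)$, I would combine A2-(II), A1-(II), and the no-signaling relation \eqref{eq:NS_II1}: together they give $p_w(x,y,u,v,j,e) = p(x,y|u,v)\,p(u,v,j,e)$, which is manifestly $w$-independent. Since $\ACCdef$ is a function of $(x,u,y^j,v^j,j)$ only, this implies $p_w(\ACCdef,e) = p(\ACCdef,e)$; combined with property \eqref{eq:p-cond-prot2_1} (which yields $p_w(\ACCdef) = p(\ACCdef)$), one obtains the desired identity.

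Finally, for each fixed $(s,e)$ I would use the crude bound $|p_w(s|\cdot)-1/|S|| \leq \sum_{s'}|p_w(s'|\cdot)-1/|S||$, which makes the summand inside the $\max_w$ independent of $s$; exchanging $\sum_s$ with $\max_w$ produces an overall factor of $|S|$. The resulting expression is precisely $|S|\,\dwithwII$, and by the equality $\dnowII = \dwithwII$ noted immediately after \eqref{eq:dist-no-w-prot2}, one concludes $\dcintroII \leq |S|\,\dnowII$. The main (really the only) obstacle is verifying the $w$-independence of $p_w(e|\ACCdef)$; everything else is a routine repetition of the Protocol I calculation.
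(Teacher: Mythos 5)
Your proposal is correct and follows exactly the route the paper takes: the paper literally states ``the proof is analogous to that of Lemma \ref{lem:dc-dmax} with the substitution $u \to (u, v^j, M_j)$,'' and your detailed steps (bound $\dcintroII$ by $\dcoursII$, factor out $p(e|\ACCdef)$ from the $\max_w$, bound the single $s$-term by the full $\sum_{s'}$, pull out $|S|$, and identify the result as $|S|\,\dwithwII = |S|\,\dnowII$) reproduce the Protocol I calculation verbatim under that substitution. Your explicit verification that $p_w(e|\ACCdef)$ is $w$-independent --- via $p_w(x,y,u,v,j,e) = p(x,y|u,v)\,p(u,v,j,e)$ from A1-(II), A2-(II), and Eq.~\eqref{eq:NS_II1}, together with the fact that $\ACCdef$ depends only on $(x,u,y^j,v^j,j)$ --- is the step the paper leaves implicit, and it is correct.
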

\begin{proof}
The proof is analogous to that of Lemma \ref{lem:dc-dmax} with the substitution $u \to (u, v^j M_j)$. 
\end{proof}
We therefore see that we can effectively work with the distribution $q(x,y,z,u,v,t,j,e)$ and $\dnowII$. 
We now define the set $\Adef$ for which  Azuma estimation works, as follows 
\ben
&&\Adefo = \{(z,u,v^j,M_j,e): Pr_{\sim q(x|z,u,v^j,M_j,e)}\left(\Lazumadef(x,z,u,M_j,e)\leq L(x,u)+\delta_{Az} \right) \geq 
1-\ep_{Az}\}, \nonumber\\
&&\Adefj = \{(z,u,v^j,M_j,e): Pr_{\sim q(y^j|z,u,v^j,M_j,e)}\left(\Lazumadef(y^j,z,v^j,M_j,e)\leq L(y^j,v^j)+\delta_{Az} \right) \geq 
1-\ep_{Az}\}, \nonumber\\
&&\Adef=\Adefo\cap\Adefj,
\een
where $\ep_{Az}:= 2e^{-\frac{1}{4}\delta_{Az}^2 n}$ and $\Lazumadef$ is of the form
\be 
\Lazumadef(x,z,u,M_j,e) 
:= \frac1n\sum_{i=1}^n \E_{\sim q(x_i,u_i|x_{<i},z,u_{<i},M_j,e)} B_i(x_i, u_i), 
\ee
\be
\Lazumadef(y^j,z,v^j,M_j,e) 
:= \frac1n\sum_{i=1}^n \E_{\sim q(y^j_i,v^j_i|z,y^j_{<i},v_{<i}^j,M_j,e)}B_i(y^j_i,v^j_i),
\ee
where $u_i$ and $v^j_{i}$ are distributed according to the measure $\nu$ from an SV source. 
Further, we define
\be
\label{eq:Sxidef}
\setadefnc= \{(z,u,v^j,M_j,e): Pr_{\sim q(x,y^j | z, u, v^j, M_j, e)} (\ACCdef) \geq \xi \}.
\ee
and
\be
\label{eq:Ddef}
\Ddef:=\{(z,u,v^j,M_j,e): \|q(x,y^j|z,u,v^j,M_j,e)- q(x| z,u,v^j,M_j,e)\otimes q(y^j|z,u,v^j,M_j,e)\|\leq \epdef \}.
\ee

We note that for any fixed $M_j$, the boxes corresponding to both blocks $q(x|z,u, M_j,e)$ and $q(y^j|z,v^j, M_j, e)$ are valid time ordered no-signaling boxes, as it was in the single device scenario.
Indeed, for the first quantity we have
\be
q(x|z,u,M_j,e)=q(x|z,y\setminus y^{j},u,v\setminus v^j,j,e)
= q(x|z,y^{<j},u,v\setminus v^j, j,e),
\ee
where we use $y^k=0$ for $k> j$ and in this distribution the inputs and outputs of the second device (that does not signal to the first) $v \setminus v^j$ and $y^{<j}$ are just labels for the distribution.  

Regarding the second device, we have   
\be
q(y^j|z,v^j,M_j,e)=q(y^j|y\setminus y^{j},z,v^j,v \setminus v^j, j, e)
=q(y^j| y^{<j},z,v^j, v\setminus v^j,j,e),
\ee
where we used the fact that $y^k = 0$ for $k > j$. Due to the assumption that, conditioned on the past, 
the box is still time-ordered no-signaling and because $v^{>j}$ are just random variables from SV source,
we have again that the latter distribution is a time-ordered no-signaling box 
and $v \setminus v^{j}$, $y^{<j}$ and $j,z ,e$ are just labels. 
We now observe the analogue of Lemma \ref{lem:P(x|u)}.
\begin{prop}
\label{prop:max_nc_def}
Consider the measure $q(x,y^j,z,u,v^j,M_j,e)$ satisfying conditions (\ref{eq:q-cond1-prot2})-(\ref{eq:q-cond5-prot2}). 
Let $\delta, \delta_{Az} > 0$ be constants and let $(z,u,v^j,M_j,e)\in A^{\delta_{Az}}$. 
Then, for arbitrary $x \in \ACCd^1_u$, we have
\be
q(x|z,u,M_j,e) \leq \max \{\gamma^{\mu n},\epazdef\},
\ee
and for arbitrary $y^j \in \ACCd^{j}_{v^j}$,
\be 
q(y^j|z,v^j,M_j,e) \leq \max \{\gamma^{\mu n},\epazdef \},
\ee
where
\begin{align}
\mu:=1-\sqrt{\delta+\delta_{Az}},\qquad 
\gamma = \frac13 \left(1+2\frac{\sqrt{\delta+\delta_{Az}}}{{(\frac12-\epsilon)^4}}\right).
\end{align}
\end{prop}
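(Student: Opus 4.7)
The plan is to establish this proposition as a direct two-device analogue of Lemma \ref{lem:P(x|u)}, applied separately to the first device (yielding the bound on $q(x|z,u,M_j,e)$) and to the $j$-th block of the second device (yielding the bound on $q(y^j|z,v^j,M_j,e)$). By the symmetry of the two cases, I will only describe the argument for the first bound; the second follows by exchanging roles $x \leftrightarrow y^j$, $u \leftrightarrow v^j$ and swapping which device component is treated as a ``label''. The key preliminary observation, already made just before the proposition, is that for fixed $M_j$ the conditional box $q(x\mid z,u,M_j,e)$ is a bona fide time-ordered no-signaling box in the variables $(x,u)$, with everything else playing the role of a label. This is what allows us to reuse all the single-device machinery.

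First I would fix $(z,u,v^j,M_j,e)\in A^{\delta_{Az},1}$ and introduce the partition of the $x$-space into
\begin{align*}
X_{\text{good}} &= \{x: |L(x,u) - \overline{L}(x,z,u,M_j,e)| \leq \delta_{Az}\}, \\
X_{\text{bad}} &= (X_{\text{good}})^c.
\end{align*}
By the definition of $A^{\delta_{Az},1}$, one has $q(X_{\text{bad}}\mid z,u,M_j,e)\leq \epsilon_{Az}$, and hence for every $x\in X_{\text{bad}}$ we immediately get $q(x\mid z,u,M_j,e)\leq \epsilon_{Az}$, which is the first branch of the claimed maximum.

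Next I would handle $x\in X_{\text{good}}\cap \ACCd^1_u$. Acceptance gives $L(x,u)\leq \delta$, so combined with goodness we obtain $\overline{L}(x,z,u,M_j,e)\leq \delta+\delta_{Az}$. Applying Lemma \ref{lem:linear_fraction} to the sequence $\overline{B}_i=\EE_{\sim q(x_i,u_i|x_{<i},z,u_{<i},M_j,e)}B_i(x_i,u_i)$, there is a set of at least $\mu n$ indices $i$ with $\overline{B}_i\leq \sqrt{\delta+\delta_{Az}}$. At each such position $\overline{B}_i$ coincides with the ``SV Bell value'' $\overline{B}^{SV}$ for the conditional single-box $q_{x_{<i},z,u_{<i},M_j,e}(x_i\mid u_i)$, provided $q(u_i\mid x_{<i},z,u_{<i},M_j,e)$ is still an $\varepsilon$-SV distribution over the inputs; this is exactly what condition \eqref{eq:q-cond5-prot2} on $q(u\mid z,e)$ (together with the inclusion of $v^j, M_j$ as extra conditioning that does not spoil the SV property, by A2-(II) and the type of argument in Remark \ref{SV_proof}) guarantees. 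Invoking Lemma \ref{lem:B_SV_rand} then yields $q_{x_{<i},z,u_{<i},M_j,e}(x_i\mid u_i)\leq \gamma$ for any $(x_i,u_i)$ at those positions.

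Finally, applying Lemma \ref{lem:min-entropy-0} to the resulting $\mu n$ positions bounds the full conditional product $q(x\mid z,u,M_j,e)\leq \gamma^{\mu n}$ for every $x\in X_{\text{good}}\cap \ACCd^1_u$. Combining the two branches gives $q(x\mid z,u,M_j,e)\leq \max\{\gamma^{\mu n},\epsilon_{Az}\}$ for all $x\in \ACCd^1_u$, as required. Repeating the entire argument with $(x,u)$ replaced by $(y^j,v^j)$, using $(z,u,v^j,M_j,e)\in \Adefj$, the no-signaling structure of $q(y^j\mid z,v^j,M_j,e)$, and the SV property of $q(v^j\mid \cdots)$ derived from \eqref{eq:q-cond5-prot2}, gives the matching bound for the second device. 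The main obstacle I anticipate is bookkeeping: carefully verifying that the extra labels $M_j, v^j$ on the first-device side (and $u, M_j$ on the second-device side) do not break either (i) the time-ordered no-signaling property of the conditional $x$- (resp. $y^j$-) box or (ii) the $\varepsilon$-SV nature of the input distributions appearing in $\overline{B}_i$; both follow from the assumptions collected in \eqref{eq:q-cond1-prot2}--\eqref{eq:q-cond5-prot2}, but require the same kind of short computation as in Remark \ref{SV_proof}.
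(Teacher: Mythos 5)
Your proposal is correct and follows essentially the same route as the paper, which simply says the proof is analogous to that of Lemma~\ref{lem:P(x|u)} under the substitution $u \to (u, v^j, M_j)$, using the no-signaling identities $q(x|z,u,v^j,M_j,e)=q(x|z,u,M_j,e)$ and $q(y^j|z,u,v^j,M_j,e)=q(y^j|z,v^j,M_j,e)$ together with the time-ordered no-signaling and SV-preservation facts established just before the proposition. You have unpacked the same good/bad split, Lemma~\ref{lem:linear_fraction}, Lemma~\ref{lem:B_SV_rand}, and Lemma~\ref{lem:min-entropy-0} chain that the single-device Lemma~\ref{lem:P(x|u)} uses, and correctly identified the bookkeeping point (that the extra labels $M_j$, $v^j$ do not spoil either the tons structure or the SV property) as the only thing that needs separate verification.
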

\begin{proof}
The proof is analogous to that of Lemma \ref{lem:P(x|u)} with the direct substitution $u \rightarrow (u, v^j, M_j)$
and noting that, \tgr{by no-signaling (i.e. Eq. (\ref{eq:q-cond3-prot2})) and Rem.  \ref{rem:implication}}, we have $q(y^j|z,u,v^j, M_j,e) = q(y^j|z,v^j, M_j,e)$ and $q(x|z,u,v^j,M_j,e) = q(x|z,u,M_j,e)$. 
\end{proof}

We now also have the analogue of Proposition \ref{prop:max_x_small}.
\begin{prop}
\label{prop:maxx_nc_deF}
Fix arbitrary $\delta, \delta_{Az}>0$ and consider the measure $q(x,y^j,z,u,v^j,M_j,e)$ that satisfies conditions (\ref{eq:q-cond1-prot2})-(\ref{eq:q-cond5-prot2}). We have 
\be
\label{upp-bound-x-nc}
Pr_{\sim q(z,u,v^j,M_j,e|\ACCd^1)}\left(\max_{x} q(x|z,u,v^j,M_j,e,\ACCd^1) 
\leq \sqrt{\frac{\delta_1}{q(\ACCd^1)}} \right) \geq 1- \sqrt{\frac{\delta_1}{q(\ACCd^1)}},
\ee
\be
\label{upp-bound-yj-nc}
Pr_{\sim q(z,u,v^j,M_j,e|\ACCd^j)}\left(\max_{y^j} q(y^j|z,u,v^j,M_j, e,\ACCd^j) 
\leq \sqrt{\frac{\delta_1}{q(\ACCd^j)}} \right) \geq 1- \sqrt{\frac{\delta_1}{q(\ACCd^j)}},
\ee
where 
\be
q(\ACCd^1)=Pr_{\sim q(x,z,u,v^j,M_j,e)} (\ACCd^1),\quad q(\ACCd^j)=Pr_{\sim q(y^j,z,u,v^j,M_j,e)} (\ACCd^j)
\ee
and 
\be
\delta_1 = \gamma^{\mu n} +2\ep_{Az}.
\ee
\end{prop}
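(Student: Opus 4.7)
The plan is to mirror the proof of Proposition~\ref{prop:max_x_small} for Protocol~I, using Proposition~\ref{prop:max_nc_def} in the role played there by Lemma~\ref{lem:P(x|u)}, and applying the Azuma estimate of Lemma~\ref{lemmaazuma} twice (once for each device). By the symmetry of the two bounds \eqref{upp-bound-x-nc} and \eqref{upp-bound-yj-nc}, it suffices to argue the first one; the second follows by swapping the roles of the first and the chosen block of the second device.

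First I would pass from conditional distributions to unconditional ones. Writing
\begin{equation}
\max_x q(x|z,u,v^j,M_j,e,\ACCd^1)=\frac{\max_{x\in\ACCd^1_u} q(x|z,u,v^j,M_j,e)}{q(\ACCd^1|z,u,v^j,M_j,e)},
\end{equation}
and noting that, by the no-signaling condition \eqref{eq:q-cond3-prot2} (combined with Remark~\ref{rem:implication}), one has $q(x|z,u,v^j,M_j,e)=q(x|z,u,M_j,e)$, the numerator is controlled by Proposition~\ref{prop:max_nc_def}: for $(z,u,v^j,M_j,e)\in\Adefo$ and $x\in\ACCd^1_u$ it is at most $\max\{\gamma^{\mu n},\epazdef\}$.

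Next I would estimate the expectation of this maximum, splitting the sum according to whether $(z,u,v^j,M_j,e)\in\Adefo$ or not. On the good set, the above bound gives
\begin{equation}
\sum_{(z,u,v^j,M_j,e)\in\Adefo} q(z,u,v^j,M_j,e|\ACCd^1)\,\frac{\max\{\gamma^{\mu n},\epazdef\}}{q(\ACCd^1|z,u,v^j,M_j,e)}\leq \frac{\max\{\gamma^{\mu n},\epazdef\}}{q(\ACCd^1)},
\end{equation}
after using $q(z,u,v^j,M_j,e|\ACCd^1)\,q(\ACCd^1|z,u,v^j,M_j,e)=q(z,u,v^j,M_j,e)/q(\ACCd^1)$ and summing. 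For the complement, the trivial bound $\max_x q(x|\cdot,\ACCd^1)\leq 1$ reduces matters to controlling $q(\Adefo^c|\ACCd^1)$; here Lemma~\ref{lemmaazuma}, applied with $W_0=(z,v^j,M_j,e)$, $W_i=(x_i,u_i)$ for $i\geq 1$, and $B_i$ the Bell indicator \eqref{eq:Bell_indicator}, yields $q(\Adefo)\geq 1-\epazdef$, giving the complementary contribution $\epazdef/q(\ACCd^1)$. Combining,
\begin{equation}
\sum_{z,u,v^j,M_j,e} q(z,u,v^j,M_j,e|\ACCd^1)\,\max_x q(x|z,u,v^j,M_j,e,\ACCd^1)\leq \frac{\gamma^{\mu n}+2\epazdef}{q(\ACCd^1)}=\frac{\delta_1}{q(\ACCd^1)}.
\end{equation}

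Finally, Markov's inequality applied to the non-negative random variable $\max_x q(x|z,u,v^j,M_j,e,\ACCd^1)$ under the law $q(z,u,v^j,M_j,e|\ACCd^1)$ with threshold $\sqrt{\delta_1/q(\ACCd^1)}$ yields \eqref{upp-bound-x-nc}. The same argument with $\Adefj$ and the device-2 variables in place of $\Adefo$ and the device-1 variables produces \eqref{upp-bound-yj-nc}. I expect the main technical care to go into step one: verifying that the conditioning bookkeeping really allows us to invoke Proposition~\ref{prop:max_nc_def} as stated — in particular, that conditioning on $v^j$ can be dropped on the left-hand side via no-signaling, and that the Azuma martingale of Lemma~\ref{lemmaazuma} remains valid when the conditioning sigma-algebra is enlarged to include $(v^j,M_j)$, which it does since these are functions of the initial $W_0$.
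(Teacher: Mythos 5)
Your proposal is correct and follows essentially the same route as the paper, which simply declares the result "analogous to Prop.~\ref{prop:max_x_small}" under the substitution $u \to (u,v^jM_j)$, $\ACCd\to\ACCd^1$ (resp. $\ACCd^j$), with Lemma~\ref{lemmaazuma} applied with $W_0=(z,v^j,M_j,e)$, $W_i=(x_i,u_i)$ (resp. $W_0=(z,u,M_j,e)$, $W_i=(v^j_i,y^j_i)$) — exactly the instantiations you specify. You have additionally spelled out the needed no-signaling reduction $q(x|z,u,v^j,M_j,e)=q(x|z,u,M_j,e)$ that licenses invoking Proposition~\ref{prop:max_nc_def}, which the paper leaves implicit at this point but records in the proof of that proposition.
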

\begin{proof}
The proof is analogous to that of Prop. \ref{prop:max_x_small} if we substitute $u \to (u,v^j M_j)$ and $\ACCd \to \ACCd^1$ for Eq.(\ref{upp-bound-x-nc}) and $\ACCd \to \ACCd^j$ for Eq.(\ref{upp-bound-yj-nc}).
\tgr{We also  use   Lemma \ref{lemmaazuma}  with  $W_i=(u_i,x_i)$ for $i\geq1$ and  $W_0= (z,v^j, M_j, e)$  for Eq.(\ref{upp-bound-x-nc}), and 
with  $W_i=(v^j_i,y^j_i)$ for $i\geq1$ and  $W_0= (z,u, M_j, e)$, for  Eq.(\ref{upp-bound-yj-nc}).}
\end{proof}

Here we shall prove a proposition that has no analogue in the security proof for Protocol I. 
It  says that, if the original distribution is close to product (due to our deFinetti-type result), 
then also the distribution conditioned upon acceptance will be close to product of distributions.
\begin{prop}
\label{prop:def_nc}
For arbitrary $\epdef > 0$,  $\xi\geq 2 \epdef$ and $M\equiv(z,u,v^j,M_j,e)\in \setadefnc$ suppose
\be
\label{eq:deFinetti-1}
\biggl\|q(x, y^j|M)-q(x|M)\ot q(y^j|M)\biggr\|\leq \epdef.
\ee 
Then we have 
\be
\biggl\|q(x,y^j|M,\ACCdef)-q(x|M,\ACCd^1)\ot q(y^j|M,\ACCd^j)\biggr\|\leq \frac{3 \epdef}{\xi^2}.
\label{eq:acc_prod}
\ee
\end{prop}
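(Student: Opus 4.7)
The plan is to reduce the statement to an elementary fact about perturbing a product distribution on $(x,y^j)$ and then conditioning on an event that factorizes. The key observation is that, once $M=(z,u,v^j,M_j,e)$ is fixed, the acceptance set $\ACCd^1_u$ depends only on $x$ and the set $\ACCd^j_{v^j}$ depends only on $y^j$, so $\ACCdef$ is a rectangle in $(x,y^j)$-space. Abbreviating $P(x,y^j)=q(x,y^j|M)$, $P_1(x)=q(x|M)$, $P_2(y^j)=q(y^j|M)$, and denoting the indicator of $\ACCdef$ by $\mathbf{1}_{\ACCdef}(x,y^j)=A(x)B(y^j)$, the hypothesis \eqref{eq:deFinetti-1} reads $\|P-P_1\otimes P_2\|\leq \epdef$, and $P(\ACCdef)\geq\xi$ by the assumption $M\in\setadefnc$.

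First I would set $\alpha=P(\ACCdef)$, $\alpha_1=P_1(\ACCd^1_u)$, $\alpha_2=P_2(\ACCd^j_{v^j})$ and note that testing the bounded function $\mathbf{1}_{\ACCdef}$ against $P-P_1\otimes P_2$ gives $|\alpha-\alpha_1\alpha_2|\leq \epdef$. Since $\alpha\geq \xi\geq 2\epdef$, this yields in particular $\alpha_1\alpha_2\geq \xi-\epdef\geq \xi/2$. Note also that the conditioned product distribution in the proposition is supported exactly on $\ACCdef$, so both distributions in \eqref{eq:acc_prod} have the same support.

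Then I would split the difference inside the norm on $\ACCdef$ using a standard telescoping identity,
\begin{equation}
\frac{P(x,y^j)}{\alpha}-\frac{P_1(x)P_2(y^j)}{\alpha_1\alpha_2}
=\frac{P(x,y^j)-P_1(x)P_2(y^j)}{\alpha}+P_1(x)P_2(y^j)\left(\frac{1}{\alpha}-\frac{1}{\alpha_1\alpha_2}\right),
\end{equation}
and take absolute values and sum. The first piece contributes at most $\|P-P_1\otimes P_2\|/\alpha\leq \epdef/\alpha$, while the second contributes $\alpha_1\alpha_2\cdot|\alpha_1\alpha_2-\alpha|/(\alpha\,\alpha_1\alpha_2)\leq \epdef/\alpha$. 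Combining and using $\alpha\geq\xi$ gives a total bound of $2\epdef/\xi$, which (since $\xi\leq 1$) is dominated by the stated $3\epdef/\xi^2$.

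I do not anticipate a serious obstacle: the only thing that needs care is checking that conditioning on the joint event $\ACCdef$ really does factor cleanly on the marginals of $P$, which is where the rectangle structure $\ACCdef=\ACCd^1_u\times\ACCd^j_{v^j}$ (guaranteed by fixing $u$ and $v^j$ inside $M$) plays its role. The $\xi\geq 2\epdef$ hypothesis is used only to guarantee the denominator $\alpha_1\alpha_2$ is bounded away from zero, which is needed implicitly to make $P_1(\cdot|\ACCd^1)\otimes P_2(\cdot|\ACCd^j)$ well-defined.
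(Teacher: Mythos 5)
Your proof is correct, and it follows the same overall strategy as the paper's: use the de Finetti hypothesis to control the difference $|q(\ACCdef|M)-q(\ACCd^1|M)\,q(\ACCd^j|M)|$, then split the target norm into a distribution-difference term and a normalization-error term. The execution differs in a way worth noting. The paper's proof first brackets $\alpha_1\alpha_2$ by $q(\ACCdef|M)(1\pm\kappa)$ with $\kappa=\epdef/q(\ACCdef|M)$, takes a maximum over the two signs, and then sandwiches via triangle inequality; this requires $\kappa\leq 1/2$ (hence $\xi\geq 2\epdef$) to bound $|\kappa/(1-\kappa)|\leq 2\kappa$, and it yields the estimate $\epdef/\alpha+2\epdef/\alpha^2\leq 3\epdef/\xi^2$. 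Your telescoping identity works directly with $\alpha_1\alpha_2$ and needs no such bracketing: both pieces are bounded by $\epdef/\alpha$, giving the sharper $2\epdef/\alpha\leq 2\epdef/\xi$, which the stated $3\epdef/\xi^2$ dominates since $\xi\leq 1$. You are also right that in your version the role of $\xi\geq 2\epdef$ is reduced to guaranteeing $\alpha_1\alpha_2>0$ so the conditioned product distribution is well-defined. One small thing to make explicit if you write this up: the observation that $q(x|M,\ACCd^1)\otimes q(y^j|M,\ACCd^j)$ is $P_1(x)P_2(y^j)\mathbf{1}_{\ACCdef}(x,y^j)/(\alpha_1\alpha_2)$ relies on the rectangle structure $\ACCdef=\ACCd^1_u\times\ACCd^j_{v^j}$ once $u,v^j$ are fixed inside $M$, which you correctly flag as the crucial structural fact.
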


\begin{proof}
Since $\ACCdef=\ACCd^1 \cap \ACCd^j$, Eq. (\ref{eq:deFinetti-1})  implies 
\be
\left|q(\ACCdef|M)-q(\ACCd^1|M)q(\ACCd^j|M)\right|\leq \epdef.
\ee
Hence 
\ben
&&\biggl\|q(x, y^j|M,\ACCdef)-q(x|M,\ACCd^1) \ot q(y^j|M,\ACCd^j)\biggr\|   \nonumber \\
&&\leq\max_{\pm} \left\| \frac{q(x, y^j,\ACCdef|M)}{q(\ACCdef|M)} -\frac{q(x,\ACCd^1|M)\ot q(y^j,\ACCd^j|M)}{q(\ACCdef|M)(1\pm\kappa)}  \right\|
\een
where $\kappa= \frac{\epdef}{q(\ACCdef|M)}$. For $M \equiv(z,u,v^j,M_j,e)\in \setadefnc$, we have by definition of $\setadefnc$ in Eq. (\ref{eq:Sxidef}) that $q(\ACCdef|M) \geq \xi$, so that $\kappa \leq \frac{\epdef}{\xi} \leq \frac{1}{2}$ for $\xi \geq 2 \epdef$.  
Sandwiching the above with $\frac{q(x,\ACCd^1|M)\ot q(y^j,\ACCd^j|M)}{q(\ACCdef|M)}$ and using triangle inequality we get 
\ben
\label{eq:ACC_def}
&&\biggl\|q(x, y^j|M,\ACCdef)-q(x|M,\ACCd^1)\ot q(y^j|M,\ACCd^j)\biggr\| \nonumber \\
&&\leq  \left\| \frac{q(x, y^j,\ACCdef|M)}{q(\ACCdef|M)} -\frac{q(x,\ACCd^1|M)\ot q(y^j,\ACCd^j|M)}{q(\ACCdef|M)} \right\| \nonumber \\
&& +\left\|\frac{q(x,\ACCd^1|M)\ot q(y^j,\ACCd^j|M)}{q(\ACCdef|M)} \biggl( 1 -\frac{1}{1\pm\kappa}\biggr) \right\|
\een
Now we can bound the first term on the right hand side of Eq. (\ref{eq:ACC_def}) as follows
\ben
&&\left\| \frac{q(x,y^j,\ACCdef|M)}{q(\ACCdef|M)} -\frac{q(x,\ACCd^1|M)\ot q(y^j,\ACCd^j|M)}{q(\ACCdef|M)} \right\|  \nonumber \\
&&\leq\frac{1}{q(\ACCdef|M)} \left\|	 q(x,y^j|M) - q(x|M)\ot q(y^j|M) \right\| \stackrel{\text{Eq. (\ref{eq:deFinetti-1})}}{\leq} \frac{\epdef}{q(\ACCdef|M)},
\een
since projecting onto $\ACCdef$ is a trace non-increasing channel and hence cannot increase trace norm.  
The second term on the right hand side of Eq. (\ref{eq:ACC_def}) is simply bounded as 
\be
\left\|\frac{q(x,\ACCd^1|M)\ot q(y^j,\ACCd^j|M)}{q(\ACCdef|M)} \biggl( 1 -\frac{1}{1\pm\kappa}\biggr) \right\|
\leq \frac{1}{q(\ACCdef|M)}\left| \frac{\kappa}{1-	 \kappa}\right| \leq \frac{2\kappa}{q(\ACCdef|M)}, 
\ee
where $\kappa \leq \frac{\epdef}{\xi} \leq \frac{1}{2}$. 
Inserting these into \eqref{eq:ACC_def} and using $1 \geq q(\ACCdef|M) \geq \xi$ for $M \in \setadefnc$, 
we get
\be
\biggl\|q(x,y^j|M,\ACCdef)-q(x|M,\ACCd^1)\ot q(y^j|M,\ACCd^j)\biggr\|\leq 3 \frac{\epdef}{\xi^2}.
\ee
\end{proof}

Now we will combine Prop.  \ref{prop:maxx_nc_deF} which gives that from device 1 and device 2 
we obtain probability distributions with upper bounded maximal probability and Prop. 
\ref{prop:def_nc} which says that the joint probability distribution is close to a product distribution.

\begin{thm}
\label{thm:main_protocol_II}
Suppose we are given $\epsilon>0$. Set $\delta > 0$ such that  
\be
\label{eq:gamma_delta_2}
\frac13 \left(1+2\frac{\sqrt{2\delta}}{{(\frac12-\epsilon)^4}}\right) <1
\ee (see Fig. \ref{fig:delta-epsilon} trade-off between $\delta$ and $\epsilon$).
\tgr{Then for arbitrary family of  probability distributions $p_w(x,y,z,u,v,t,j,e)$ satisfying conditions 
(\ref{eq:p-cond-prot2_1})-(\ref{eq:p-cond-prot2_8})
there exists an extractor $s(x,y^j,t)$ with $|S| = 2^{n^{c}}$ values for a constant $c>0$ (depending on $\epsilon$) such that 
\be
\dcintroII \cdot p(\ACCdef)  \leq 2^{-n^{\Omega(1)}}. 
\ee
where $\dcintroII$ is given by \eqref{eq:dcintroII}.}
\end{thm}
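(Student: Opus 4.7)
The plan is to mimic the structure of the proof of Theorem \ref{thm:main_protocol_I} while inserting a de Finetti reduction that decouples the two devices, so that the final extraction step can be carried out with Rao's explicit three-source extractor (Lemma \ref{extractors}(ii)).

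First, I would apply Proposition \ref{lem:dc-dmax-prot2} to reduce the analysis from $\dcintroII$ to $|S|\,\dnowII$, thereby eliminating Eve's no-signaling input $w$ at the price of an $|S|$ factor in the composable error. Second, I would split on whether $q(\ACCdef)$ is below or above a threshold $\eta = 2^{-n^{\Omega(1)}}$: in the small case the bound $\dnowII\cdot q(\ACCdef)\leq 2\eta$ is immediate, so I can assume $q(\ACCdef)\geq \eta$ from now on.

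The heart of the argument is to exhibit three essentially independent weak-randomness sources conditioned on $(z,u,v^{j},M_{j},e)$ and on $\ACCdef$. I would proceed as follows. (i) Apply Lemma \ref{lem:deFinetti-bound1} to the two-device box, choosing the number $N_{2}$ of blocks large enough (doubly exponential, since the SV-selected block costs a factor $N_{2}^{\log(1+2\varepsilon)}$ in the mutual information) that the right-hand side is at most some $\epdef=2^{-n^{\Omega(1)}}$. Combined with Markov's inequality, this puts most of the mass of $q(z,u,v^{j},M_{j},e)$ into the set $\Ddef$ of Eq. \eqref{eq:Ddef}. (ii) Intersect $\Ddef$ with $\Adef$ and with the set $\setadefnc$ on which $q(\ACCdef|\cdot)\geq \xi$ for some $\xi = \poly(\epdef,\eta)$; the bound $q(\ACCdef)\geq \eta$ together with Markov ensures that the complement of $\setadefnc$ contributes negligibly to $q(\cdot|\ACCdef)$. (iii) On this good set, apply Proposition \ref{prop:def_nc} to transfer the approximate product structure from $q(x,y^{j}|\cdot)$ to $q(x,y^{j}|\cdot,\ACCdef)$, and Proposition \ref{prop:maxx_nc_deF} to get $H_{\min}(q(x|\cdot,\ACCd^{1})),\,H_{\min}(q(y^{j}|\cdot,\ACCd^{j}))\geq \Omega(n)$. (iv) As in the proof of Theorem \ref{thm:main_protocol_I}, verify that conditioning on $\ACCdef$ does not spoil the SV property of $q(t|z,u,v^{j},M_{j},e)$: this follows from (\ref{eq:q-cond1-prot2}), which makes $t$ conditionally independent of $(x,y)$ given $(z,u,v,j,e)$, and hence of the acceptance event, so the SV property \eqref{eq:q-cond5-prot2} carries over and gives $H_{\min}(q(t|\cdot,\ACCdef))=\Omega(n)$.

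Finally I would feed the three (approximately) independent sources $x$, $y^{j}$ and $t$ into Rao's three-source extractor of Lemma \ref{extractors}(ii), taking one source (say $t$) as the high-min-entropy source of rate $\Omega(n)$ and the other two as polylog min-entropy sources, producing $|S|=2^{n^{c}}$ output bits with error $2^{-n^{\Omega(1)}}$. Summing the contributions from the bad sets (the complements of $\Adef$, $\setadefnc$, and $\Ddef$) and from the extractor error, and multiplying by the $|S|$ factor from Proposition \ref{lem:dc-dmax-prot2}, I obtain $\dcintroII\cdot q(\ACCdef)\leq 2^{-n^{\Omega(1)}}$, and then I conclude using $q(\ACCdef)=p(\ACCdef)$ exactly as in Remark \ref{rem:p(ACC)=p_w,q(ACC)}.

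The main obstacle I anticipate is step (iii): Propositions \ref{prop:def_nc} and \ref{prop:maxx_nc_deF} each require the conditioning set $(z,u,v^{j},M_{j},e)$ to be simultaneously in $\Adef$, $\setadefnc$ and $\Ddef$, and these three sets are defined with respect to different marginals. The bookkeeping for showing that the intersection still has $q(\cdot|\ACCdef)$-mass close to $1$ (so that the extractor is applied to genuinely good conditionings) is the technically delicate part, and it is what forces the parameter $\xi$ to be chosen carefully in terms of $\epdef$, $\ep_{Az}$ and $\eta$.
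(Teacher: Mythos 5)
Your proposal follows essentially the same route as the paper's proof: reduce $\dcintroII$ to $|S|\,\dnowII$ via Proposition \ref{lem:dc-dmax-prot2}, case on $q(\ACCdef)$ small or large, decompose the sum over $M=(z,u,v^j,M_j,e)$ according to membership in $\Ddef$ and $\setadefnc$, invoke Lemma \ref{lem:deFinetti-bound1} plus Markov to control the $\Ddef$-complement, use Propositions \ref{prop:maxx_nc_deF} and \ref{prop:def_nc} to get linear min-entropy and near-product structure conditioned on $\ACCdef$, verify via (\ref{eq:q-cond2-prot2}) that the SV property of $t$ survives conditioning, and finish with Rao's explicit three-source extractor from Lemma \ref{extractors}(ii). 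One parametric slip: the de Finetti bound decays polynomially in $N_2$ once you simplify the exponent --- $N_2^{\log(1+2\ep)}/N_2 = N_2^{\log(\frac12+\ep)}$ with $\log(\frac12+\ep)<0$ --- so a singly exponential choice $N=2^{n}$ already gives $\epdef = 2^{-\Omega(n)}$; a doubly-exponential number of blocks is not needed (and the paper uses $N=2^{n}$). Your concern about the intersection bookkeeping is fair: the paper decomposes only over $\Ddef$ and $\setadefnc$, and the mass of $M\in\setadefnc\cap\Ddef$ on which the min-entropy estimate of Proposition \ref{prop:maxx_nc_deF} fails is absorbed only implicitly into Eq.~\eqref{eq:Mz_in_Seta}; a careful account adds a term of order $\sqrt{\eta}$ (which is still $2^{-\Omega(\sqrt{n})}$ under the parameter choices $\eta=\sqrt{\gamma^{\mu n}+2\ep_{Az}}$, $\xi=\eta^2$), so the final conclusion is unaffected.
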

\tgr{\begin{remark}\label{remII} Note, that due to \eqref{eq:p-cond-prot2_1}, $p_w(\ACCdef)$ 
does not depend on $w$, hence we wrote just $p(\ACCdef)$ in the theorem. Moreover, we even have $q(\ACCdef)=p(\ACCdef)$.\end{remark}}

\begin{proof}
Set $M = (z,u, v^j,M_j, e)$ and the probability distribution $q$ given by \eqref{eq:qvspwII}. Let us consider $\dnowII$, as in Eq. (\ref{eq:dist-no-w-prot2}), 
\ben
\label{eq:ddef} 
\dnowII &=&  \sum_{M} q(M|\ACCdef) \sum_s \left| q(s|M, \ACCdef) -\frac{1}{|S|}\right| \nonumber \\
&=& \sum_{M \not\in \Ddef} q(M| \ACCdef) \sum_s \left| q(s|M, \ACCdef) - \frac{1}{|S|}\right| +  \sum_{M \in \setadefnc \cap \Ddef } q(M| \ACCdef) \sum_s \left| q(s|M, \ACCdef) - \frac{1}{|S|}\right| \nonumber \\ &&+ \sum_{M \in (\setadefnc)^c \cap \Ddef} q(M| \ACCdef) \sum_s \left| q(s|M, \ACCdef) - \frac{1}{|S|}\right| \nonumber \\
\een
We consider the three terms separately. Let us consider the first term with $M \not \in \Ddef$. By Eq. (\ref{deFinetti}) from Lemma \ref{lem:deFinetti-bound1}, we know that 
\be 
\mathbb{E}_{M \sim q(M)} \|q(x,y^j|M)- q(x|M)\otimes q(y^j|M)\| \leq \epdef^2,
\ee
where $\epdef^2=c \sqrt{n} N^{\frac12\log(\frac12+\ep)}$ and $c$ is an absolute constant. 
Applying Markov inequality to the above, we get that 
\be 
\Pr_{q(M)} \left( \|q(x,y^j|M)- q(x|M)\otimes q(y^j|M)\| \leq \epdef \right) \geq 1 - \epdef, 
\ee 
so that $q(\Ddef) \geq 1 - \epdef$. We therefore have, using $\sum_s \left| q(s|M, \ACCdef) - \frac{1}{|S|}\right| \leq 2$, that
\ben\label{eq:Mz_not_in_D}
\sum_{M \not\in \Ddef} q(M| \ACCdef) \sum_s \left| q(s|M, \ACCdef) - \frac{1}{|S|}\right| &\leq & 2 \sum_{M \not\in \Ddef} \frac{q(M, \ACCdef)}{q(\ACCdef)} \nonumber \\
& \leq & 2 \sum_{M \not \in \Ddef} \frac{q(M)}{q(\ACCdef)} \leq \frac{2 \epdef}{q(\ACCdef)}.
\een
Consider now the second term with $M \in \setadefnc \cap \Ddef$. 
Proposition \ref{prop:maxx_nc_deF}  implies  that with probability 
\be
1 - \sqrt{\frac{\gamma^{\mu n}+2\ep_{Az}}{q(\ACCd_1)}} - \sqrt{\frac{\gamma^{\mu n}+2\ep_{Az}}{q(\ACCd_2)}}
\geq 1 - 2 \sqrt{\frac{\gamma^{\mu n}+2\ep_{Az}}{q(\ACCdef)}}
\ee
we have that for $h := \frac12 \left[\log (\gamma^{\mu n} + 2 \ep_{Az}) + \log q(\ACCdef) \right]$,
\be
H_{\min}(q(x|M,\ACCd^1)) \geq h \; \; \text{and} \; \;
H_{\min}(q(y^j|M,\ACCd^j)) \geq h. 
\ee
Moreover, by condition (\ref{eq:q-cond4-prot2}) and (\ref{eq:q-cond2-prot2}) (i.e. $q(t|x,y,z,u,v,j,e) = q(t|z,u,v,j,e)$), we obtain that $q(t|M, \ACC)$ also satisfies the SV source conditions, i.e., $H_{\min}(q(t|M, \ACC)) \geq c n$ for some constant $c$ depending on $\varepsilon$. 
By Lemma \ref{extractors} part (ii), there exists an extractor that extracts $\Theta(h)$ bits $s = s(x, y^j,t)$. Let us denote the action of extractor by $\Extr$.
The extractor acts on 
\be\label{eq:q|M}
q(x, y^j,t| M, \ACCdef) = q(t|M, \ACCdef) \ot q(x, y^j |M, \ACCdef)
\ee 
producing the output $s$ with distribution $q(s|M, \ACCdef)$ and so
\be 
\Extr \left(q(x,y^j,t|M,\ACCdef) \right) \equiv q(s|M,\ACCdef).
\ee
The action of the extractor on an ideal product distribution produces an output with some distribution which we denote $\bar{q}(s|M)$
\be\label{bar_q}
\Extr \left(q(t|M, \ACCdef) \ot q(x|M,\ACCdef^1) \ot q(y^j|M,\ACCdef^j) \right) \equiv \bar{q}(s|M)
\ee
From Lemma \ref{extractors} part (ii), we know that 
\be
\label{eq:q_s}
\sum_s  \left | \bar{q}(s|M) - \frac{1}{|S|}\right| \leq 2^{-h^{\Omega(1)}},
\ee
with $|S|=2^{\Theta(h)}$. Since $M\in \Ddef$, by Prop. \ref{prop:def_nc} we have  that for $M \in  \setadefnc$ 
\be
\|q(x, y^j|M,\ACCdef)- q(x|M,\ACCd^1)\ot q(y^j|M,\ACCd^j)\|\leq \frac{3\epdef}{\xi^2},
\ee
so that, using Eq. (\ref{eq:q|M}), we have
\be 
\| q(t, x, y^j|M, \ACCdef) - q(t| M, \ACCdef) \ot q(x|M, \ACCd^1) \ot q(y^j | M, \ACCd^j) \| \leq \frac{3\epdef}{\xi^2}.
\ee
Hence, according to monotonicity of trace norm and Eq. (\ref{bar_q}) we obtain
\be
\label{eq:q_def}
\|q(s|M,\ACCdef)-\bar{q}(s|M) \| \leq \frac{3\epdef}{\xi^2}
\ee
Thus, by using \eqref{eq:q_s} and \eqref{eq:q_def} and applying triangle inequality we get 
\be
\sum_s \left| q(s|M,\ACCdef) - \frac{1}{|S|} \right|\leq 2^{-h^{\Omega(1)}} +  \frac{3\epdef}{\xi^2}
\ee
and therefore the second term in Eq. (\ref{eq:ddef}) is estimated as follows
\be
\label{eq:Mz_in_Seta}
\sum_{M \in\setadefnc \cap \Ddef} q(M|\ACCdef) \sum_s \left | q(s|M, \ACCdef) - \frac{1}{|S|} \right| \leq 2^{-h^{\Omega(1)}} +  \frac{3\epdef}{\xi^2}.
\ee
Now, we deal with the third term, i.e. with $M \in (\setadefnc)^c \cap \Ddef$. For $M \in (\setadefnc)^c$ we have that $q(\ACCdef|M) \leq \xi$. Using $\sum_s \left | q(s|M, \ACCdef) - \frac{1}{|S|} \right| \leq 2$, we obtain
\ben
\label{eq:Mz_not_seta}
\sum_{M \in (\setadefnc)^c \cap \Ddef} q(M|\ACCdef) \sum_s\left | q(s|M\ACCdef) - \frac{1}{|S|}\right| &\leq & 
2\sum_{M \in(\setadefnc)^c \cap \Ddef} q(M|\ACCdef)  \nonumber \\
&\leq & 2  \sum_{M \in (\setadefnc)^c \cap \Ddef} \frac{q(M) q(\ACCdef|M)}{q(\ACCdef)} \leq  \frac{2\xi}{q(\ACCdef)} \nonumber \\
\een
Inserting  \eqref{eq:Mz_not_in_D}, \eqref{eq:Mz_not_seta} and \eqref{eq:Mz_in_Seta} into  \eqref{eq:ddef}, we obtain 
\be
\label{eq:ddef_bound}
\dnowII \leq \frac{2 \epdef}{q(\ACCdef)} + 2^{-h^{\Omega(1)}} + \frac{3\epdef}{\xi^2} +  \frac{2\xi}{q(\ACCdef)},
\ee
where $\xi$ is an arbitrary number satisfying $\xi \geq 2 \epdef$ (as in proposition \ref{prop:def_nc}),
and $h=\frac12 \left[ \log (\gamma^{\mu n} + 2 \ep_{Az}) + \log q(\ACCdef) \right]$.
We now analyze the product $\dnowII \cdot p(\ACCdef)$. 

As in the case of Protocol I, we set $\delta_{Az} = n^{-\frac14}$ so that 
$\ep_{Az} = 2e^{-\frac{1}{4}\delta_e^2n} = 2^{-\Omega(\sqrt{n})}$. We consider only $n\geq n_0$ where $n_0$ 
is such that $\delta_e\leq \delta$ (i.e. $n_0=\lfloor\frac{1}{\delta^4}\rfloor$). Then $\mu\geq 1-\sqrt{2\delta}$ and $\gamma<1$.
Now, let $\eta=\sqrt{\gamma^{\mu n}+2 \ep_{Az}}$,
so that $\eta=2^{-\Omega(\sqrt{n})}$. Suppose first that $q(\ACCd)\leq \eta$ then since $\dnowII \leq 2$, 
we have $\dnowII \cdot q(\ACCdef)\leq 2 \eta$. Suppose now, that $q(\ACCdef)\geq \eta$. 
Then setting $\xi = \eta^2$, we get  $2^h= 2^{\Omega(\sqrt{n})}$ and, using \eqref{eq:ddef_bound}, we obtain
\be
\dnowII \leq 2^{-n^{\Omega(1)}} + \frac{5\epdef}{\eta^4} + 2 \eta.
\ee
Finally, recall that, due to Lemma \ref{lem:deFinetti-bound1}, 
$\epdef^2 = c \sqrt{n} N^{\frac12\log(\frac12+\ep)}$ where $c$ is absolute constant. Setting the number of blocks $N=2^{n}$ so that $\epdef^2 = 2^{-\Omega(n)}$,
we obtain 
\be
\dnowII \leq 2^{-n^{\Omega(1)}}.
\ee
so that 
\be
\dnowII \cdot q(\ACCdef)\leq 2^{-n^{\Omega(1)}}.
\ee
\tgr{By definition of $q$ and Rem. \ref{remII}, we have $q(\ACCd)=p(\ACCd)$ and from Prop. \ref{lem:dc-dmax-prot2} we know that $\dcintroII \leq |S| \dnowII$.
We then see that there exists some constant $c>0$ such that setting $|S| = 2^{n^{c}}$, 
we get $\dcintroII \cdot p(\ACCdef)\leq 2^{-n^{\Omega(1)}}$.} This completes the proof.
\end{proof}

\section{Conclusion and Open Questions} 

We presented a protocol for obtaining secure random bits from an arbitrarily (but not fully deterministic) Santha-Vazirani source. The protocol uses a finite number (as few as four for the Bell inequality considered here) of no-signaling devices, and works even with correlations attainable by noisy quantum mechanical resources. Moreover the correctness of the protocol is not based on quantum mechanics and only requires the no-signaling principle.

We leave the following open questions to future research:

\begin{itemize}

\item Is there an efficient protocol for device-independent randomness amplification with a constant
number of devices, and tolerating a constant rate of noise, whose correctnesses only
assume limited independence between the source and the devices?

\item Does Protocol I and II work even for a public SV source (in which the bits are drawn from the source and communicated to all the parties)?

\item Is there a bipartite Bell inequality with the property that it is algebraically violated by quantum correlations, and for all settings in the Bell expression, the probabilities of the outputs are bounded away from one? Our protocol could be applied with such a Bell expression with a significant reduction in the number of no-signaling devices required. 

\item Is there a protocol that can tolerate a higher level of noise? What if we assume the validity of quantum mechanics?

\item Can we amplify randomness with only a finite number of devices from other different types of sources? A particularly interesting case is the min-entropy source \cite{CSW14, Scarani, Plesch}. 

\item A more technical question is to improve the de Finetti theorem given in \cite{Brandao, Brandao2}. What are the limits of de Finetti type results when the subsystems are selected from a Santha-Vazirani source?

\item Finally suppose one would like to realize device-independent quantum key distribution with only an imperfect SV source as the randomness source. Is there an efficient protocol for this task tolerating a constant rate of noise and gives a constant rate of key? 
Here the question is open for both quantum-mechanical and no-signaling adversaries. 

\end{itemize}

\section{Acknowledgments}

We thank Rotem Arnon-Friedman for discussions and for pointing out an error in an earlier version of this paper. The paper is supported by ERC AdG grant QOLAPS, EC grant RAQUEL and by Foundation for Polish Science TEAM project co-financed by the EU European Regional Development Fund. FB acknowledges support from EPSRC and Polish Ministry of Science and Higher Education Grant no. IdP2011 000361. Part of this work was done in National Quantum Information Center of Gda\'{n}sk. Part of this work was done when FB, RR, KH and MH attended the program ``Mathematical Challenges in Quantum Information" at the Isaac Newton Institute for Mathematical Sciences in the University of Cambridge. Another part was done in the programme "Quantum Hamiltonian Complexity" in the Simons Institute foe the Theory of Computing. Finally MH thanks Department of Physics and Astronomy and Department of Computer Science of UCL, where part of this work was also performed, for hospitality.

\bibliographystyle{apsrev}

\end{document}